\documentclass{article}
\usepackage[utf8]{inputenc}
\usepackage{graphicx} 
\usepackage{fullpage}
\usepackage{amsmath}
\usepackage{amsthm}
\usepackage{amssymb}
\usepackage{enumerate}
\usepackage{xcolor}
\usepackage[sort,numbers]{natbib}
\usepackage{algorithm}
\usepackage{algorithmic}
\usepackage{comment}
\usepackage{enumitem}
\usepackage{hyperref}
\usepackage[normalem]{ulem}
\usepackage{authblk}

\usepackage{tikz-cd}
\newlength{\ptsize}
\setlength{\ptsize}{1.25pt}

\newtheorem{lemma}{Lemma}

\newtheorem{theorem}{Theorem}
\newtheorem{cor}{Corollary}

\newtheorem{problem}{Open Problem}

\newcommand{\rSPR}{\mathrm{rSPR}}
\newcommand{\HOP}{\mathrm{HOP}}
\newcommand{\OLA}{\mathrm{OLA}}
\newcommand{\PV}{\mathrm{P2V}}
\newcommand{\ul}{\underline}

\usepackage{enumerate}

\parskip=0.5em

\sloppy

\title{Order-Dependent Dissimilarity Measures on Phylogenetic Trees}
\author[1]{Simone Linz}
\author[2,3]{Katherine St.~John}
\author[4]{Charles Semple}
\author[5]{Kristina Wicke}
\affil[1]{School of Computer Science, University of Auckland, New Zealand}
\affil[2]{Department of Computer Science, Hunter College, City University of New York, USA}
\affil[3]{Division of Invertebrate Zoology, American Museum of Natural History, USA}
\affil[4]{School of Mathematics and Statistics, University of Canterbury, New Zealand}
\affil[5]{Department of Mathematical Sciences, New Jersey Institute of Technology, USA}

\begin{document}

\maketitle

\begin{abstract}
Ordered leaf attachment, Phylo2Vec, and HOP are three recently introduced vector representations for rooted phylogenetic trees where the representation is determined by an ordering of the underlying leaf set $X$. Comparing the vectors of two rooted phylogenetic $X$-trees $T$ and $T'$ for a fixed ordering on $X$ leads to polynomial-time computable measure for the dissimilarity of $T$ and $T'$, albeit dependent on the choice of the leaf ordering.  For each of ordered leaf attachment, Phylo2Vec, and HOP, we compare this measure with the rooted subtree prune and regraft distance (rSPR), the hybrid number, and the temporal tree-child hybrid number of $T$ and $T'$. Although there is no direct relationship between rSPR and any of the three vector-based measures, we show that, when minimized over all orderings, the hybrid number is equivalent to HOP, and an upper bound on the other two. Moreover, when minimized over all orderings induced by common cherry-picking sequences of $T$ and $T'$, the temporal tree-child hybrid number of $T$ and $T'$ is equivalent to each of the three vector-based measures. 
\end{abstract}

\noindent
\emph{Keywords: phylogenetic trees, vector representations, order-dependent measures}.\\
\emph{MSC: 05C05 (Combinatorics: Trees), 92C42  	(Systems Biology, Networks).  }

\section{Introduction}\label{sec:intro}
The task of quantifying the disagreement between phylogenetic trees is essential for evaluating the accuracy of tree inference methods and for comparing phylogenetic trees \cite{kuhner2015practical,stjohn2017}.
One of the most common ways of doing this for two rooted binary phylogenetic trees is by using the tree rearrangement operation of rooted subtree prune and regraft (rSPR)~\cite{sempleSteel}. 
The rSPR distance between two arbitrary rooted binary phylogenetic $X$-trees $T$ and $T'$ is the minimum number of rSPR operations that transforms $T$ into $T'$. 
However, in general, computing the rSPR distance between $T$ and $T'$ is computationally hard~\cite{bordewichSemple2005}. 
To circumvent this computational hardness but to also replicate the applicability of rSPR, three new approaches have been recently introduced for this task. All three approaches, namely, ordered leaf attachment (OLA)~\cite{ola}, Phylo2Vec (P2V)~\cite{penn2024phylo2vec}, and HOP~\cite{hop}, are based on imposing an external ordering $\sigma$ on $X$ and, depending on $T$ and $\sigma$, using this ordering to assign a vector to $T$. For a particular measure, the distance between $T$ and $T'$ under $\sigma$ is made by a comparison of the vectors assigned to $T$ and $T'$. Regardless of the measure, the assignment of a vector to $T$ and the comparison of the vectors assigned to $T$ and $T'$ can be computed in polynomial time, thereby eliminating the hardness of calculating the rSPR distance.

The purpose of this paper is to investigate the relationship between each of the three order-dependent measures (OLA, P2V, and HOP) and the tree rearrangement operation rSPR that they seek to replicate. If we are allowed to choose an ordering on $X$, then there is little relationship between any of the three measures and rSPR. However, if we choose an ordering that minimizes the measure and broaden the investigation to include the hybridization number~\cite{baroni2005bounding}, a notion closely related to rSPR, then more direct relationships exist for OLA and HOP. In particular, across all orderings of $X$, we show that the minimum OLA measure of two rooted binary phylogenetic $X$-trees $T$ and $T'$ is bounded above by a linear function in the rSPR distance between $T$ and $T'$, while the minimum HOP measure of $T$ and $T'$ equates (exactly) to the hybrid number of $T$ and $T'$. The latter resolves a conjecture in \cite{hop} that computing the minimum HOP measure between two rooted binary phylogenetic trees is NP-hard as computing the hybrid number of two such trees is NP-hard \cite{bordewich2007computing}. Furthermore, by only considering the orderings of $X$ that are induced by a common cherry-picking sequence of $T$ and $T'$ and minimizing across these orderings, we show that the minimum OLA, P2V, and HOP measures all equate to the temporal tree-child hybrid number of $T$ and $T'$~\cite{humphries2013cherry}.

The above results require a number of concepts that are localized in their use, and so we delay their introduction until the relevant sections of the paper. The paper is organized as follows. In the next section, we formally define each of OLA, P2V, and HOP. Section~\ref{sec:results} establishes the above relationships between OLA and rSPR, and between HOP and the hybrid number, while Section~\ref{results:cherry-picking} establishes the relationship between all three order-dependent measures and the temporal tree-child hybrid number. The relationships in Section~\ref{sec:results} make use of agreement forests, while the relationship in Section~\ref{results:cherry-picking} involves cherry-picking sequences. We end the paper with a discussion and some open problems.

\section{Order-dependent measures}

Throughout the paper, $X$ denotes a non-empty finite set with $|X|=n$. 
We begin with some concepts on phylogenetic trees, and orderings and vectors, and then turn to defining the three order-dependent measures OLA, P2V, and HOP.

\noindent
{\bf Phylogenetic trees.}
A {\em rooted binary phylogenetic $X$-tree} $T$ is a rooted tree that satisfies the following three properties: (i) the unique root has in-degree zero and out-degree one, (ii) the leaves are bijectively labeled with the elements in $X$, and (iii) each remaining vertex has in-degree one and out-degree two. The set $X$ is the {\em label set} of $T$ and denoted by $L(T)$. If $|X|=1$, then $T$ contains a single edge that is incident with the root and the unique element in $X$. An example of a phylogenetic tree with $X=\{a, b, c, d, e\}$ is shown in Figure~\ref{fig:tree}. Let $T$ and $T'$ be two rooted binary phylogenetic $X$-trees with vertex set $V$ and $V'$, respectively. We say that $T$ and $T'$ are {\em isomorphic} if there exists a bijection $\phi: V \rightarrow V'$ with $\phi(x)=x$ for all $x\in X$ and $(u,v)$ is an edge in $T$ if and only if $(\phi(u),\phi(v))$ is an edge in $T'$ for all $u,v\in V$. If $T$ and $T'$ are isomorphic, we write $T\simeq T'$. Since all phylogenetic trees in this paper are rooted and binary, we refer to a rooted binary phylogenetic $X$-tree as simply a {\em phylogenetic tree} throughout the remainder of the paper.

Let $T$ be phylogenetic $X$-tree with root $\rho$, and let $Y \subseteq X \cup \{\rho \}$. We denote by $T(Y)$ the minimal rooted subtree of $T$ that connects the elements in $Y$. Furthermore, the {\em restriction of $T$ to $Y$}, denoted by $T|Y$, is the rooted tree that is obtained from $T(Y)$ by suppressing all vertices of in-degree one and out-degree one. We note that the definitions of the label set of a phylogenetic tree and of two phylogenetic trees being isomorphic naturally extend to restrictions of phylogenetic trees. If $T|Y$ and $T|((X\cup \{\rho\})-Y)$ are vertex disjoint, we call $T|Y$ a {\em pendant} subtree of $T$.
A pair of leaves $\{a,b\}$ of $T$ is a {\em cherry} if $a$ and $b$ have a common parent. More generally, two vertices $u$ and $v$ of $T$ are {\em siblings} if they have a common parent. 
Lastly, let $T$ be a phylogenetic $X$-tree and let $z$ be an element not in $X$. We say that $z$ has been {\em adjoined} to $T$ if we subdivide an edge of $T$ with a new vertex, $u$ say, and join $u$ and $z$ with a new edge $(u, z)$.

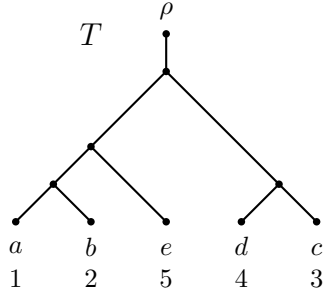
\begin{figure}[t!]
    \centering
    \begin{tikzpicture}[thick,scale=1]
            \node[outer sep=2.5pt, fill=black,circle,inner sep=1pt, label=below: {$a$} ] at (0.5,1){};
            \node[fill=black,circle,inner sep=1pt, label=above:$\rho$]  at (2.5,3.5){};
            \node[outer sep=1pt,fill=black,circle,inner sep=1pt, label=below: {$b$} ] at (1.5,1){};
            \node[outer sep=3pt,fill=black,circle,inner sep=1pt, label=below: {$c$} ] at (4.5,1){};
            \node[fill=black,circle,inner sep=1pt, label=below: {$d$} ] at (3.5,1){};
            \node[outer sep=3pt,fill=black,circle,inner sep=1pt, label=below: {$e$} ] at (2.5,1){};
            \node[fill=black,circle,inner sep=1pt] at (1,1.5){};
            \node[fill=black,circle,inner sep=1pt] at (2.5,3){};
            \node[fill=black,circle,inner sep=1pt] at (4,1.5){};
            \node[fill=black,circle,inner sep=1pt] at (1.5,2){};
            \node[align=left] at (0.5,0.25) {$1$};
            \node[align=left] at (1.5,0.25) {$2$};
            \node[align=left] at (2.5,0.25) {$5$};
            \node[align=left] at (3.5,0.25) {$4$};
            \node[align=left] at (4.5,0.25) {$3$};
            \draw(2.5,3.5)--(2.5,3);
            \draw(2.5,3)--(0.5,1);
            \draw(1,1.5)--(1.5,1);
            \draw(2.5,3)--(4.5,1);
            \draw(4,1.5)--(3.5,1);
            \draw(1.5,2)--(2.5,1);
            \node[align=left] at (1.5,3.5) {\large $T$};
             \end{tikzpicture}
    \caption{A rooted binary phylogenetic $X$-tree with root $\rho$ and $X = \{a,b,c,d,e\}$ together with an ordering $\sigma$ on $X$ defined by setting $\sigma(a)=1, \sigma(b)=2, \ldots, \sigma(e)=5$. The rank of each element $x \in X$ under $\sigma$ is given below the leaf label.}
    \label{fig:tree}
\end{figure}

\noindent
{\bf Orderings and vectors.}
An {\em ordering} on $X$ is a bijection $\sigma:X\rightarrow\{1, 2, \ldots, n\}$. For each $x \in X$, we refer to $\sigma(x)$ as the {\em rank} of $x$ under $\sigma$. To illustrate, consider the phylogenetic tree in Figure~\ref{fig:tree}. Here the ranking of $a$, $b$, $c$, $d$, and $e$ is $1$, $2$, $3$, $4$, and $5$, respectively. Let $\sigma$ be an ordering on $X$, and let $Y=\{y_1, y_2, \ldots, y_m\}$ be a subset of $X$. We say that the elements in $Y$ have {\em consecutive ranks} under $\sigma$ if
$$\{\sigma(y_1), \sigma(y_2), \ldots, \sigma(y_m)\}=\{i,i+1,\ldots,i+m-1\}$$
for some $i\leq n-m+1$. Furthermore, if $x$ is the element in $X$ such that $\sigma(x)=n$, we use $\sigma_{-x}$ to denote the ordering on $X-\{x\}$ such that $\sigma_{-x}(y)=\sigma(y)$ for each element $y\in X-\{x\}$. 

Let $\mathbf{v}$ be a vector. For the purposes of concatenating vectors, we write $[u, \mathbf{v}, w]$ for $[u,v_1, v_2, \ldots, v_k, w]$, where $\mathbf{v}= [v_1, v_2, \ldots, v_k]$. Now let $\mathbf{v}=[v_1, v_2,\ldots, v_n]$ be a vector of length $n$ and let $\sigma$ be an ordering on $X$. For each $i\in\{1,2,\ldots,n\}$, we say that $v_i$ is {\em associated} with the element $x\in X$ if $\sigma(x)=i$. Furthermore, for a subset $Y$ of $X$, the vector {\em $\mathbf{v}$ restricted to $Y$} is the vector with $|Y|$ coordinates that can be obtained from $\mathbf{v}$ by deleting each coordinate that is associated with an element in $X-Y$.

Lastly, we say that a vector $\mathbf{u}$ is a {\em subsequence} of a vector $\mathbf{v}$ if $\mathbf{u}$ can be obtained from $\mathbf{v}$ by the deletion of zero or more coordinates. More generally, let $S$ be a set of vectors. A vector is a {\em common subsequence} of $S$ if it is a subsequence of each vector in $S$. A vector is a {\em longest common subsequence (LCS)} of $S$ if it is (i) a subsequence of each vector in $S$ and (ii), among all such common subsequences, it has maximum length.

We next define the three order-dependent measures. In what follows, let $T$ be a phylogenetic $X$-tree with root $\rho$, and let $\sigma$ be an ordering on $X$. A {\em labeling} of $T$ will be a map $f:V(T) \rightarrow \mathcal{C}$, where $\mathcal{C}$ is a set of symbols. For each of the order-dependent measures, we present an algorithm whose input is a phylogenetic $X$-tree $T$ and an ordering $\sigma$ on $X$, and whose output is a vector associated with $T$ and $\sigma$. Each algorithm proceeds by first defining a labeling of $T$ or, in the case of P2V, a sequence of labelings of restrictions of $T$, and then using this labeling, respectively these labelings, to construct the outputted vector.

\subsection{Ordered leaf attachment (OLA)}
\label{sec:OLA}

OLA~\cite{ola} is the simplest of the three ordered-dependent measures to define. For OLA, the {\em OLA labeling} of $T$ under $\sigma$ is a map:
$$f_\OLA: (V(T) -\{\rho\})\rightarrow \{1, 2, \ldots, n\} \cup \{-2,-3,\ldots,-n\}.$$

\begin{algorithm}[H]
\caption{{\sc Construct OLA Vector}}
\begin{algorithmic}[1]
\STATE {\bf Input:} A phylogenetic $X$-tree with root $\rho$ and an ordering $\sigma$ on $X$.
\STATE {\bf Output:} The OLA vector $\mathbf{v}=[v_1, v_2, \ldots, v_n]$ of $T$ under $\sigma$.
\FORALL{$x\in X$}
\STATE Set $f_\OLA(x) = \sigma(x)$, i.e., each element of $X$ is labeled by its rank under $\sigma$.
\ENDFOR
\STATE Set $T_2\simeq T|\{\rho, \sigma^{-1}(1),\sigma^{-1}(2)\}$;
\STATE Set $u_2$ to be the unique child of $\rho$;
\STATE Set $f_\OLA(u_2)=-2$;
\FOR{$i = 3, 4, \ldots, n$}
\STATE Set $T_i$ to be the phylogenetic tree obtained from $T_{i-1}$ by adjoining $\sigma^{-1}(i)$ with the new edge $(u_i, \sigma^{-1}(i))$ so that $T_i\simeq T|\{\rho, \sigma^{-1}(1), \sigma^{-1}(2), \ldots, \sigma^{-1}(i)\}$;
\STATE Set $f_\OLA(u_i)=-i$;
\STATE Set $s_i$ to be the sibling of $\sigma^{-1}(i)$ in $T_i$;
\ENDFOR
\RETURN $\mathbf{v}=[0, 1, f_\OLA(s_3), f_\OLA(s_4), \ldots, f_\OLA(s_n)]$
\end{algorithmic}
\label{alg:ola}
\end{algorithm}

\noindent The vector returned by Algorithm~\ref{alg:ola} is the {\em OLA vector} of $T$ under $\sigma$ (see Figure~\ref{fig:OLA} for an illustration).

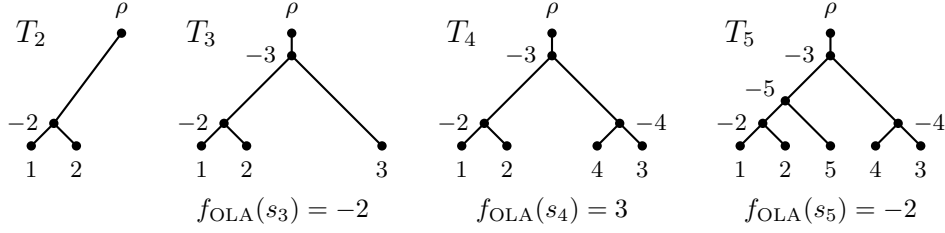
\begin{figure}[!t]
    \begin{center}
        \begin{tabular}{cccc}
             \begin{tikzpicture}[thick,scale=0.6]
            \node[fill=black,circle,inner sep=\ptsize,label=below: {\small $1$} ] at (0.5,1){};
            \node[fill=black,circle,inner sep=\ptsize,label=below: {\small $2$} ] at (1.5,1){};
            \node[fill=black,circle,inner sep=\ptsize,label=left: {\small $-2$} ] at (1,1.5){};
            \node[fill=black,circle,inner sep=\ptsize, label=above:{\small $\rho$}] at (2.5,3.5){};
            \draw(2.5,3.5)--(1,1.5);
            \draw(1,1.5)--(0.5,1);
            \draw(1,1.5)--(1.5,1);
             \node[align=left] at (0.5,3.5) {\large $T_2$};
             \end{tikzpicture}
             & 
             \begin{tikzpicture}[thick,scale=0.6]
            \node[fill=black,circle,inner sep=\ptsize,label=below: {\small $1$} ] at (0.5,1){};
            \node[fill=black,circle,inner sep=\ptsize,label=below: {\small $2$} ] at (1.5,1){};
            \node[fill=black,circle,inner sep=\ptsize,label=below: {\small $3$} ] at (4.5,1){};
            \node[fill=black,circle,inner sep=\ptsize,label=left: {\small $-2$} ] at (1,1.5){};
            \node[fill=black,circle,inner sep=\ptsize,label=left: {\small $-3$} ] at (2.5,3){};
            \node[fill=black,circle,inner sep=\ptsize, label=above:{\small $\rho$}] at (2.5,3.5){};
            \draw(2.5,3.5)--(2.5,3);
            \draw(2.5,3)--(0.5,1);
            \draw(1,1.5)--(1.5,1);
            \draw(2.5,3)--(4.5,1);
            \node[align=left] at (0.5,3.5) {\large $T_3$};
             \end{tikzpicture}
             &
             \begin{tikzpicture}[thick,scale=0.6]
            \node[fill=black,circle,inner sep=\ptsize,label=below: {\small $1$} ] at (0.5,1){};
            \node[fill=black,circle,inner sep=\ptsize,label=below: {\small $2$} ] at (1.5,1){};
            \node[fill=black,circle,inner sep=\ptsize,label=below: {\small $3$} ] at (4.5,1){};
            \node[fill=black,circle,inner sep=\ptsize,label=below: {\small $4$} ] at (3.5,1){};
            \node[fill=black,circle,inner sep=\ptsize,label=left: {\small $-2$} ] at (1,1.5){};
            \node[fill=black,circle,inner sep=\ptsize,label=left: {\small $-3$} ] at (2.5,3){};
            \node[fill=black,circle,inner sep=\ptsize,label=right: {\small $-4$} ] at (4,1.5){};
            \node[fill=black,circle,inner sep=\ptsize, label=above:{\small $\rho$}] at (2.5,3.5){};
            \draw(2.5,3.5)--(2.5,3);
            \draw(2.5,3)--(0.5,1);
            \draw(1,1.5)--(1.5,1);
            \draw(2.5,3)--(4.5,1);
            \draw(4,1.5)--(3.5,1);
            \node[align=left] at (0.5,3.5) {\large $T_4$};
             \end{tikzpicture}
             & 
             \begin{tikzpicture}[thick,scale=0.6]
            \node[fill=black,circle,inner sep=\ptsize,label=below: {\small $1$} ] at (0.5,1){};
            \node[fill=black,circle,inner sep=\ptsize,label=below: {\small $2$} ] at (1.5,1){};
            \node[fill=black,circle,inner sep=\ptsize,label=below: {\small $3$} ] at (4.5,1){};
            \node[fill=black,circle,inner sep=\ptsize,label=below: {\small $4$} ] at (3.5,1){};
            \node[fill=black,circle,inner sep=\ptsize,label=below: {\small $5$} ] at (2.5,1){};
            \node[fill=black,circle,inner sep=\ptsize,label=left: {\small $-2$} ] at (1,1.5){};
            \node[fill=black,circle,inner sep=\ptsize,label=left: {\small $-3$} ] at (2.5,3){};
            \node[fill=black,circle,inner sep=\ptsize,label=right: {\small $-4$} ] at (4,1.5){};
            \node[fill=black,circle,inner sep=\ptsize,label={[left, yshift=2]{\small $-5$}} ] at (1.5,2){};
            \node[fill=black,circle,inner sep=\ptsize, label=above:{\small $\rho$}] at (2.5,3.5){};
            \draw(2.5,3.5)--(2.5,3);
            \draw(2.5,3)--(0.5,1);
            \draw(1,1.5)--(1.5,1);
            \draw(2.5,3)--(4.5,1);
            \draw(4,1.5)--(3.5,1);
            \draw(1.5,2)--(2.5,1);
            \node[align=left] at (0.5,3.5) {\large $T_5$};
             \end{tikzpicture}
             \\
             & $f_\OLA(s_3) = -2$ & $f_\OLA(s_4) = 3$ & $f_\OLA(s_5)=-2$ \\
        \end{tabular}
    \end{center}
    \caption{Illustration of the OLA vector construction (Algorithm~\ref{alg:ola}) for the rooted phylogenetic $X$-tree on $X = \{a,b,c,d,e\}$ depicted in Figure~\ref{fig:tree} under the ordering $\sigma$ with $\sigma(a)=1, \sigma(b)=2, \ldots, \sigma(e)=5$. The labels of the interior vertices correspond to the OLA labeling, $f_\OLA$, and for $i \in \{3,4,5\}$, we indicate the $i$-th vector coordinate of the OLA vector $\mathbf{v}$ of $T$ below tree $T_i$. It follows that $\mathbf{v} = [0,1, f_\OLA(s_3), f_\OLA(s_4), f_\OLA(s_5) ] = [0,1,-2,3,-2]$.}
    \label{fig:OLA}
\end{figure}

\subsection{Phylo2Vec (P2V)} \label{sec:P2V}
Similar to OLA, the P2V labeling~\cite{penn2024phylo2vec,scheidwasser2025phylo2vec} iteratively assigns labels to the interior vertices of a phylogenetic tree based on an ordering of $X$. However, while OLA assigns a fixed negative integer to each interior vertex that remains unchanged as new leaves are adjoined, P2V recomputes the labels of these vertices after each new leaf is adjoined.

For all $i\in \{2, 3, \ldots, n\}$, the {\em $i$-th $\PV$ labeling} of $T_i\simeq T|\{\rho, \sigma^{-1}(1), \sigma^{-1}(2), \ldots, \sigma^{-1}(i)\}$ under $\sigma$ is a map 
$$f^i_\PV: (V(T_i)-\{\rho\}) \rightarrow \{1, 2, \ldots, i, i+1, \ldots, 2i-1\}.$$

\begin{algorithm}[H]
\caption{{\sc Construct P2V Vector}}
\begin{algorithmic}[1]
\STATE {\bf Input:} A phylogenetic $X$-tree $T$ with root $\rho$ and an ordering $\sigma$ on $X$.
\STATE {\bf Output:} The P2V vector $\mathbf{v}=[v_1, v_2, \ldots, v_n]$ of $T$ under $\sigma$.
\FORALL{$i\in \{2, 3, \ldots, n\}$ and $x\in X$}
\STATE Set $f^i_\PV(x) = \sigma(x)$, i.e., each element of $X$ is labeled by its rank under $\sigma$;
\ENDFOR
\STATE Set $T_2\simeq T|\{\rho,\sigma^{-1}(1),\sigma^{-1}(2)\}$;
\STATE Set $u_2$ to be the child of $\rho$;
\STATE Set $f^2_\PV(u_2)=3$;
\FOR{$i=3, 4, \ldots, n$}
\STATE Set $T_i$ to be the phylogenetic tree obtained from $T_{i-1}$ by adjoining $\sigma^{-1}(i)$ with the new edge $(u_i, \sigma^{-1}(i))$ so that $T_i\simeq T|\{\rho, \sigma^{-1}(1), \sigma^{-1}(2), \ldots, \sigma^{-1}(i)\}$;
\FOR{$j=i+1, i+2, \ldots, 2i-1$}
\STATE\label{alg:cherry} Assign $j$ to the vertex, $w_j$ say, in $\{u_2, u_3, \ldots, u_i\}$ that has no $i$-th P2V label but whose two children have an $i$-th P2V label and, amongst all such vertices, has the highest $i$-th P2V labeled child;
\STATE Set $f^i_\PV(w_j)=j$;
\ENDFOR
\STATE Set $s_i$ to be the sibling of $\sigma^{-1}(i)$ in $T_i$;
\ENDFOR
\RETURN $\mathbf{v}=[0, 1, f^3_\PV(s_3), f^4_\PV(s_4), \ldots, f^n_\PV(s_n)]$
\end{algorithmic}
\label{alg:p2v}
\end{algorithm}

\noindent The vector returned by Algorithm~\ref{alg:p2v} is the {\em P2V vector} of $T$ under $\sigma$ (see Figure~\ref{fig:P2V} for an illustration). Note that Step~\ref{alg:cherry} is well defined as every phylogenetic tree has a cherry and so there is always at least one interior vertex with both of its children labeled. Furthermore, noting that $T_n\simeq T$, we denote the $n$-th P2V labeling of $T_n$ by omitting the superscript and writing $f_\PV$.

\begin{figure}[t!]
    \begin{center}
        \begin{tabular}{cccc}
             \begin{tikzpicture}[thick,scale=0.6]
            \node[fill=black,circle,inner sep=\ptsize,label=below: {\small $1$} ] at (0.5,1){};
            \node[fill=black,circle,inner sep=\ptsize,label=below: {\small $2$} ] at (1.5,1){};
            \node[fill=black,circle,inner sep=\ptsize,label={[left, yshift=2]{\small $3$}} ] at (1,1.5){};
            \node[fill=black,circle,inner sep=\ptsize, label=above: {\small $\rho$}] at (2.5,3.5){};
            \draw(2.5,3.5)--(1,1.5);
            \draw(1,1.5)--(0.5,1);
            \draw(1,1.5)--(1.5,1);
             \node[align=left] at (0.5,3.5) {\large $T_2$};
             \end{tikzpicture}
             & 
             \begin{tikzpicture}[thick,scale=0.6]
            \node[fill=black,circle,inner sep=\ptsize,label=below: {\small $1$} ] at (0.5,1){};
            \node[fill=black,circle,inner sep=\ptsize,label=below: {\small $2$} ] at (1.5,1){};
            \node[fill=black,circle,inner sep=\ptsize,label=below: {\small $3$} ] at (4.5,1){};
            \node[fill=black,circle,inner sep=\ptsize,label={[left, yshift=2]{\small $4$}} ] at (1,1.5){};
           \node[fill=black,circle,inner sep=\ptsize,label={[left, yshift=0]{\small $5$}} ] at (2.5,3){};
             \node[fill=black,circle,inner sep=\ptsize, label=above: {\small $\rho$}] at (2.5,3.5){};
            \draw(2.5,3.5)--(2.5,3);
            \draw(2.5,3)--(0.5,1);
            \draw(1,1.5)--(1.5,1);
            \draw(2.5,3)--(4.5,1);
             \node[align=left] at (0.5,3.5) {\large $T_3$};
             \end{tikzpicture}
             &
             \begin{tikzpicture}[thick,scale=0.6]
            \node[fill=black,circle,inner sep=\ptsize,label=below: {\small $1$} ] at (0.5,1){};
            \node[fill=black,circle,inner sep=\ptsize,label=below: {\small $2$} ] at (1.5,1){};
            \node[fill=black,circle,inner sep=\ptsize,label=below: {\small $3$} ] at (4.5,1){};
            \node[fill=black,circle,inner sep=\ptsize,label=below: {\small $4$} ] at (3.5,1){};
            \node[fill=black,circle,inner sep=\ptsize,label={[left, yshift=2]{\small $6$}} ] at (1,1.5){};
             \node[fill=black,circle,inner sep=\ptsize, label=above: {\small $\rho$}] at (2.5,3.5){};
            \node[fill=black,circle,inner sep=\ptsize,label={[right, yshift=2]{\small $5$}} ]  at (4,1.5){};
            \node[fill=black,circle,inner sep=\ptsize,label={[left,yshift=2]{\small $7$}} ] at (2.5,3){};
            \draw(2.5,3.5)--(2.5,3);
            \draw(2.5,3)--(0.5,1);
            \draw(1,1.5)--(1.5,1);
            \draw(2.5,3)--(4.5,1);
            \draw(4,1.5)--(3.5,1);
            \node[align=left] at (0.5,3.5) {\large $T_4$};
             \end{tikzpicture}
             & 
             \begin{tikzpicture}[thick,scale=0.6]
            \node[fill=black,circle,inner sep=\ptsize,label=below: {\small $1$} ] at (0.5,1){};
            \node[fill=black,circle,inner sep=\ptsize,label=below: {\small $2$} ] at (1.5,1){};
            \node[fill=black,circle,inner sep=\ptsize,label=below: {\small $3$} ] at (4.5,1){};
            \node[fill=black,circle,inner sep=\ptsize,label=below: {\small $4$} ] at (3.5,1){};
            \node[fill=black,circle,inner sep=\ptsize,label=below: {\small $5$} ] at (2.5,1){};
           \node[fill=black,circle,inner sep=\ptsize,label={[left, yshift=2]{\small $7$}} ] at (1,1.5){};
            \node[fill=black,circle,inner sep=\ptsize,label={[left,yshift=2]{\small $9$}} ] at (2.5,3){};
            \node[fill=black,circle,inner sep=\ptsize,label=right: {\small $6$} ] at (4,1.5){};
            \node[fill=black,circle,inner sep=\ptsize,label={[left, yshift=2]{\small $8$}} ] at (1.5,2){};
             \node[fill=black,circle,inner sep=\ptsize, label=above: {\small $\rho$}] at (2.5,3.5){};
            \draw(2.5,3.5)--(2.5,3);
            \draw(2.5,3)--(0.5,1);
            \draw(1,1.5)--(1.5,1);
            \draw(2.5,3)--(4.5,1);
            \draw(4,1.5)--(3.5,1);
            \draw(1.5,2)--(2.5,1);
            \node[align=left] at (0.5,3.5) {\large $T_5$};
             \end{tikzpicture}
             \\
             & $f_\PV^3(s_3) = 4$ & $f_\PV^4(s_4) = 3$ & $f_\PV^5(s_5)=7$ \\
        \end{tabular}
    \end{center}
    \caption{Illustration of the P2V vector construction (Algorithm~\ref{alg:p2v}) for the rooted phylogenetic $X$-tree on $X = \{a,b,c,d,e\}$ depicted in Figure~\ref{fig:tree} under the ordering $\sigma$ with $\sigma(a)=1, \sigma(b)=2, \ldots, \sigma(e)=5$. For $i \in \{2,3,4,5\}$, the labels of the interior vertices of $T_i$ correspond to the P2V labeling $f^i_\PV$. Notice that, in contrast to OLA (Figure~\ref{fig:OLA}), the interior vertices are re-labeled in each step. Furthermore, for $i \in \{3,4,5\}$, we indicate the $i$-th vector coordinate of the P2V vector $\mathbf{v}$ of $T$
    below tree $T_i$. It follows that $\mathbf{v} = [0,1, f_\PV^3(s_3),f_\PV^4(s_4), f_\PV^5(s_5) ] = [0,1,4,3,7]$.}
    \label{fig:P2V}
\end{figure}
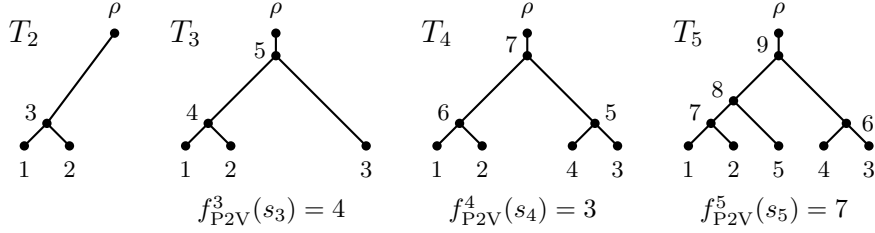

\subsection{HOP} 
Introduced in~\cite{hop}, the HOP labeling assigns a positive integer to each interior vertex of $T$ including the root such that there is a bijection between the interior vertices of $T$ and the elements in $\{1,2,\ldots,n\}$. The associated vector with $2n$ coordinates that arise from the $n$ leaves and the $n$ interior vertices of $T$ is then obtained by sequentially decomposing $T$ into $n$ edge-disjoint paths.

The {\em HOP labeling} of $T$ under $\sigma$ is a map
$$f_{\rm HOP}: V(T)\rightarrow \{\underline{1}, \underline{2}, \ldots, \underline{n}, 1, 2, \ldots, n\},$$
where in keeping with \cite{hop} the underlined numerals correspond to the leaves of $T$ and the numerals not underlined correspond to the interior vertices of $T$.

\begin{algorithm}[H]
    \caption{{\sc Construct HOP Vector}}
    \begin{algorithmic} [1]
       \STATE \textbf{Input:} A phylogenetic $X$-tree $T$ with root $\rho$ and an ordering $\sigma$ on $X$.
       \STATE \textbf{Output:} The HOP vector $\mathbf{v}=[v_1, v_2, \ldots, v_{2n}]$ of $T$ under $\sigma$.
       \FORALL{$x \in X$}
            \STATE Set $f_\HOP(x) = \underline{\sigma(x)}$;
        \ENDFOR
       \FORALL{$v \in V(T)$}
            \STATE Set $C(v)$ to be the set of leaves $x\in X$ with a directed path from $v$ to $x$ in $T$;
            \STATE  Compute $m(v) = \min_{x \in C(v)} f_\HOP(x)$, i.e., find the minimum HOP label among the leaves in $C(v)$;
       \ENDFOR
       \FORALL{$v \in V(T) -( L(T)\cup\{\rho\})$}
            \STATE Label $v$ by $f_\HOP(v) = \max\{ m(v_1), m(v_2)\}$, where $v_1$ and $v_2$ are the children of $v$; 
        \ENDFOR
        \STATE Set $f_\HOP(\rho)=1$;
    \FOR{$i=1,2, \ldots, n$}
        \STATE Let $x$ be the leaf with $\sigma(i)=x$ and $v$ be the non-leaf vertex with $f_\HOP(v)=i$;
        \STATE Let 
        $P_i=(v=u_1, u_2, \ldots, u_{k-1}, u_k=x)$ be the unique path from $v$ to $x$ in $T$;
        \STATE Set $\mathbf{v}(P_i) = [ f_\HOP(u_2),f_\HOP(u_3), \ldots, f_\HOP(u_{k-1})]$;
    \ENDFOR
    \RETURN $\mathbf{v} = [1, \mathbf{v}(P_1), \ul{1}, \mathbf{v}(P_2), \ul{2}, \mathbf{v}(P_3), \ul{3}, \ldots, \mathbf{v}(P_n),\ul{n}]$ \label{abbreviation}
    \end{algorithmic}
    \label{alg:hop}
\end{algorithm}

\noindent The vector returned by Algorithm~\ref{alg:hop} is the {\em HOP vector} of $T$ under $\sigma$. We sometimes abbreviate $\mathbf{v}(P_i)$ in Line~17 of Algorithm~\ref{alg:hop} as $\mathbf{v}_i$. Note that, by definition, $\mathbf{v}(P_n) = \mathbf{v}_n$ is empty, and so this leads to $\mathbf{v} = [1, \mathbf{v}_1, \underline{1}, \mathbf{v}_2, \underline{2}, \mathbf{v}_3, \underline{3}, \ldots, \mathbf{v}_{n-1}, \underline{n-1},\underline{n}]$.

We remark here that the HOP labeling of $T$ under $\sigma$ can alternatively be derived through a tree-growing process similar to the one described for OLA. As this derivation is not needed for the paper, we omit the details.

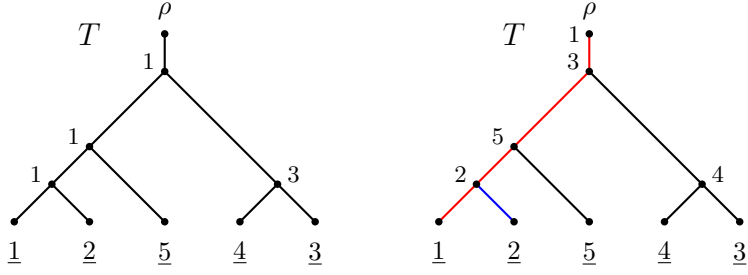
\begin{figure}[!t]
\begin{center}
\begin{tikzpicture}[thick,scale=1]
    \node[outer sep=2.5pt, fill=black,circle,inner sep=1pt, label=below: {$\ul{1}$} ] at (0.5,1){};
    \node[fill=black,circle,inner sep=1pt, label=above:$\rho$]  at (2.5,3.5){};
    \node[outer sep=2.5pt,fill=black,circle,inner sep=1pt, label=below: {$\ul{2}$} ] at (1.5,1){};
    \node[outer sep=3pt,fill=black,circle,inner sep=1pt, label=below: {$\ul{3}$} ] at (4.5,1){};
    \node[outer sep=2.5pt,fill=black,circle,inner sep=1pt, label=below: {$\ul{4}$} ] at (3.5,1){};
    \node[outer sep=3pt,fill=black,circle,inner sep=1pt, label=below: {$\ul{5}$} ] at (2.5,1){};
    \node[fill=black,circle,inner sep=1pt, label={[left, yshift=2]{\small $1$}} ] at (1,1.5){};
    \node[fill=black,circle,inner sep=1pt, label={[left, yshift=2]{\small $1$}} ] at (2.5,3){};
    \node[fill=black,circle,inner sep=1pt, label={[right, yshift=2]{\small $3$}} ] at (4,1.5){};
    \node[fill=black,circle,inner sep=1pt, label={[left, yshift=2]{\small $1$}} ] at (1.5,2){};
    \draw(2.5,3.5)--(2.5,3);
    \draw(2.5,3)--(0.5,1);
    \draw(1,1.5)--(1.5,1);
    \draw(2.5,3)--(4.5,1);
    \draw(4,1.5)--(3.5,1);
    \draw(1.5,2)--(2.5,1);
    \node[align=left] at (1.5,3.5) {\large $T$};
\end{tikzpicture}
\hspace{1cm}
\begin{tikzpicture}[thick,scale=1]
    \draw[red](2.5,3.5)--(2.5,3);
    \draw[red](2.5,3)--(0.5,1);
    \draw[blue](1,1.5)--(1.5,1);
    \node[outer sep=2.5pt, fill=black,circle,inner sep=1pt, label=below: {$\ul{1}$} ] at (0.5,1){};
    \node[fill=black,circle,inner sep=1pt, label=above:$\rho$]  at (2.5,3.5){};
    \node[align=left] at (2.3,3.5) {\small $1$};
    \node[outer sep=2.5pt,fill=black,circle,inner sep=1pt, label=below: {$\ul{2}$} ] at (1.5,1){};
    \node[outer sep=3pt,fill=black,circle,inner sep=1pt, label=below: {$\ul{3}$} ] at (4.5,1){};
    \node[outer sep=2.5pt,fill=black,circle,inner sep=1pt, label=below: {$\ul{4}$} ] at (3.5,1){};
    \node[outer sep=3pt,fill=black,circle,inner sep=1pt, label=below: {$\ul{5}$} ] at (2.5,1){};
    \node[fill=black,circle,inner sep=1pt, label={[left, yshift=2]{\small $2$}} ] at (1,1.5){};
    \node[fill=black,circle,inner sep=1pt, label={[left, yshift=2]{\small $3$}} ] at (2.5,3){};
    \node[fill=black,circle,inner sep=1pt, label={[right, yshift=2]{\small $4$}} ] at (4,1.5){};
    \node[fill=black,circle,inner sep=1pt, label={[left, yshift=2]{\small $5$}} ] at (1.5,2){};
    \draw(2.5,3)--(4.5,1);
    \draw(4,1.5)--(3.5,1);
    \draw(1.5,2)--(2.5,1);
    \node[align=left] at (1.5,3.5) {\large $T$};
\end{tikzpicture}\\
\end{center}
\caption{Illustration of the HOP vector construction (Algorithm~\ref{alg:hop}) for the rooted phylogenetic $X$-tree on $X = \{a,b,c,d,e\}$ depicted in Figure~\ref{fig:tree} under the ordering $\sigma$ with $\sigma(a)=1$, $\sigma(b)=2, \ldots, \sigma(e)=5$. On the left, each  vertex $v \in V(T) - \{\rho\}$ is labeled by $m(v)$, the minimum HOP label among the leaves below $v$, with leaf labels underlined. On the right, the root is labeled by $1$, the leaves are labeled by their ranks (underlined), and each vertex $v \in V(T) - (L(T) \cup \{\rho\})$ with children $v_1, v_2$ is labeled by $\max \{m(v_1), m(v_2)\}$. For further illustration, the red edges correspond to the path $P_1$, and similarly, the blue edge corresponds to the path $P_2$. Finally, the HOP vector of $T$ is given by $\mathbf{v}=[1,3,5,2,\ul{1},\ul{2},4,\ul{3},\ul{4},\ul{5}]$.}
\end{figure}

\subsection{Order-dependent distances between phylogenetic trees}
The primary motivation for the introduction of the OLA, P2V, and HOP vector representations is to quantify the dissimilarity between two phylogenetic $X$-trees $T$ and $T'$. This dissimilarity depends on the choice of $\sigma$, an ordering of $X$, and is computed as follows:
\begin{enumerate}
\item The {\em OLA distance with respect to $\sigma$} between $T$ and $T'$, denoted as $d^\sigma_{\OLA}(T,T')$, is defined as the Hamming distance between the OLA vectors of $T$ and $T'$ under $\sigma$.

\item Similarly, the {\em $\PV$ distance with respect to $\sigma$} between $T$ and $T'$, denoted as $d^\sigma_{\PV}(T,T')$, is defined as the Hamming distance between the P2V vectors of $T$ and $T'$ under $\sigma$.  
    
\item The analogue for HOP is a two-step process. Let $\mathbf{u}=[1,\mathbf{u}_1, \underline{1}, \mathbf{u}_2, \underline{2}, \ldots, \mathbf{u}_{n-1},\underline{n-1},\underline{n}]$ and $\mathbf{v}=[1,\mathbf{v}_1, \underline{1}, \mathbf{v}_2, \underline{2}, \ldots, \mathbf{v}_{n-1},\underline{n-1},\underline{n}]$ be the HOP vectors of $T$ and $T'$ under $\sigma$. The {\em HOP similarity with respect to $\sigma$} is defined as 
    \[ \text{Sim}^\sigma_\HOP(T,T') = \sum\limits_{1 \leq i \leq n-1} \left| \text{LCS}(\mathbf{u}_i,\mathbf{v}_i) \right|.\] 
In turn, the {\em HOP distance with respect to $\sigma$} between $T$ and $T'$ is given by 
    \[ d^\sigma_\HOP(T,T') = n - \text{Sim}^\sigma_\HOP(\textbf{u},\textbf{v}) = n -  \sum\limits_{1 \leq i \leq n-1} \left| \text{LCS}(\mathbf{u}_i,\mathbf{v}_i) \right|.\]
\end{enumerate}
It has been shown in \cite{ola}, \cite{penn2024phylo2vec}, and \cite{hop}, respectively, that under a fixed $\sigma$ each of $d_\OLA^\sigma(T,T')$, $d_\PV^\sigma(T,T')$, and $d_\HOP^\sigma(T,T')$ satisfies the triangle inequality and, therefore, is a distance on the set of phylogenetic $X$-trees. 

Lastly, for each $\Theta\in \{\HOP, \OLA, \PV\}$, we define
$$d^*_{\Theta}(T, T')= \min\big\{d^{\sigma}_{\Theta}(T, T'): \mbox{$\sigma$ is an ordering on $X$}\big\},$$
and refer to $d^*_{\Theta}(T, T')$ as the {\it $\Theta$ measure between} $T$ and $T'$. 

The reason for calling it a measure  is because $d^*_{\Theta}(T, T')$ is not a distance on the set of phylogenetic $X$-trees. We give concrete examples showing that $d^*_{\Theta}(T, T')$ does not satisfy the triangle inequality in Figure~\ref{fig:OLA-rSPR} (for OLA) and Figure~\ref{fig:P2V-SPR} (for HOP and P2V).

\section{Bounding order-dependent measures by agreement forests} \label{sec:results}

There are several measures to compute the dissimilarities between two phylogenetic trees on the same label set that are based on agreement forests. These measures include the rooted subtree prune and regraft distance  and the hybrid number that we formally define next.

Let $T$ be a rooted phylogenetic $X$-tree with root $\rho$. Furthermore, let $T'$ be a rooted phylogenetic tree that can be obtained from $T$ by deleting an edge $(u,v)$ in $T$ with $u\ne\rho$, suppressing $u$, and then {\em adjoining} the subtree with root $v$ to the phylogenetic tree that contains $\rho$ by subdividing an edge of the latter with a new vertex, $u'$ say, and then adding the edge $(u',v)$. We say that $T'$ has been
obtained from $T$ by a {\em rooted subtree prune and regraft (rSPR) move}.  
Moreover, we define the {\em rSPR distance}, denoted $d_\rSPR(T,T')$, to be the minimum number of rSPR moves that transforms $T$ to $T'$. Note that an rSPR move is reversible, so $d_\rSPR(T, T')=d_\rSPR(T', T)$.

Now, let $T$ and $T'$ be two phylogenetic $X$-trees whose roots are labeled with $\rho$. For the purpose of the upcoming definition, we view $\rho$ as an element of the label set of $T$ and $T'$.
An {\em agreement forest} for $T$ and $T'$ is a collection $\{T_{\rho}, T_1, T_2,\ldots,T_k\}$ of rooted trees with label sets $L_{\rho}, L_1, L_2,\ldots, L_k$ such that the following properties are satisfied:
\begin{enumerate}
    \itemsep 0pt
    \item The label sets $L_{\rho}, L_1, L_2,\ldots, L_k$ partition $X \cup \{\rho\}$ and, in particular, $\rho\in L_\rho$. 
    \item For all $i\in \{\rho,1,2,\ldots,k\}$, the rooted tree $T_i \simeq T|{L_i} \simeq T'|{L_i}$. 
    \item The trees in $\{T(L_i): i\in \{\rho,1,2,\ldots,k\}\}$ and $\{T'(L_i): i\in \{\rho,1,2,\ldots,k\}\}$ are vertex-disjoint rooted subtrees of $T$ and $T'$, respectively.
\end{enumerate}
A {\em maximum agreement forest} for $T$ and $T'$ is an agreement forest $\{T_{\rho}, T_1, T_2,\ldots,T_k\}$ in which $k$ (the number of components minus one) is minimized. The minimum possible value for $k$ is denoted by $m(T,T')$. The next theorem is due to Bordewich and Semple~\cite[Theorem 2.1]{bordewichSemple2005} and establishes a characterization of the rSPR distance between two phylogenetic $X$-trees in terms of agreement forests.

\begin{theorem}\label{t:spr}
Let $T$ and $T'$ be two phylogenetic $X$-trees. Then $d_\rSPR(T,T')=m(T,T')$.
\end{theorem}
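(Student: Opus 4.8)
The plan is to prove the two inequalities $d_\rSPR(T,T')\le m(T,T')$ and $d_\rSPR(T,T')\ge m(T,T')$ separately, each by an induction that strips away one unit at a time: an induction on $k=m(T,T')$ for the first, which peels a component off a maximum agreement forest with a single rSPR move; and an induction on $d=d_\rSPR(T,T')$ for the second, which shows that a single rSPR move changes the number of components of a maximum agreement forest by at most one.

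For $d_\rSPR(T,T')\le m(T,T')$, I would induct on $k$. When $k=0$ the only agreement forest is $\{T_\rho\}$ with $L_\rho=X\cup\{\rho\}$, which forces $T\simeq T|L_\rho\simeq T'|L_\rho\simeq T'$, so $d_\rSPR(T,T')=0$. For $k\ge1$, fix a maximum agreement forest $F=\{T_\rho,T_1,\dots,T_k\}$ and choose the component $T_j$ whose root in $T$ (the $T$-lca of $L_j$) has maximum depth; a short argument shows that $T|L_j$ is then a pendant subtree of $T$ and that $j\ne\rho$. Let $T_i$ be the component first met on the path from the $T'$-lca of $L_j$ up towards $\rho$ in $T'$ (so possibly $i=\rho$), and let $p$ be the point of $T'(L_i)$ at which that path attaches. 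Using the isomorphism $T|L_i\simeq T'|L_i$, transfer $p$ to a corresponding edge $\hat e$ of $T$ in the region spanned by $L_i$, and apply the single rSPR move on $T$ that prunes $T|L_j$ and regrafts it at $\hat e$; call the result $\hat T$. The claim is that the collection obtained from $F$ by merging $T_i$ and $T_j$ into one component on label set $L_i\cup L_j$ is an agreement forest for $\hat T$ and $T'$, using $\hat T|(L_i\cup L_j)\simeq T'|(L_i\cup L_j)$ and $\hat T|L_m=T|L_m$ for all $m\ne j$. This forest has $k$ components, so $m(\hat T,T')\le k-1$, and by the inductive hypothesis $d_\rSPR(T,T')\le 1+d_\rSPR(\hat T,T')\le 1+(k-1)=k$.

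For $d_\rSPR(T,T')\ge m(T,T')$, I would induct on $d$, the base case $d=0$ giving the trivial forest $\{T\}$ with $k=0$. For $d\ge1$, fix an optimal sequence $T=T_0,T_1,\dots,T_d=T'$; by induction there is an agreement forest $F$ for $T_1$ and $T'$ with at most $d$ components. The move $T_0\to T_1$ relocates a pendant subtree $S$ of $T_0$ with leaf set $A$; the crucial point is that any component $T_1(L_m)$ of $F$ containing leaves both inside and outside $A$ must contain the root of $S$, and since the subtrees $\{T_1(L_m)\}$ are pairwise vertex-disjoint, at most one component of $F$ straddles $A$. Every component not straddling $A$ is already the correct restriction of $T_0$; for the straddling component (if any), cutting the single edge separating its in-$A$ leaves from its out-of-$A$ leaves repairs it at the cost of one additional component. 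This produces an agreement forest for $T_0$ and $T'$ with at most one more component than $F$, whence $m(T,T')=m(T_0,T')\le d=d_\rSPR(T,T')$.

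The substantive work in both directions is verifying that the modified collection satisfies all three agreement-forest axioms, and above all that the subtrees $\hat T(L_\ell)$ (respectively $T_0(L_\ell)$) remain pairwise vertex-disjoint after the rSPR move, with each component still isomorphic to the corresponding restriction of both trees. I expect the main obstacle to be the $\le$ direction: one has to confirm that a single prune-and-regraft really yields $\hat T|(L_i\cup L_j)\simeq T'|(L_i\cup L_j)$, which hinges on the fact that, after restricting $T'$ to $L_i\cup L_j$, the subtree on $L_j$ hangs from a single edge of the subtree on $L_i$ — this uses that no third component lies strictly between $T_i$ and $T_j$ along the relevant part of $T'$ — together with the bookkeeping around the attachment point $p$ and the special role of the root component $T_\rho$. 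The $\ge$ direction, once the ``at most one straddling component'' observation is made, is comparatively routine.
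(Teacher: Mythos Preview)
The paper does not prove this theorem at all: it is quoted verbatim as \cite[Theorem 2.1]{bordewichSemple2005} and used as a black box, so there is no argument in the paper to compare your attempt against.

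For what it is worth, your plan is essentially the classical Bordewich--Semple argument. Both directions are proved there by exactly the inductions you describe: for $d_\rSPR\le m$, one detaches a component of an agreement forest that is pendant in one of the two trees and regrafts it with a single rSPR move so as to merge it with the component sitting immediately above it in the other tree; for $d_\rSPR\ge m$, one observes that a single rSPR move can force the splitting of at most one component of any agreement forest, because at most one component can straddle the moved pendant subtree. You have also correctly located where the genuine work lies, namely in re-verifying the three agreement-forest axioms (in particular vertex-disjointness) after the modification. One small point worth making explicit in your $\le$ direction: the isomorphism $T|L_i\simeq T'|L_i$ determines only an edge of $T|L_i$, which in $T$ itself corresponds to an entire path of edges inside $T(L_i)$; any choice of regraft point along that path gives the same $\hat T|(L_i\cup L_j)$, and vertex-disjointness in $\hat T$ is unaffected since the new subdivision vertex is fresh and the remaining $\hat T(L_m)$ are contained in the old $T(L_m)$.
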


\noindent Following on from Theorem~\ref{t:spr}, Bordewich and Semple~\cite{bordewichSemple2005} showed that computing the rSPR distance between two phylogenetic trees is NP-hard via a reduction from Exact Cover by $3$-Sets to maximum agreement forests.  \medskip

In order to define the hybrid number of two phylogenetic $X$-trees $T$ and $T'$, we first introduce phylogenetic networks. A {\em rooted binary phylogenetic network $N$ on $X$} is a rooted acyclic directed graph with no parallel edges that satisfies the following properties:
\begin{enumerate}
    \itemsep 0pt
    \item the unique root has in-degree zero and  out-degree one,
    \item vertices with out-degree zero have in-degree one, and the set of vertices with out-degree zero is $X$, and
    \item all other vertices have either in-degree one and out-degree two or in-degree two and out-degree one.
\end{enumerate}
Furthermore, we use $h(N)$ to denote the number of vertices with in-degree two in $N$. Since all phylogenetic networks in this paper are rooted and binary, we refer to a rooted binary phylogenetic network on $X$ as simply a {\em phylogenetic network} from now on.

We next define two particular classes of phylogenetic networks. The first features prominently in the literature. Let $N$ be a phylogenetic network on $X$. We say that $N$ is {\em tree-child} if each non-leaf vertex has a child with in-degree one. Moreover, we say that $N$ is {\em temporal} if there exists a map $t: V\rightarrow {\mathbb R}^+$ such that, for each edge $(u,v)$ in $N$, we have $t(u)=t(v)$ if $v$ has in-degree two, and $t(u) < t(v)$ if $v$ has in-degree one. 

Now, let $N$ be a phylogenetic network on $X$, and let $T$ and $T'$ be two phylogenetic $X$-trees. We say that $N$ {\em displays} $T$ if there exists a subtree of $N$ that is a subdivision of $T$. 

With this definition in hand, we set
$$h(T,T')=\min\{h(N): N \text{ is a phylogenetic network on } X \text{ that displays }T \text{ and }T'\}$$
and refer to $h(T,T')$ as the {\em hybrid number} of $T$ and $T'$. Similarly, we set 
$$h_{tc}(T,T')=\min\{h(N): N \text{ is a tree-child network on } X \text{ that displays }T \text{ and }T'\}$$
and
$$h_t(T,T')=\min\{h(N): N \text{ is a temporal tree-child network on } X \text{ that displays }T \text{ and }T'\},$$
and refer to $h_{tc}(T,T')$ and $h_t(T,T')$ as the {\em tree-child hybrid number} and the {\em temporal tree-child hybrid number}, respectively, of $T$ and $T'$. The following lemma was essentially established as part of the proof of~\cite[Theorem 2]{baroni2005bounding} and formally noted  in~\cite{humphries2013cherry}. It shows that the hybrid number of two phylogenetic trees is equal to their tree-child hybrid number.

\begin{lemma}\label{l:tc-hybrid}
Let $T$ and $T'$ be two phylogenetic $X$-trees. Then $h(T,T')=h_{tc}(T,T')$.
\end{lemma}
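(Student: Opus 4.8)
The inequality $h(T,T')\le h_{tc}(T,T')$ is immediate: every tree-child network displaying $T$ and $T'$ is in particular a phylogenetic network displaying $T$ and $T'$, so the minimum defining $h(T,T')$ is taken over a superset of the networks defining $h_{tc}(T,T')$. All the content lies in the reverse inequality $h_{tc}(T,T')\le h(T,T')$, and my plan is to obtain it by routing through acyclic agreement forests, following the proof of \cite[Theorem 2]{baroni2005bounding}. First I would recall the notion of an \emph{acyclic} agreement forest $\mathcal{F}=\{T_\rho,T_1,\dots,T_k\}$ for $T$ and $T'$: an agreement forest whose auxiliary digraph on the components --- with an arc from $T_i$ to $T_j$ whenever the root of $T(L_i)$ is an ancestor of the root of $T(L_j)$ in $T$, or the root of $T'(L_i)$ is an ancestor of the root of $T'(L_j)$ in $T'$ --- contains no directed cycle. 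By \cite{baroni2005bounding}, $h(T,T')$ equals the minimum of $|\mathcal{F}|-1$ over all acyclic agreement forests $\mathcal{F}$ for $T$ and $T'$; in one direction, an optimal network $N$ displaying $T$ and $T'$ can be cut at its $h(N)$ reticulations to yield an acyclic agreement forest with at most $h(N)+1$ components.

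The decisive step is the converse construction. Given an acyclic agreement forest $\mathcal{F}$ with $k+1$ components, one builds a phylogenetic network $N'$ on $X$ displaying both $T$ and $T'$ with exactly $k$ reticulations, by processing the components in an order refining the acyclic auxiliary digraph (starting from $T_\rho$) and, at the $j$-th step, grafting the next component into the network built so far: subdivide the edge prescribed by the $T$-structure and the edge prescribed by the $T'$-structure, introduce a new reticulation $r_j$, and let the unique child of $r_j$ be the root of (a subdivision of) that component. Since the child of every new reticulation is the root of an honest subtree, it is a tree vertex or a leaf, so no reticulation violates the tree-child property. What must be verified carefully is that the subdivision vertices created along the way do not themselves become tree vertices with two reticulation children; when the naive placement of a component would create such a vertex, one reattaches the component along a slightly different edge consistent with the ancestor relations recorded in $\mathcal{F}$ and the chosen topological order. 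This verification is, I expect, the main obstacle --- everything else is bookkeeping. Granting it, $h_{tc}(T,T')\le h(N')=k\le h(N)=h(T,T')$, which together with the trivial inequality yields the lemma.

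An alternative I would keep in reserve is a direct surgical argument: among all networks displaying $T$ and $T'$ with exactly $h(T,T')$ reticulations, choose one minimizing the number of tree vertices whose children are both reticulations, assume this number is positive, and derive a contradiction. At such a vertex $v$ with reticulation children $c_1,c_2$, one embedding must pass through $v$ via $c_1$ and the other via $c_2$ (otherwise some incident edge is used by neither embedding and can be deleted, contradicting optimality), while the second parents of $c_1$ and $c_2$ carry the complementary portions of the two embeddings; the aim is to reroute locally so as to remove $v$ without raising the reticulation count. The obstacle is the same in spirit: producing a rerouting that simultaneously preserves both displayed trees and introduces no new bad vertex elsewhere. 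I would present the agreement-forest route as the primary argument, since it reuses an already-established characterization of $h(T,T')$ and localizes the difficulty to a single structural check on a known construction.
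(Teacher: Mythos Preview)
The paper does not give its own proof of this lemma; it simply records that the result ``was essentially established as part of the proof of~\cite[Theorem 2]{baroni2005bounding} and formally noted in~\cite{humphries2013cherry}.'' Your primary route---pass to a maximum acyclic agreement forest via the characterization $h(T,T')=m_a(T,T')$, then rebuild a network from the forest by grafting components in a topological order and observe that the resulting network is tree-child---is exactly the construction in~\cite{baroni2005bounding} that the paper is invoking, so you are aligned with the paper's (cited) approach.

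One small caution on the ``obstacle'' you flag: in the Baroni--Gr\"unewald--Moulton--Semple construction, each new reticulation has as its unique child the root of the grafted component, which is a tree vertex or a leaf, so reticulations never have reticulation children; and each subdivision vertex created has one child that is the new reticulation and one child that continues along the original (tree) edge, so its other child is a tree vertex. Thus the tree-child property falls out directly without any ``reattachment'' step---you can drop the hedging about reattaching components, as no such repair is needed.
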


We next define a concept that is closely related to agreement forests. Let $T$ and $T'$ be two phylogenetic $X$-trees, and let $F=\{T_\rho,T_1,T_2,\ldots,T_k\}$ be an agreement forest for $T$ and $T'$. We say that $F$ is {\em acyclic} if the directed graph with vertex set $F$ and for which $(T_i,T_j)$ with $i,j\in\{\rho,1,2,\ldots,k\}$ and $i\ne j$ is an edge precisely if the root of $T(L(T_i))$ is an ancestor of the root of $T(L(T_j))$, or the root of $T'(L(T_i))$ is an ancestor of the root of $T'(L(T_j))$ has no directed cycles. Note that such an agreement forest always exists. Now, let $F=\{T_\rho,T_1,T_2,\ldots,T_k\}$ be an acyclic agreement forest for $T$ and $T'$. Similar to agreement forests, we say that $F$ is a {\em maximum acyclic agreement forest} for $T$ and $T'$ if $k$ is minimum over all acyclic agreement forests for $T$ and $T'$. We denote this minimum number by $m_a(T,T')$. The relevance of acyclic agreement forests is the next theorem established in~\cite[Theorem 2]{baroni2005bounding}.

\begin{theorem} \label{thm:hybridnumber}
Let $T$ and $T'$ be two  phylogenetic $X$-trees. Then $h(T,T')=m_a(T,T')$.
\end{theorem}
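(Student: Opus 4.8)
The plan is to establish the two inequalities $h(T,T')\le m_a(T,T')$ and $m_a(T,T')\le h(T,T')$ separately, in each direction converting between a phylogenetic network displaying $T$ and $T'$ and an acyclic agreement forest for $T$ and $T'$.

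For $h(T,T')\le m_a(T,T')$ I would start from a maximum acyclic agreement forest $F=\{T_\rho,T_1,\ldots,T_k\}$ and build a network $N$ with $h(N)=k$ that displays both trees. Since $F$ is acyclic, the component digraph on $F$ is acyclic and admits a topological order with $T_\rho$ (the component containing $\rho$) as a source. I would grow $N$ by drawing the pendant subtree that $T_\rho$ induces (this is the same in $T$ and in $T'$) and then reinserting $T_1,\ldots,T_k$ in topological order: to reinsert $T_i$, find the edge of $T$ on which the pendant block with label set $L_i$ hangs and, separately, the edge of $T'$ on which it hangs, subdivide the images of these two edges in the partial network constructed so far, add a new reticulation vertex $r_i$ with those two subdivision vertices as parents, and add the arc from $r_i$ to the root of $T_i$. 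Because the component whose image carries the $T$-attachment edge (resp.\ $T'$-attachment edge) is an ancestor of $T_i$ in the component digraph, it has already been placed, so both parents of $r_i$ exist; and since every added arc runs from an ancestor component into a descendant component, no directed cycle is created, so $N$ is indeed a phylogenetic network. Following at each $r_i$ the arc from its $T$-side parent recovers a subdivision of $T$, and likewise for $T'$, so $N$ displays $T$ and $T'$ with $h(N)=k$, giving $h(T,T')\le k=m_a(T,T')$.

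For the reverse bound $m_a(T,T')\le h(T,T')$ I would take a network $N$ displaying $T$ and $T'$ with $h(N)=h(T,T')$, together with subtrees $S$ and $S'$ of $N$ that are subdivisions of $T$ and of $T'$; by Lemma~\ref{l:tc-hybrid} one may assume $N$ is tree-child if this helps the bookkeeping, and any reticulation used by neither $S$ nor $S'$ can first be removed to strictly lower $h(N)$, contradicting minimality. Each of the $h(N)$ reticulations has two incoming arcs, of which $S$ and $S'$ each use at most one. At the (at most $h(N)$) reticulations $v$ where $S$ and $S'$ route differently, cut the edge of $T$ immediately above the block of $T$ embedded below $v$ and simultaneously the corresponding edge of $T'$. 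This decomposes $T$ and $T'$ into at most $h(N)+1$ pendant blocks, and I would check that the three agreement-forest axioms hold: the induced label sets partition $X\cup\{\rho\}$; each block is the same rooted tree in $T$ and in $T'$ because both arise as the image of one fixed connected piece of $N$; and the blocks are vertex-disjoint on each side. Finally this agreement forest is acyclic, since an arc of its component digraph reflects an ancestor relation in $T$ or in $T'$, which lifts to a directed path in $N$, so a directed cycle among the components would yield a directed cycle in $N$. Hence $m_a(T,T')\le h(N)=h(T,T')$, and the two inequalities together give $h(T,T')=m_a(T,T')$.

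I expect the reverse bound to be the main obstacle. The delicate part is the precise bookkeeping at the reticulations of $N$: one must identify exactly which edges of $T$ and $T'$ to cut so that the number of cuts does not exceed $h(N)$, while ensuring the resulting blocks genuinely match up as isomorphic rooted trees and are vertex-disjoint on both sides — reticulations that are used by only one of $S$ and $S'$, or whose two embeddings attach the corresponding block at quite different places, need careful treatment. By contrast, the acyclicity check in the forward construction is comparatively routine once the topological order on components is fixed, though it is worth spelling out that each added reticulation's two parent arcs respect that order.
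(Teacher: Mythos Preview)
The paper does not actually prove this theorem; it merely quotes it as \cite[Theorem~2]{baroni2005bounding} (Baroni, Gr\"unewald, Moulton, Semple, 2005). Your two-direction plan is the standard argument used in that reference: from an acyclic agreement forest build a network by attaching the components in a topological order of the component digraph (one reticulation per non-root component), and conversely from a minimum network displaying both trees cut below reticulations to obtain an acyclic agreement forest of the right size. So the overall strategy is correct and matches the original source.

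One point to flag: you invoke Lemma~\ref{l:tc-hybrid} to assume $N$ can be taken tree-child. In this paper, that lemma is explicitly described as having been ``essentially established as part of the proof of~\cite[Theorem~2]{baroni2005bounding}'', i.e.\ as a by-product of the very result you are trying to prove. Using it here is therefore circular. Fortunately you do not need it: for the bound $m_a(T,T')\le h(T,T')$ you can work with an arbitrary minimum network $N$, first deleting any reticulation arc used by neither embedding (which preserves both displays and cannot increase $h(N)$), and then cutting below each reticulation at which the two embeddings disagree. The tree-child property is not required for this argument, and indeed in the original proof the tree-child conclusion emerges from the forward construction rather than being assumed in the reverse one.
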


\noindent Similar to computing the rSPR distance, Bordewich and Semple~\cite{bordewich2007computing} showed that computing the hybrid number for 
two phylogenetic trees is NP-hard. Moreover, by Lemma~\ref{l:tc-hybrid}, it immediately follows that computing the tree-child hybrid number is also NP-hard.

\noindent{\bf Notational remark.} In what follows, we sometimes compare the $\Theta$ labelings of two phylogenetic trees where  $\Theta\in\{\HOP,\OLA,\PV\}$.  In this case, we write $f_\Theta ^T$ instead of $f_\Theta$ to make a direct reference to a phylogenetic tree $T$ with $n$ leaves. For $\Theta=\PV$, we write $f_\PV^T$ to refer to the $\PV$ labeling of $T_n\simeq T$.

\subsection{Bounding by the rSPR distance}
\label{results:rspr}
In this section, we focus on the relationship between the order-dependent measures OLA, P2V, and HOP, and the rSPR distance. 
We begin by showing that given two phylogenetic $X$-trees $T$ and $T'$, the measure $d^\ast_\OLA(T,T')$ is bounded from above by a function that is linear in the rSPR distance $d_\rSPR(T,T')$ of $T$ and $T'$.

\begin{theorem} \label{thm:OLA-20SPR}
    Let $T$ and $T'$ be two phylogenetic $X$-trees.  Then, there exists an ordering $\sigma$ such that 
    \[
    d^{\sigma}_\OLA(T,T') \leq 28\cdot d_\rSPR(T,T').
    \]
    In particular, $d^\ast_\OLA(T,T')  \leq 28\cdot d_\rSPR(T,T')$.
\end{theorem}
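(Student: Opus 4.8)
The natural strategy is to exploit Theorem~\ref{t:spr}: let $F = \{T_\rho, T_1, \ldots, T_k\}$ be a maximum agreement forest for $T$ and $T'$, so $k = d_\rSPR(T,T')$. The forest $F$ partitions $X \cup \{\rho\}$ into $k+1$ blocks such that $T$ and $T'$ restricted to each block agree. The plan is to choose an ordering $\sigma$ on $X$ that respects $F$ in the sense that the elements of each block $L_i$ receive consecutive ranks, and the blocks are arranged in some order compatible with the ancestor structure of $F$ in $T$ (and, separately, the idea is to also control the structure in $T'$). One then argues that when the OLA algorithm processes leaves block-by-block in this order, the vast majority of the interior-vertex labels $f_\OLA(s_j)$ agree between $T$ and $T'$, and only a bounded number of coordinates per block can differ. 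Since OLA distance is Hamming distance between the two vectors, a bound of the form ``at most $28$ coordinates differ per block'' yields $d^\sigma_\OLA(T,T') \le 28 k = 28\, d_\rSPR(T,T')$.

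**Key steps in order.**
First, I would set up the ordering: take an acyclic-ish ordering of the components of $F$ compatible with both $T$ and $T'$ — concretely, pick an ordering of the blocks so that within each tree the roots of the subtrees $T(L_i)$ are visited in a top-down consistent way, and within each block order the leaves so that the block's induced subtree is built up as a pendant piece. Second, I would analyze, for a single block $L_i$, how the OLA labeling of $T$ and of $T'$ behaves as leaves of $L_i$ are adjoined: because $T|L_i \simeq T'|L_i$, the sibling-and-interior-label bookkeeping inside the block is identical in the two trees \emph{up to} the identity of the attachment point where $L_i$ hangs off the already-constructed tree. The discrepancies are therefore localized to (a) the first leaf of each block (whose sibling depends on where the block attaches), (b) a bounded number of ``boundary'' vertices near where one block meets the next, and (c) possibly the interaction with the root block. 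Third, I would carefully count these discrepancies: each block contributes a constant number of potentially-mismatched coordinates, and — this is where the constant $28$ comes from — one tallies the worst case (first leaf of the block, plus the leaves whose OLA-sibling is a negative label $-j$ that got shifted because of a structural difference at the block boundary, plus a small number of coordinates in the neighboring block affected symmetrically). Fourth, I would conclude by summing over the $k$ blocks and the root block.

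**Main obstacle.**
The hard part will be Step two and three: controlling exactly which coordinates of the OLA vector can change between $T$ and $T'$ when the two trees differ only by the ``regraft location'' of each agreement-forest component. OLA labels interior vertices by a fixed negative integer $-i$ at the step leaf $\sigma^{-1}(i)$ is adjoined, and the recorded coordinate $v_i = f_\OLA(s_i)$ is the label of the sibling of that leaf — which can be either a leaf rank or one of these negative ``time-stamps''. A single structural difference (one component attaching to a different edge) can in principle perturb the sibling of several subsequently-adjoined leaves, and one must show this ripple is bounded by an absolute constant independent of block size, not merely bounded per block in a size-dependent way. Getting a clean argument that isolates the affected coordinates — essentially showing that outside a constant-size neighborhood of each ``cut edge'' of the agreement forest the OLA labelings of $T$ and $T'$ agree verbatim — is the crux, and the slack in that localization argument is presumably what inflates the constant from something like $2$ or $3$ up to $28$. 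A secondary nuisance is handling the ordering of blocks consistently for \emph{both} $T$ and $T'$ simultaneously (the agreement forest need not be acyclic, so a single linear order of components respecting ancestry in both trees may not exist), which may force processing blocks in an order that is only ``locally'' consistent and contributes additional boundary mismatches to the count.
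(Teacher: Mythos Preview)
Your approach is genuinely different from the paper's, and the source of the constant $28$ is not what you guess.

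The paper does not do a per-block analysis of an agreement forest at all. Instead it proceeds via the standard \emph{kernelization} for rSPR: it shows (Lemma~\ref{l:OLA-reduction}) that the subtree reduction and the chain reduction preserve the OLA distance exactly, provided the ordering on the larger leaf set is chosen to give the leaves of each reduced subtree (respectively chain) consecutive ranks. One then reduces $T,T'$ to a fully reduced pair $R,R'$ on a leaf set $X''$, invokes the Bordewich--Semple kernel bound (Lemma~\ref{l:reduction-size}) that $|X''|\le 28\,d_\rSPR(T,T')$, and finishes with the trivial inequality $d^{\sigma''}_\OLA(R,R')\le |X''|$. The constant $28$ is imported wholesale from the kernel bound; it has nothing to do with counting boundary mismatches near cut edges.

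Your sketch, by contrast, is essentially the argument the paper uses for a \emph{different} theorem: processing components of an \emph{acyclic} agreement forest one at a time as pendant subtrees gives one mismatch per component (Lemma~\ref{lem:subtree-labeling2} and the proof of Theorem~\ref{thm:hybrid-upper-bound}), yielding $d^*_\OLA(T,T')\le h(T,T')$. To get $d_\rSPR$ rather than the hybrid number you must work with a possibly non-acyclic maximum agreement forest, and then (as you note) there is no linear order of the components that makes each one pendant in \emph{both} trees when it is processed. You assert that the resulting ripple is bounded by an absolute constant per block, but you give no mechanism for this, and your proposed explanation of the $28$ (``slack in the localization argument'') is incorrect. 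Whether a direct per-block bound exists at all for non-acyclic forests is unclear; the paper sidesteps the issue entirely via the reduction rules.
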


In order to prove this statement, we first show that two tree reduction rules can be used to reduce the size of the label set of a pair of phylogenetic $X$-trees while preserving the OLA distance with respect to $\sigma$ between them. These two rules coincide with the two tree reduction rules used by~\cite{bordewichSemple2005} for the rSPR distance. \\

Let $T$ be a phylogenetic $X$-tree, and let $C=(x_1,x_2,\ldots,x_l)$ be a sequence of elements in $X$ with $l\geq 3$.  We say that $C$ is a {\em chain} of $T$ if the parent of $x_1$ coincides with the parent of $x_2$ or the parent of $x_2$ is the parent of the parent of $x_1$, and, for each $i\in\{3,4,\ldots,l\}$, the parent of $x_i$ is the parent of the parent of $x_{i-1}$. In what follows, we sometimes abuse notation and write $L(C)$ to denote the set $\{x_1,x_2,\ldots,x_l\}$.

Now, let $T$ and $T'$ be two phylogenetic $X$-trees. Each of the following reductions applied to $T$ and $T'$ results in two new phylogenetic trees $S$ and $S'$ with fewer leaves.

\begin{itemize}
    \item {\bf Subtree reduction:} Let $P$ be a maximal common pendant subtree of $T$ and $T'$ with at least two leaves. Obtain $S$ and $S'$  from $T$ and $T'$, respectively, by replacing $P$ with a single leaf with a new label that is not in $X$. Thus, a subtree reduction replaces a common pendant subtree with a single leaf.
    \item {\bf Chain reduction:} Let $C=(x_1,x_2,\ldots,x_l)$ be a maximal common chain of $T$ and $T'$ with $l\geq 4$. Obtain $S$ and $S'$ from $T|(X-\{x_4,x_5,\ldots,x_l\})$ and $T'|(X-\{x_4,x_5,\ldots,x_l\})$ respectively, by replacing leaf labels $x_1$, $x_2$, and $x_3$ with three new labels that are not in $X$. Thus, a chain reduction replaces a common chain of length at least four with a common chain of length three.
\end{itemize}

\noindent We next describe a $2$-step construction to obtain two phylogenetic trees from $T$ and $T'$ by repeated applications of the subtree and chain reductions. For $m\geq 0$, let $P_1,P_2,\ldots,P_m$ be distinct maximal pendant subtrees with at least two leaves that are common to $T$ and $T'$. Observe that $L(P_i)\cap L(P_j)=\emptyset$ for two distinct elements $i,j\in\{1,2,\ldots,m\}$. If $m=0$, then set $S=T$ and $S'=T'$ and, otherwise, obtain two phylogenetic $X'$-trees $S$ and $S'$  from $T$ and $T'$, respectively, by applying the subtree reduction to each $P_i$ such that each $P_i$ is replaced with a single leaf labeled $p_i$, where $p_i\notin X$. Set
$$X'=\big(X-\bigcup_{i=1}^m L(P_i)\big)\cup\{p_1,p_2,\ldots,p_m\}.$$  
Next, for $m'\geq 0$, let $C_1,C_2,\ldots,C_{m'}$ be distinct maximal chains of length at least four that are common to $S$ and $S'$. Again, by the maximality of each such chain, there exists no element in $X'$ that is a leaf of $C_i$ and $C_j$ for two distinct elements $i,j\in\{1,2,\ldots,m'\}$. If $m'=0$, set $R=S$ and $R'=S'$ and, otherwise, obtain two phylogenetic $X''$-trees $R$ and $R'$  from $S$ and $S'$, respectively, by applying the chain reduction to each $C_i=(x_1',x_2',\ldots,x_l')$ such that $x_1', x'_2, x'_3$ is replaced with $c_i^1, c^2_i, c^3_i$, respectively. Set
$$X''=\big(X'-\bigcup_{i=1}^{m'} L(C_i)\big) \cup \bigcup_{i=1}^{m'}\{c_i^1,c_i^2,c_i^3\}.$$ 
If at least one subtree or one chain reduction has been applied in the process of obtaining $R$ and $R'$ from $T$ and $T'$, respectively, then we refer to $R$ and $R'$ as a {\em reduced tree pair} with respect to $T$ and $T'$.
Moreover, if $R$ and $R'$ cannot be further reduced under the subtree or chain reduction, we refer to $R$ and $R'$ as a {\em fully reduced tree pair} with respect to $T$ and $T'$. Observe that a fully reduced tree pair with respect to $T$ and $T'$ can be obtained by applying the above $2$-step process so that all maximal pendant subtrees that are common to $T$ and $T'$ are reduced under the subtree reduction and, then, all maximal chains that are common to $S$ and $S'$ are reduced under the chain reduction.

Now let $\sigma''$ be an ordering on $X''$ such that, for each $i\in\{1,2,\ldots,m'\}$, the elements $c_i^1$, $c_i^2$, and $c_i^3$ have consecutive ranks under $\sigma''$ such that $\sigma''(c_i^1) < \sigma''(c_i^2) < \sigma''(c_i^3)$. 
Starting with $\sigma''$, obtain an ordering $\sigma'$ on $X'$ such that, for each $C_i=(x'_1,x'_2,\ldots,x'_l)$ with $i\in\{1,2,\ldots,m'\}$, the elements in $L(C_i)$ have consecutive ranks under $\sigma'$ with $\sigma'(x'_1)<\sigma'(x'_2)<\cdots < \sigma'(x_l)$ and at most one of the following holds for any two distinct elements $y$ and $y'$ in $X''$ with $\sigma''(y)<\sigma''(y')$:
\begin{enumerate}
    \itemsep 0pt
    \item If $y$ and $y'$ are both elements in $X'$, then $\sigma'(y)<\sigma'(y')$.
    \item If $y\in \{c^1_i, c^2_i, c^3_i\}$ and  $y'\in \{c^1_j, c^2_j, c^3_j\}$ with $i\ne j$, then $\sigma'(z)<\sigma'(z')$ for each pair $z$ and $z'$ of elements with $z\in L(C_i)$ and $z'\in L(C_j)$.
    \item If $y\in X'$ and $y'\in \{c^1_i, c^2_i, c^3_i\}$, then $\sigma'(y)<\sigma'(z)$ for each $z\in L(C_i)$.
    \item If $y\in \{c^1_i, c^2_i, c^3_i\}$ and $y'\in X'$, then $\sigma'(z)<\sigma'(y')$ for each $z\in L(C_i)$.
\end{enumerate} 
\noindent Lastly, obtain an ordering $\sigma$ on $X$ such that, for each $P_i$ with $i\in\{1,2,\ldots,m\}$, the elements in $L(P_i)$ have consecutive ranks under $\sigma$ and at most one of the following holds for any two distinct elements $y$ and $y'$ in $X'$ with $\sigma'(y)<\sigma'(y')$:
\begin{enumerate}
    \itemsep 0pt
    \item If $y$ and $y'$ are both elements in $X$, then $\sigma(y)<\sigma(y')$.
    \item If $y=p_i$ and  $y'=p_j$ with $i\ne j$, then $\sigma(z)<\sigma(z')$ for each pair $z$ and $z'$ of elements with $z\in L(P_i)$ and $z'\in L(P_j)$.
    \item If $y\in X$ and $y'=p_i$, then $\sigma'(y)<\sigma'(z)$ for each $z\in L(P_i)$.
    \item If $y=p_i$ and $y'\in X$, then $\sigma'(z)<\sigma'(y')$ for each $z\in L(P_i)$.
\end{enumerate} 
We refer to $\sigma$ (resp. $\sigma'$) as a {\em reduction preserving ordering} on $X$ (resp. $X'$) relative to $R$ and $R'$. 

\begin{lemma}\label{l:OLA-reduction}
    Let $T$ and $T'$ be two phylogenetic $X$-trees. Let $U$ and $U'$ be two  phylogenetic $X'$-trees that can be obtained from $T$ and $T'$, respectively, by a single application of the subtree or chain reduction. Let $\sigma'$ be an ordering on $X'$ such that, if $T$ and $T'$ have a maximal common chain $C$ that has been reduced to the chain  $(c^1,c^2,c^3)$ in $U$ and $U'$, then $c^1$, $c^2$, and $c^3$ have consecutive ranks with $\sigma'(c^1) < \sigma'(c^2) < \sigma'(c^3)$. Then, 
       \[d^{\sigma}_\OLA(T,T') = d^{\sigma'}_\OLA(S,S'), \]
    where $\sigma$ is a reduction preserving ordering on $X$ relative to $S$ and $S'$.
 \end{lemma}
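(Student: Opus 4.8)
The plan is to run Algorithm~\ref{alg:ola} in parallel on $T$, $T'$ and on the reduced trees $S:=U$, $S':=U'$, track how each intermediate tree $T_i$ relates to an intermediate tree $S_{\kappa(i)}$ under the map $\pi$ that contracts the reduced region, and conclude that the Hamming distance between the two OLA vectors is unchanged. I treat the subtree reduction in detail; the chain reduction is analogous, with heavier bookkeeping.

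Suppose a maximal common pendant subtree $P$ of $T$ and $T'$, with $t:=|L(P)|\ge 2$, is replaced by a single leaf $p$. As $\sigma$ is reduction preserving, $L(P)$ occupies a block of consecutive ranks $\{a,\ldots,a+t-1\}$, and deleting this block and inserting $p$ at rank $a$ recovers $\sigma'$ as a linear order; in particular the ranks of elements outside $L(P)$ change by $0$ (if $<a$) or by $t-1$ (if $>a+t-1$). Let $\pi$ contract the pendant copy of $P$ together with its root to a single leaf labelled $a$, and relabel every other vertex by shifting its OLA label towards $0$ by $t-1$ whenever the step or rank defining it exceeds $a+t-1$. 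The central claim is: $\pi(T_i)=S_{\kappa(i)}$ as OLA-labelled trees for all $i$, where $\kappa(i)=i$ for $i\le a-1$, $\kappa(i)=a$ for $a\le i\le a+t-1$, and $\kappa(i)=i-t+1$ for $i\ge a+t-1$; moreover $\pi$ sends the label of any vertex that can occur as a sibling $s_i$ of a leaf of rank $i\notin L(P)$ to the label of the corresponding vertex of the $S$-process, and the resulting assignment $\phi$ is injective (it fixes $\{1,\ldots,a-1\}\cup\{-2,\ldots,-a\}$, shifts labels whose defining step exceeds $a+t-1$ towards $0$ by $t-1$, and sends $-j_0$ to $a$, where $-j_0$ is the OLA label of the root of the pendant copy of $P$).

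Granting the claim (and its analogue for $T',S'$), I compare coordinates, writing $v^T_i$ for the $i$-th coordinate of the OLA vector of $T$ under $\sigma$. For $i\le a-1$ one has $T_{i-1}=S_{i-1}$ verbatim, so $v^T_i=v^S_i$ and $v^{T'}_i=v^{S'}_i$. For $i=a$ (the first leaf of $P$ adjoined) nothing is shifted yet, $T_{a-1}=S_{a-1}$, and the rank-$a$ leaf is adjoined to the same edge as $p$, so again $v^T_a=v^S_a$ and $v^{T'}_a=v^{S'}_a$. For $a+1\le i\le a+t-1$ the leaf is adjoined strictly inside the copy of $P$, so its sibling is determined by $P$ together with $\sigma$ restricted to $L(P)$ --- data common to $T$ and $T'$ --- hence $v^T_i=v^{T'}_i$; these $t-1$ coordinates have no counterpart in the shorter vectors of $S,S'$ and contribute $0$ to $d^\sigma_\OLA(T,T')$. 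For $i\ge a+t$ the leaf of rank $i$ is adjoined outside the copy of $P$, or onto the edge directly above it, so by the claim $s^T_i$ corresponds under $\pi$ to the sibling recorded at step $\kappa(i)$ of the $S$-process, giving $v^S_{\kappa(i)}=\phi(v^T_i)$ and $v^{S'}_{\kappa(i)}=\phi(v^{T'}_i)$; since $\phi$ is injective, $v^T_i=v^{T'}_i$ if and only if $v^S_{\kappa(i)}=v^{S'}_{\kappa(i)}$. Summing over coordinates yields $d^{\sigma}_\OLA(T,T')=d^{\sigma'}_\OLA(S,S')$. For the chain reduction, a maximal common chain $C=(x_1,\ldots,x_l)$ collapses to $(c^1,c^2,c^3)$; here one uses that $\sigma$ (resp.\ $\sigma'$) ranks the leaves of $C$ (resp.\ $c^1,c^2,c^3$) consecutively in the order $x_1<\cdots<x_l$ (resp.\ $c^1<c^2<c^3$), as built into the notion of reduction preserving ordering and the hypothesis on $\sigma'$. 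Growing a length-$3$ and a length-$l$ chain in increasing leaf order gives nested tree-growing processes with matching boundary behaviour (the sibling seen at the first leaf of the chain, and at the first leaf after the chain), while coordinates strictly inside the chain block are fixed by $C$ and $\sigma|_{L(C)}$; the same coordinate-matching argument then applies.

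The hard part is the central structural claim, which I would prove by induction on $i$. The two delicate points are: (i) the rank-$a$ leaf of $P$ is adjoined to exactly the edge of $T_{a-1}=S_{a-1}$ that the $S$-process uses for $p$, so the interior vertex born at step $a$ on each side is identified; and (ii) for $i\ge a+t$, no adjunction of a leaf outside $L(P)$ ever attaches strictly inside the copy of $P$ (since $P$ is pendant and the new leaf lies outside $L(P)$), so the only way such a leaf can see a vertex of the $P$-copy as its sibling is by attaching onto the edge directly above that copy, where the sibling is its root --- precisely the case absorbed by $\phi(-j_0)=a$. I would also dispatch the degenerate cases separately (for the subtree rule, $P=T=T'$; for the chain rule, $|X|$ minimal), where both distances equal $0$.
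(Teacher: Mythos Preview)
Your proposal is correct and follows essentially the same three-part decomposition as the paper's proof: coordinates before the reduced block, inside it, and after it. The paper argues directly that the first block is literally unchanged, the interior block contributes zero Hamming distance (being determined by the common structure $P$ or $C$), and the final block has its labels shifted uniformly so that equality of coordinates is preserved position-by-position. Your maps $\pi,\kappa,\phi$ package the same reasoning more formally; in particular your explicit treatment of the case where a later leaf's sibling is the root of the pendant copy of $P$ (your $\phi(-j_0)=a$) is a point the paper passes over somewhat quickly when it asserts that later leaves ``are clearly not added as siblings to vertices of $P$.'' Both arguments leave the chain case as a variant of the subtree case with the key observation that coordinates associated with $x_4,\ldots,x_l$ are forced to equal $-\sigma(x_{j-1})$ in both trees; the paper states this explicitly, while you gesture at ``matching boundary behaviour.'' Neither difference is substantive.
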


\begin{proof}
    Let $T, T', U, U'$, $\sigma$, and $\sigma'$ be as stated in the lemma. Let $\mathbf{v}^{\sigma'}_U = [u_1, u_2, \ldots, u_{|X'|}]$ and $\mathbf{v}^{\sigma'}_{U'} = [u_1', u_2', \ldots, u_{|X'|}']$ be the OLA vectors of $U$ and $U'$ under $\sigma'$. By assumption, the Hamming distance between $\mathbf{v}^{\sigma'}_U$ and $\mathbf{v}^{\sigma'}_{U'}$ equals $d^{\sigma'}_\OLA(U,U')$. 

    Now, first consider the case that $U$ and $U'$ have been obtained from $T$ and $T'$ by a single application of the subtree reduction. In this case, let $P$ denote the common maximal pendant subtree of $T$ and $T'$, and let $p \in X'-X$ denote the unique leaf present in $U$ and $U'$ but not in $T$ and $T'$. Further, suppose that $\sigma'(p)=i$ for some $i \in \{1, \ldots, |X'|\}$. In particular, suppose that $u_i$ and $u_i'$ are the elements associated with $p$ in $\mathbf{v}^{\sigma'}_U$ and $\mathbf{v}^{\sigma'}_{U'}$, respectively. Since $\sigma$ is a reduction preserving ordering on $X$ relative to $U$ and $U'$, we now claim that we can obtain the OLA vectors for $T$ and $T'$ from the OLA vectors of $U$ and $U'$ by setting
    \begin{align*}
        \mathbf{v}^{\sigma}_T &= [u_1, u_2, \ldots, u_{i-1}, u_i, v_2, \ldots, v_{|L(P)|}, \widetilde{u}_{i+1}, \ldots, \widetilde{u}_{|X'|}] \quad \text{and } \\
        \mathbf{v}^{\sigma}_{T'} &= [u_1', u_2', \ldots, u_{i-1}', u_i', v_2, \ldots, v_{|L(P)|}, \widetilde{u}_{i+1}', \ldots, \widetilde{u}_{|X'|}'],
    \end{align*}
    where, the Hamming distance between $[\widetilde{u}_{i+1},\widetilde{u}_{i+2}, \ldots, \widetilde{u}_{|X'|}]$ and $[\widetilde{u}_{i+1}',\widetilde{u}_{i+2}', \ldots, \widetilde{u}_{|X'|}']$ equals that of $[u_{i+1},u_{i+2}, \ldots, u_{|X'|}]$ and  $[u_{i+1}',u_{i+2}', \ldots, u_{|X'|}']$.
    This is due to the following facts:
        \begin{enumerate}[label={\upshape(\roman*)}]
            \item All elements of $X$ whose ranks under $\sigma'$ are less than $\sigma'(p)$ are considered first and in the same order under $\sigma'$ and $\sigma$ when iteratively building $U$ and $U'$, respectively $T$ and $T'$, to obtain the OLA vectors. Thus, the first $i-1$ coordinates of $\mathbf{v}^{\sigma}_U$ and $\mathbf{v}^{\sigma}_T$ (resp. $\mathbf{v}^{\sigma}_{U'}$ and $\mathbf{v}^{\sigma}_{T'}$) coincide. 
            
            \item Now, consider the elements of $X \cap L(P)$ and let $x_p$ the leaf with minimal rank under $\sigma$ in this set, i.e., $\sigma(x_p) = \min_{x \in L(P)} \sigma(x)$. 
            Since $P$ is a common pendant subtree of $T$ and $T'$ and the elements of $L(P)$ have consecutive ranks under $\sigma$, it follows from Algorithm~\ref{alg:ola} that the vector coordinates for $x' \in X \cap L(P)- \{x_p\}$ are identical in $T$ and $T'$ and they correspond to positions $i+2, i+3,\ldots, i+|L(P)|-1$ of $\mathbf{v}^{\sigma}_T$ and $\mathbf{v}^{\sigma}_{T'}$, respectively. Now, the element associated with $x_p$ may differ between $\mathbf{v}^{\sigma}_T$ and $\mathbf{v}^{\sigma}_{T'}$; however, it clearly coincides with the element associated with $p$, namely $u_i$ (resp. $u_i'$) in $\mathbf{v}^{\sigma'}_U$ (resp. $\mathbf{v}^{\sigma'}_{U'}$).
            
            \item Finally, all elements of $X$ whose ranks under $\sigma'$ are greater than $\sigma'(p)$ are considered last and in the same order under $\sigma'$ and $\sigma$ when iteratively building $U$ and $U'$ (resp. $T$ and $T'$) to obtain the OLA vectors. Note that for each vertex $v$ of $T$ (resp. $T'$) that is introduced after the elements in $L(P)$ are added, we have $f^T_\OLA(v) = f^{U}_\OLA(u) + |L(P)|-1$ and $f^{T'}_\OLA(v) = f^{U'}_\OLA(u) + |L(P)|-1$, where $u$ is the vertex of $U$ (resp. $U'$) corresponding to $v$ in $T$ (resp. $T'$). Since the elements of $X$ whose ranks under $\sigma'$ are greater than $\sigma'(p)$ are clearly not added as siblings to vertices of $P$, we can conclude that $u_j = u_j'$ if and only if $\widetilde{u}_j' = \widetilde{u}_j'$ for each $j \in \{i+1,i+2,\ldots, |X'|\}$.  This implies that the Hamming distance between $[\widetilde{u}_{i+1},[\widetilde{u}_{i+2}, \ldots, \widetilde{u}_{|X'|}]$ and $[\widetilde{u}_{i+1}',\widetilde{u}_{i+2}', \ldots, \widetilde{u}_{|X'|}']$ equals that of $[u_{i+1},u_{i+2}, \ldots, u_{|X'|}]$ and  $[u_{i+1}', u_{i+2}', \ldots, u_{|X'|}']$.

        \end{enumerate} 
    In summary, the Hamming distance between $\mathbf{v}^{\sigma}_T$ and $\mathbf{v}^{\sigma}_{T'}$ equals the Hamming distance of $\mathbf{v}^{\sigma'}_U$ and $\mathbf{v}^{\sigma'}_{U'}$. Thus, when $U$ and $U'$ have been obtained from $T$ and $T'$ by a single application of the subtree reduction, $d^{\sigma}_\OLA(T,T') = d^{\sigma'}_\OLA(U,U')$ as claimed.

    Next, consider the case that $U$ and $U'$ have been obtained from $T$ and $T'$ by a single application of the chain reduction. In this case, let $C=(x_1, x_2, x_3, \ldots, x_{|L(C)|})$ denote the common chain of $T$ and $T'$, and let $c^1, c^2, c^3 \in X' - X$ denote the leaves present in $U$ and $U'$ but not in $T$ and $T'$. Let $i \in \{1, \ldots, |X'|-2\}$ be such that $u_i, u_{i+1}, u_{i+2}$ (resp. $u_i', u_{i+1}', u_{i+2}'$) are the elements associated with $c^1, c^2, c^3$ in $\mathbf{v}^{\sigma'}_U$ (resp. $\mathbf{v}^{\sigma'}_{U'}$). Since $\sigma$ is a reduction preserving ordering on $X$ relative to $U$ and $U'$, we now claim that we can obtain the OLA vectors for $T$ and $T'$ from the OLA vectors of $U$ and $U'$ by setting
    \begin{align*}
        \mathbf{v}^{\sigma}_T &= [u_1, u_2, \ldots, u_{i-1}, u_i, u_{i+1}, u_{i+2}, v_4, \ldots, v_{|L(C)|}, \widetilde{u}_{i+3}, \ldots, \widetilde{u}_{|X'|}] \quad \text{and } \\
        \mathbf{v}^{\sigma}_{T'} &= [u_1', u_2', \ldots, u_{i-1}', u_i', u_{i+1}', u_{i+2}', v_4, \ldots, v_{|L(C)|}, \widetilde{u}_{i+3}', \ldots, \widetilde{u}_{|X'|}'],
    \end{align*}
     where, the Hamming distance between $[\widetilde{u}_{i+3}, \widetilde{u}_{i+4}, \ldots, \widetilde{u}_{|X'|}]$ and $[\widetilde{u}_{i+3}',\widetilde{u}_{i+4}', \ldots, \widetilde{u}_{|X'|}']$ equals that of $[u_{i+3}, u_{i+4}, \ldots, u_{|X'|}]$ and  $[u_{i+3}',u_{i+4}', \ldots, u_{|X'|}']$.
     The reasoning is similar as in the previous case. In particular, the arguments presented in (i) and (iii) above for the elements of $X - L(C)$ immediately carry over. Thus, it remains to consider the elements of $X \cap L(C)$. Since $C = (x_1, x_2, x_3, \ldots, x_{|L(C)|})$ is a common chain of $T$ and $T'$ and, by construction, the elements of $L(C)$ have consecutive ranks under $\sigma$ with $\sigma(x_1) < \sigma(x_2) < \cdots < \sigma(x_{|L(C)|})$, it follows from Algorithm~\ref{alg:ola} that the elements of $\mathbf{v}^{\sigma}_T$ and $\mathbf{v}^{\sigma}_{T'}$ associated with $x_4,x_5, \ldots, x_{|L(C)|}$ coincide (specifically, the element associated with $x_j$ is $- \sigma(x_{j-1})$ for $j \in \{4,5, \ldots, |L(C)|\}$) in both vectors. Furthermore, the elements associated with $x_1, x_2, x_3$ may differ between $\mathbf{v}^{\sigma}_T$ and $\mathbf{v}^{\sigma}_{T'}$; however, they coincide with the elements associated with $c^1, c^2, c^3$ in $\mathbf{v}^{\sigma'}_U$ (resp. $\mathbf{v}^{\sigma'}_{U'}$), namely $u_i, u_{i+1}, u_{i+2}$ (resp. $u_i', u_{i+1}', u_{i+2}'$).
     
     This shows that the Hamming distance between $\mathbf{v}^{\sigma}_T$ and $\mathbf{v}^{\sigma}_{T'}$ equals the Hamming distance of $\mathbf{v}^{\sigma'}_U$ and $\mathbf{v}^{\sigma'}_{U'}$, thereby completing the proof.
\end{proof} 

We now recall Lemma 3.3 from~\cite{bordewichSemple2005}:
\begin{lemma}\label{l:reduction-size}
    Let $T$ and $T'$ be two phylogenetic $X$-trees. Let $R$ and $R'$ be a fully reduced tree pair with respect to $T$ and $T'$, and let $X'$ be the label set of $R$. Then 
    \[ \vert X' \vert \leq 28 \cdot d_\rSPR(T,T').\]
\end{lemma}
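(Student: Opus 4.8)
The plan is to prove the bound through the agreement‑forest characterisation of the rSPR distance (Theorem~\ref{t:spr}) rather than working with rSPR moves directly. The overall strategy has two parts: (a) observe that neither reduction can increase the rSPR distance, so it is enough to treat the case where $T$ and $T'$ are already fully reduced; and (b) show that a fully reduced pair on $N$ leaves cannot have a small maximum agreement forest, i.e. $N\le 28\,m(R,R')$. Combining these with Theorem~\ref{t:spr} gives $|X'|\le 28\,d_{\rSPR}(R,R')\le 28\,d_{\rSPR}(T,T')$.

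For part (a), I would note that each reduction replaces a common pendant subtree (respectively a maximal common chain) of $T$ and $T'$ by a smaller common pendant subtree (respectively a common chain of length three). Given a maximum agreement forest for $T$ and $T'$, one may assume the affected common pendant subtree, respectively chain, lies inside a single component (pushing it into one component never increases the number of components); restricting that forest to the new, smaller label set then yields an agreement forest for the reduced trees with no more components. Hence $m(R,R')\le m(T,T')$, and by Theorem~\ref{t:spr} the rSPR distance does not increase. This is exactly Lemmas~3.1 and~3.2 of~\cite{bordewichSemple2005}, which one could instead simply cite. So it remains to prove $|X'|\le 28\,m(R,R')$ when $R$ and $R'$ admit neither a subtree nor a chain reduction.

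For part (b), fix a maximum agreement forest $F=\{T_\rho,T_1,\dots,T_k\}$ for $R$ and $R'$, so $k=m(R,R')=d_{\rSPR}(R,R')$. The forest $F$ is obtained from $R$ by deleting a set of $k$ edges and suppressing degree‑two vertices, and likewise from $R'$; mark, inside each component, the endpoints of these deleted edges and the roots of the subtrees $R(L_i)$ and $R'(L_i)$, calling them the \emph{attachment vertices} ($O(k)$ in total). Delete the attachment vertices from each $R(L_i)$ and each $R'(L_i)$: each component $T_i$ breaks into $O(k_i+1)$ pieces, where $k_i$ counts the attachment vertices (of $R$ and of $R'$) lying in $T_i$. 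The key claim is that every such piece is either a common pendant subtree of $R$ and $R'$ — hence, since no subtree reduction is available, a single leaf — or is contained in a maximal common chain of $R$ and $R'$ — hence, since no chain reduction is available, contributes at most three leaves. Thus $|L(T_i)|$ is at most a constant times $k_i+1$; summing over the $k+1$ components and using $\sum_i(k_i+1)=O(k)$, and optimising the per‑piece bound, yields $|X'|\le 28k = 28\,m(R,R')$.

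The main obstacle is the key claim in part (b) together with the associated bookkeeping. A piece of a component is a common subtree of $R$ and $R'$ with matching structure, but to conclude that it is a genuine common pendant subtree of $R$ and $R'$ — not merely of the component $T_i$ — one must verify that no edge of $R$ or $R'$ outside $T_i$ attaches along it; this is where removing the attachment vertices, together with the vertex‑disjointness property~3 of agreement forests, is used. The chain case is the fussiest: a chain may carry arbitrary pendant subtrees on its spine, so one has to show that a long uninterrupted common pendant‑cherry segment of $R$ and $R'$ would extend to a maximal common chain of length at least four, contradicting full reducedness, and then carefully track how such segments are ``capped'' at their two ends and how they can overlap with pendant pieces near attachment vertices. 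Tallying the worst case over all these local configurations is precisely what produces the constant $28$ (rather than a smaller one). Everything else — the reduction step and the global count $\sum_i(k_i+1)=O(k)$ — is routine.
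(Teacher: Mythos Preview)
The paper does not prove this lemma itself: it is simply quoted verbatim as Lemma~3.3 of Bordewich and Semple~\cite{bordewichSemple2005} and used as a black box. Your outline is a faithful high-level summary of the argument given in that reference --- first showing that the subtree and chain reductions do not increase the rSPR distance (their Lemmas~3.1 and~3.2, which you correctly identify), and then bounding the number of leaves of a fully reduced pair by analysing how a maximum agreement forest sits inside the two trees and using that every ``piece'' between attachment points is either a single leaf or a short chain. So your approach and the paper's coincide in the only sense that matters here; the detailed bookkeeping that pins down the constant~$28$, which you flag as the main obstacle, is indeed the content of the cited lemma rather than of the present paper.
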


\begin{proof}[Proof of Theorem~\ref{thm:OLA-20SPR}]
Let $T$ and $T'$ be two phylogenetic $X$-trees, let $S$ and $S'$ be two phylogenetic $X'$-trees obtained from $T$ and $T'$ by reducing all maximal common pendant subtrees, and let $R$ and $R'$ be two phylogenetic $X''$-trees obtained from $S$ and $S'$ by reducing all maximal common chains. 
By Lemma~\ref{l:reduction-size}, $\vert X'' \vert \leq 28 \cdot d_\rSPR(T,T')$. 

Let $\sigma''$ be an ordering on $X''$ such that, if $S$ and $S'$ have a maximal common chain $C$ that has been reduced to the chain  $(c^1,c^2,c^3)$ in $R$ and $R'$, then $c^1$, $c^2$, and $c^3$ have consecutive ranks with $\sigma''(c^1) < \sigma''(c^2) < \sigma''(c^3)$. Further, let $\sigma'$ and $\sigma$ be orderings on $X'$ and $X$, respectively, satisfying the conditions stated above, such that $\sigma$ (resp. $\sigma'$) is a reduction preserving ordering on $X$ (resp. $X'$) relative to $R$ and $R'$.

We now repeatedly apply Lemma~\ref{l:OLA-reduction} with respect to chain reductions to conclude that 
\[ d^{\sigma'}_\OLA (S,S') = d^{\sigma''}_\OLA (R,R').\]
Next, we repeatedly apply Lemma~\ref{l:OLA-reduction} with respect to subtree reductions to conclude that 
\[ d^{\sigma}_\OLA (T,T') = d^{\sigma'}_\OLA (S,S').\]

In summary, this implies that based on an ordering $\sigma''$ on $X''$, for a reduction preserving ordering $\sigma$ on $X$ relative to $R$ and $R'$, we have
\[ d^{\sigma}_\OLA (T,T') =  d^{\sigma''}_\OLA (R,R').\]

Now, clearly $d^{\sigma''}_\OLA (R,R') \leq |X''|$, implying that
\[ d^{\sigma}_\OLA (T,T') =  d^{\sigma''}_\OLA (R,R') \leq |X''| \leq 28 \cdot d_\rSPR(T,T'),\]
which completes the proof.
\end{proof}

\begin{figure}[t]
\begin{center}
\begin{tikzpicture}[scale=.8]

    \node[fill=black,circle,inner sep=1pt, label=below:$x_1$]   at (0.5,0.5) {};
    \node[fill=black,circle,inner sep=1pt, label=below:$x_2$]   at (1.5,0.5) {};
    \node[fill=black,circle,inner sep=1pt, label=below:$x_3$]   at (2,1) {};
    \node[fill=black,circle,inner sep=1pt, label=below:$x_7$]   at (2.5,1.5) {};
    \node[fill=black,circle,inner sep=1pt, label=below:$x_8$]   at (3.5,1.5) {};
    \node[fill=black,circle,inner sep=1pt, label=below:$x_6$]   at (4,1) {};
    \node[fill=black,circle,inner sep=1pt, label=below:$x_4$]   at (4.5,0.5) {};
    \node[fill=black,circle,inner sep=1pt, label=below:$x_5$]   at (5.5,0.5) {};
    \node[fill=black,circle,inner sep=1pt,label=below:$x_9$]   at (4.5,3.5) {};
    \node[fill=black,circle,inner sep=1pt]   at (1,1) {};
    \node[fill=black,circle,inner sep=1pt]   at (1.5,1.5) {};
    \node[fill=black,circle,inner sep=1pt]   at (2,2) {};
    \node[fill=black,circle,inner sep=1pt]   at (3,3) {};
    \node[fill=black,circle,inner sep=1pt]   at (4,2) {};
    \node[fill=black,circle,inner sep=1pt]   at (5,1) {};
    \node[fill=black,circle,inner sep=1pt]   at (4.5, 1.5) {};
    \node[fill=black,circle,inner sep=1pt]   at (4,4) {};
    \node[fill=black,circle,inner sep=1pt, label=above:$\rho$]   at (4, 5) {};
    \draw (0.5,0.5) -- (4,4);
    \draw (5.5, 0.5) -- (3,3);
    \draw (1,1) -- (1.5, 0.5);
    \draw (1.5,1.5) -- (2,1);
    \draw (2,2) -- (2.5, 1.5);
    \draw (4,2) -- (3.5, 1.5);
    \draw (4.5,1.5) -- (4,1);
    \draw (5,1) -- (4.5,0.5);
    \draw (4,4) -- (4.5,3.5);
    \draw(4,5) -- (4,4);
    \node at (3,5.5) {\large $S$};

    \node[fill=black,circle,inner sep=1pt, label=below:$x_7$]   at (7.5,0.5) {};
    \node[fill=black,circle,inner sep=1pt, label=below:$x_8$]   at (8.5,0.5) {};
    \node[fill=black,circle,inner sep=1pt, label=below:$x_9$]   at (9,1) {};
    \node[fill=black,circle,inner sep=1pt, label=below:$x_1$]   at (9.5,1.5) {};
    \node[fill=black,circle,inner sep=1pt, label=below:$x_2$]   at (10,2) {};
    \node[fill=black,circle,inner sep=1pt, label=below:$x_3$]   at (10.5,2.5) {};
    \node[fill=black,circle,inner sep=1pt, label=below:$x_4$]   at (11,3) {};
    \node[fill=black,circle,inner sep=1pt, label=below:$x_5$]   at (11.5,3.5) {};
    \node[fill=black,circle,inner sep=1pt, label=below:$x_6$]   at (12,4) {};
    \node[fill=black,circle,inner sep=1pt]   at (8,1) {};
    \node[fill=black,circle,inner sep=1pt]   at (8.5,1.5) {};
    \node[fill=black,circle,inner sep=1pt]   at (9,2) {};
    \node[fill=black,circle,inner sep=1pt]   at (9.5,2.5) {};
    \node[fill=black,circle,inner sep=1pt]   at (10,3) {};
    \node[fill=black,circle,inner sep=1pt]   at (10.5,3.5) {};
    \node[fill=black,circle,inner sep=1pt]   at (11,4) {};
    \node[fill=black,circle,inner sep=1pt]   at (11.5,4.5) {};
    \node[fill=black,circle,inner sep=1pt, label=above:$\rho$]   at (11.5,5) {};
    \draw(11.5,5)--(11.5,4.5);
    \draw(11.5,4.5)--(7.5,0.5);
    \draw(8,1)--(8.5,0.5);
    \draw(8.5,1.5)--(9,1);
    \draw(9,2)--(9.5,1.5);
    \draw(9.5,2.5)--(10,2);
    \draw(10,3)--(10.5,2.5);
    \draw(10.5,3.5)--(11,3);
    \draw(11,4)--(11.5,3.5);
    \draw(11.5,4.5)--(12,4);
    \node at (10,5.5) {\large $S'$};

    \node[fill=black,circle,inner sep=1pt, label=below:$x_1$]   at (12.5,0.5) {};
    \node[fill=black,circle,inner sep=1pt, label=below:$x_2$]   at (13.5,0.5) {};
    \node[fill=black,circle,inner sep=1pt, label=below:$x_3$]   at (14,1) {};
    \node[fill=black,circle,inner sep=1pt, label=below:$x_4$]   at (14.5,1.5) {};
    \node[fill=black,circle,inner sep=1pt, label=below:$x_5$]   at (15,2) {};
    \node[fill=black,circle,inner sep=1pt, label=below:$x_6$]   at (15.5,2.5) {};
    \node[fill=black,circle,inner sep=1pt, label=below:$x_8$]   at (16,3) {};
    \node[fill=black,circle,inner sep=1pt, label=below:$x_7$]   at (16.5,3.5) {};
    \node[fill=black,circle,inner sep=1pt, label=below:$x_9$]   at (17,4) {};
    \node[fill=black,circle,inner sep=1pt]   at (13,1) {};
    \node[fill=black,circle,inner sep=1pt]   at (13.5,1.5) {};
    \node[fill=black,circle,inner sep=1pt]   at (14,2) {};
    \node[fill=black,circle,inner sep=1pt]   at (14.5,2.5) {};
    \node[fill=black,circle,inner sep=1pt]   at (15,3) {};
    \node[fill=black,circle,inner sep=1pt]   at (15.5,3.5) {};
    \node[fill=black,circle,inner sep=1pt]   at (16,4) {};
    \node[fill=black,circle,inner sep=1pt]   at (16.5,4.5) {};
    \node[fill=black,circle,inner sep=1pt, label=above:$\rho$]   at (16.5,5) {};
    \draw(16.5,5)--(16.5,4.5);
    \draw(16.5,4.5)--(12.5,0.5);
    \draw(13,1)--(13.5,0.5);
    \draw(13.5,1.5)--(14,1);
    \draw(14,2)--(14.5,1.5);
    \draw(14.5,2.5)--(15,2);
    \draw(15,3)--(15.5,2.5);
    \draw(15.5,3.5)--(16,3);
    \draw(16,4)--(16.5,3.5);
    \draw(16.5,4.5)--(17,4);
    \node at (15,5.5) {\large $S''$};
    
\end{tikzpicture}
\end{center}
\caption{Two phylogenetic $X$-trees $S$ and $S'$ with $d_\OLA^\ast(S,S') \geq 4$ (verified via exhaustive enumeration of all orderings), whereas $d_\rSPR(S,S') = 3$. Moreover, $d_\OLA^\ast(S,S'')=1$ and $d_\OLA^\ast(S',S'')=2$, implying that the OLA measure is not a distance since the triangle quality is violated for $S$, $S'$, and $S''$.}
\label{fig:OLA-rSPR}
\end{figure}
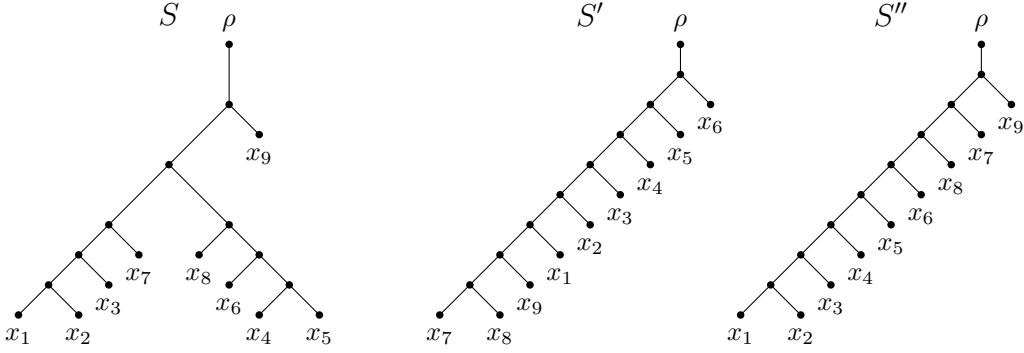

We end this section with three remarks on the relationship between the rSPR distance and the OLA and P2V measures, and an open problem raised in~\cite{ola}. First, although Theorem~\ref{thm:OLA-20SPR} shows that $d_\OLA^*(T,T')$ is bounded above by a function that is linear in $d_\rSPR(T,T')$ for any two phylogenetic trees $T$ and $T'$, there also exist pairs $S$ and $S'$ of phylogenetic trees such that $d_\rSPR(S,S')<d_\OLA^*(S,S')$. For example,  Figure~\ref{fig:OLA-rSPR} shows such a tree pair for which $3=d_\rSPR(S,S')<d_\OLA^*(S,S')=4$.

Second, for two phylogenetic $X$-trees $T$ and $T'$, Richman et al.~\cite[page 10]{ola} ask whether or not it is possible to bound the difference
$$\left|d_\OLA^\sigma(T,T')-d_\OLA^{\sigma'}(T,T')\right|,$$
where $\sigma$ and $\sigma'$ are two orderings  on $X$. We answer this question negatively by providing, for all $n\ge 1$, two phylogenetic trees on $n+1$ leaves for which the difference in OLA distances with respect to two distinct orderings can be as large as $n-2$. Consider the two phylogenetic $X$-trees $T$ and $T'$ with $|X|=n+1$ as shown in Figure~\ref{fig:reshuffling}. 
Recall, that we have $d_\OLA^\sigma(T,T')\leq n-1$ for any ordering $\sigma$ on $X$. Let $\sigma_1$ be the ordering on $X$ with 
$$\sigma_1(x_n)<\sigma_1(x_{n-1})<\cdots<\sigma_1(x_1)<\sigma_1(y),$$
and let $\sigma_{n-1}$ be the ordering on $X$ with $$\sigma_{n-1}(x_n)<\sigma_{n-1}(x_{n-1})<\sigma_{n-1}(y)<\sigma_{n-1}(x_{n-2})<\sigma_{n-1}(x_{n-3})<\cdots<\sigma_{n-1}(x_1).$$ It is straightforward to check that $d_\OLA^{\sigma_1}(T,T')=1$ and $d_\OLA^{\sigma_{n-1}}(T,T')=n-1$.
In general, for each $i\in\{1,2,\ldots,n-1\}$, there exists an ordering $\sigma_i$ on $X$ such that $d_\OLA^{\sigma_i}(T,T')=i$. More precisely, $\sigma_i$ is the unique ordering such that
\begin{enumerate}
\itemsep 0pt
\item $\sigma_i(y)=n+1-(i-1)$ and
\item $\sigma_i(x_n)<\sigma_i(x_{n-1})<\cdots<\sigma_i(x_1)$.
\end{enumerate}
Hence, for two arbitrary orderings $\sigma$ and $\sigma'$ on $X$, we have
$$\left|d^\sigma_\OLA(T, T') - d^{\sigma'}_\OLA(T, T')\right|\le n-2,$$
and this bound is sharp.

\begin{figure}
    \centering
    \begin{tikzpicture}[scale=.8]
    \node[fill=black,circle,inner sep=1pt, label=below:$x_1$]   at (0.5,0.5) {};
    \node[fill=black,circle,inner sep=1pt, label=below:$x_2$]   at (1.5,0.5) {};
    \node[fill=black,circle,inner sep=1pt, label=below:$x_3$]   at (2,1) {};
    \node[fill=black,circle,inner sep=1pt, label=below:$x_4$]   at (2.5,1.5) {};
    \node[fill=black,circle,inner sep=1pt, label=below:$x_{n-1}$]   at (3.5,2.5) {};
    \node[fill=black,circle,inner sep=1pt, label=below:$x_{n}$]   at (4,3) {};
    \node[fill=black,circle,inner sep=1pt, label=below:$y$]   at (4.5,3.5) {};
    \node[fill=black,circle,inner sep=1pt]   at (1,1) {};
    \node[fill=black,circle,inner sep=1pt]   at (1.5,1.5) {};
    \node[fill=black,circle,inner sep=1pt]   at (2,2) {};
    \node[fill=black,circle,inner sep=1pt]   at (4,4) {};
    \node[fill=black,circle,inner sep=1pt]   at (3.5,3.5) {};
    \node[fill=black,circle,inner sep=1pt]   at (3,3) {};
    \draw(0.5,0.5)--(2.25,2.25);
    \draw(2.25,2.25)--(2.75,2.75)[dotted];
    \draw(2.75,2.75)--(4,4);
    \draw(4,4)--(4,4.5);
    \draw(4,4)--(4.5,3.5);
    \draw(3.5,3.5)--(4,3);
    \draw(3,3)--(3.5,2.5);
    \draw(2,2)--(2.5,1.5);
    \draw(1.5,1.5)--(2,1);
    \draw(1,1)--(1.5,0.5); 
    \node[fill=black,circle,inner sep=1pt, label=above:$\rho$]   at (4,4.5) {};
    \node at (3,4.5) {\large $T$};

    \node[fill=black,circle,inner sep=1pt, label=below:$y$]   at (4.5,0.5) {};
    \node[fill=black,circle,inner sep=1pt, label=below:$x_1$]   at (5.5,0.5) {};
    \node[fill=black,circle,inner sep=1pt, label=below:$x_2$]   at (6,1) {};
    \node[fill=black,circle,inner sep=1pt, label=below:$x_3$]   at (6.5,1.5) {};
    \node[fill=black,circle,inner sep=1pt, label=below:$x_{n-2}$]   at (7.5,2.5) {};
    \node[fill=black,circle,inner sep=1pt, label=below:$x_{n-1}$]   at (8,3) {};
    \node[fill=black,circle,inner sep=1pt, label=below:$x_n$]   at (8.5,3.5) {};
    \node[fill=black,circle,inner sep=1pt]   at (5,1) {};
    \node[fill=black,circle,inner sep=1pt]   at (5.5,1.5) {};
    \node[fill=black,circle,inner sep=1pt]   at (6,2) {};
    \node[fill=black,circle,inner sep=1pt]   at (8,4) {};
    \node[fill=black,circle,inner sep=1pt]   at (7.5,3.5) {};
    \node[fill=black,circle,inner sep=1pt]   at (7,3) {};
    \draw(4.5,0.5)--(6.25,2.25);
    \draw(6.25,2.25)--(6.75,2.75)[dotted];
    \draw(6.75,2.75)--(8,4);
    \draw(8,4)--(8,4.5);
    \draw(8,4)--(8.5,3.5);
    \draw(7.5,3.5)--(8,3);
    \draw(7,3)--(7.5,2.5);
    \draw(6,2)--(6.5,1.5);
    \draw(5.5,1.5)--(6,1);
    \draw(5,1)--(5.5,0.5); 
    \node[fill=black,circle,inner sep=1pt, label=above:$\rho$]   at (8,4.5) {};
    \node at (7,4.5) {\large $T'$};
    \end{tikzpicture}

    \caption{Two phylogenetic $X$-trees $T$ and $T'$ with $|X|=n+1$ for which the OLA distance differs by as much as $n-2$, depending on the ordering.}
    \label{fig:reshuffling}
\end{figure}
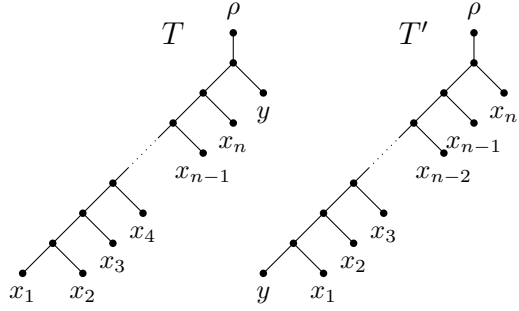

Third, we briefly turn to the $\PV$ measure of two phylogenetic trees $T$ and $T'$ and show that, similar to OLA, there is no clear relationship between $d_\rSPR(T,T')$ and $d_\PV^*(T,T')$ because $d_\PV^*(T,T')$ can be strictly greater or strictly smaller than $d_\rSPR(T,T')$. Figure~\ref{fig:P2V-SPR} shows an example of two phylogenetic trees $T$ and $T'$ with $2=d_\rSPR(T,T')>d_\PV^*(T,T')=1$, but also shows an example of two phylogenetic trees $S$ and $S'$ with $2=d_\rSPR(S,S')<d_\PV^*(S,S')=3$.

Despite the somewhat negative results presented in this section, we will see in the next section that there is a direct relationship between $h(T,T')$ and $d_\HOP^*(T,T')$ for any two phylogenetic trees $T$ and $T'$ on the same label set. Indeed, we will show that both measures are equivalent.

\begin{figure}[t]
\begin{center}
    \begin{tikzpicture}[scale=.8]

     \node[fill=black,circle,inner sep=1pt, label=below:$x_1$]   at (0.5,0.5) {};
     \node[fill=black,circle,inner sep=1pt, label=below:$x_3$]   at (1.5,0.5) {};
     \node[fill=black,circle,inner sep=1pt, label=below:$x_4$]   at (2,1) {};
     \node[fill=black,circle,inner sep=1pt, label=below:$x_2$]   at (2.5,1.5) {};
     \node[fill=black,circle,inner sep=1pt]   at (1,1) {};
     \node[fill=black,circle,inner sep=1pt]   at (1.5,1.5) {};
     \node[fill=black,circle,inner sep=1pt]   at (2,2) {};
     \node[fill=black,circle,inner sep=1pt, label=above:$\rho$]   at (2,2.5) {};
     \draw(2,2.5)--(2,2);
     \draw(2,2)--(0.5,0.5);
     \draw(2,2)--(2.5,1.5);
     \draw(1.5,1.5)--(2,1);
     \draw(1,1)--(1.5,0.5);
     \node at (1,3) {\large $T$};

     \node[fill=black,circle,inner sep=1pt, label=below:$x_2$]   at (3,0.5) {};
     \node[fill=black,circle,inner sep=1pt, label=below:$x_3$]   at (4,0.5) {};
     \node[fill=black,circle,inner sep=1pt, label=below:$x_4$]   at (4.5,1) {};
     \node[fill=black,circle,inner sep=1pt, label=below:$x_1$]   at (5,1.5) {};
     \node[fill=black,circle,inner sep=1pt]   at (3.5,1) {};
     \node[fill=black,circle,inner sep=1pt]   at (4,1.5) {};
     \node[fill=black,circle,inner sep=1pt]   at (4.5,2) {};
     \node[fill=black,circle,inner sep=1pt, label=above:$\rho$]   at (4.5,2.5) {};
     \draw(4.5,2.5)--(4.5,2);
     \draw(4.5,2)--(3,0.5);
     \draw(4.5,2)--(5,1.5);
     \draw(4,1.5)--(4.5,1);
     \draw(3.5,1)--(4,0.5);
     \node at (3.5,3) {\large $T'$};

     \node[fill=black,circle,inner sep=1pt, label=below:$x_1$]   at (6.5,-0.5) {};
     \node[fill=black,circle,inner sep=1pt, label=below:$x_2$]   at (7.5,-0.5) {};
     \node[fill=black,circle,inner sep=1pt, label=below:$x_3$]   at (8,0) {};
     \node[fill=black,circle,inner sep=1pt, label=below:$x_4$]   at (8.5,0.5) {};
     \node[fill=black,circle,inner sep=1pt, label=below:$x_5$]   at (9,1) {};
     \node[fill=black,circle,inner sep=1pt, label=below:$x_6$]   at (9.5,1.5) {};
     \node[fill=black,circle,inner sep=1pt]   at (7,0) {};
     \node[fill=black,circle,inner sep=1pt]   at (7.5,0.5) {};
     \node[fill=black,circle,inner sep=1pt]   at (8,1) {};
     \node[fill=black,circle,inner sep=1pt]   at (8.5,1.5) {};
     \node[fill=black,circle,inner sep=1pt]   at (9.5,1.5) {};
     \node[fill=black,circle,inner sep=1pt]   at (9,2) {};
     \node[fill=black,circle,inner sep=1pt, label=above:$\rho$]   at (9,2.5) {};
     \draw(9,2.5)--(9,2);
     \draw(9,2)--(6.5,-0.5);
     \draw(9,2)--(9.5,1.5);
     \draw(8.5,1.5)--(9,1);
     \draw(8,1)--(8.5,0.5);
     \draw(7,0)--(7.5,-0.5);
     \draw(7.5,0.5)--(8,0);
     \node at (8,3) {\large $S$};

     \node[fill=black,circle,inner sep=1pt, label=below:$x_4$]   at (9.5,-0.5) {};
     \node[fill=black,circle,inner sep=1pt, label=below:$x_5$]   at (10.5,-0.5) {};
     \node[fill=black,circle,inner sep=1pt, label=below:$x_6$]   at (11,0) {};
     \node[fill=black,circle,inner sep=1pt, label=below:$x_1$]   at (11.5,0.5) {};
     \node[fill=black,circle,inner sep=1pt, label=below:$x_2$]   at (12,1) {};
     \node[fill=black,circle,inner sep=1pt, label=below:$x_3$]   at (12.5,1.5) {};
     \node[fill=black,circle,inner sep=1pt]   at (10,0) {};
     \node[fill=black,circle,inner sep=1pt]   at (10.5,0.5) {};
     \node[fill=black,circle,inner sep=1pt]   at (11,1) {};
     \node[fill=black,circle,inner sep=1pt]   at (11.5,1.5) {};
     \node[fill=black,circle,inner sep=1pt]   at (12.5,1.5) {};
     \node[fill=black,circle,inner sep=1pt]   at (12,2) {};
     \node[fill=black,circle,inner sep=1pt, label=above:$\rho$]   at (12,2.5) {};
     \draw(12,2.5)--(12,2);
     \draw(12,2)--(9.5,-0.5);
     \draw(12,2)--(12.5,1.5);
     \draw(11.5,1.5)--(12,1);
     \draw(11,1)--(11.5,0.5);
     \draw(10,0)--(10.5,-0.5);
     \draw(10.5,0.5)--(11,0);
     \node at (11,3) {\large $S'$};

     \node[fill=black,circle,inner sep=1pt, label=below:$x_4$]   at (14.5,-0.5) {};
     \node[fill=black,circle,inner sep=1pt, label=below:$x_5$]   at (15.5,-0.5) {};
     \node[fill=black,circle,inner sep=1pt, label=below:$x_6$]   at (16,0) {};
     \node[fill=black,circle,inner sep=1pt, label=below:$x_3$]   at (18,0) {};
     \node[fill=black,circle,inner sep=1pt, label=below:$x_2$]   at (18.5,-0.5) {};
     \node[fill=black,circle,inner sep=1pt, label=below:$x_1$]   at (19.5,-0.5) {};
     \node[fill=black,circle,inner sep=1pt]   at (15,0) {};
     \node[fill=black,circle,inner sep=1pt]   at (19,0) {};
     \node[fill=black,circle,inner sep=1pt]   at (18.5,0.5) {};
     \node[fill=black,circle,inner sep=1pt]   at (18,0) {};
     \node[fill=black,circle,inner sep=1pt]   at (15.5,0.5) {};
     \node[fill=black,circle,inner sep=1pt]   at (17,2) {};
     \node[fill=black,circle,inner sep=1pt, label=above:$\rho$]   at (17,2.5) {};
     \draw(17,2.5)--(17,2);
     \draw(17,2)--(14.5,-0.5);
     \draw(17,2)--(19.5,-0.5);
     \draw(15.5,0.5)--(16,0);
     \draw(15,0)--(15.5,-0.5);
     \draw(19,0)--(18.5,-0.5);
     \draw(18.5,0.5)--(18,0);
     \node at (16,3) {\large $R$};
     
    \end{tikzpicture}
\end{center}
\caption{Left: Two phylogenetic $X$-trees $T$ and $T'$ with $d_\rSPR(T,T')=2$ and $d^\ast_\PV(T,T') = 1$. Middle: Two phylogenetic $X$-trees $S$ and $S'$ with $d_\rSPR(S,S')=2$ and $d^\ast_\PV(S,S') = 3$. Middle and right: Three phylogenetic $X$-trees $S$, $S'$, and $R$ with $h(S,S')= d^\ast_\HOP(S,S')=3, h(S,R)=d^\ast_\HOP(S,R)=1$, and $h(S',R)= d^\ast_\HOP(S', R) =1$, showing that the HOP measure is not a distance. Similarly, as $d^\ast_\PV(S,R)= d^\ast_\PV(S',R)=1$, whereas $d^\ast_\PV(S,S') = 3$, the P2V measure is also not a distance.}
\label{fig:P2V-SPR}
\end{figure}
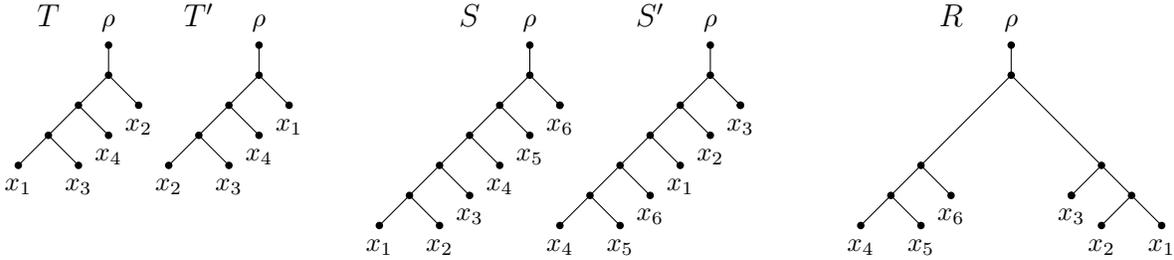

\subsection{Bounding by the hybrid number}
\label{results:hybrid}

In this section, we relate each of the three order-dependent measures to the hybrid number. Specifically, for two phylogenetic $X$-trees $T$ and $T'$, we establish that $d^*_\HOP(T,T')$ and $h(T,T')$ are equivalent, which also proves the conjecture of \cite[Remark~3, p.~9]{hop} concerning the computational complexity of computing $d^*_\HOP(T,T')$. Further, we show that $d^*_\OLA(T,T')$ and  $d^*_\PV(T,T')$ are bounded above by the hybrid number. However, unlike the HOP measure, there are phylogenetic trees for which the hybrid number is strictly larger than the  OLA and P2V measures.

\begin{theorem}\label{thm:hybrid-upper-bound}
Let $T$ and $T'$ be two  phylogenetic $X$-trees. Then, for each $\Theta\in\{\HOP,\OLA,\PV\}$,
$$d^*_\Theta(T, T')\le h(T, T').$$
\end{theorem}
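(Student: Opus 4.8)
The plan is to use the agreement-forest characterization of the hybrid number, namely $h(T,T') = m_a(T,T')$ from Theorem~\ref{thm:hybridnumber}, and to extract from a maximum acyclic agreement forest an ordering $\sigma$ of $X$ under which the three vectors of $T$ and $T'$ disagree in a controlled way. Concretely, fix a maximum acyclic agreement forest $F = \{T_\rho, T_1, \ldots, T_k\}$ for $T$ and $T'$, so that $k = h(T,T')$. Since $F$ is acyclic, the digraph on its components admits a topological ordering; relabel so that $T_\rho, T_1, \ldots, T_k$ is such an order, with $T_\rho$ (the component containing the root) placed first. Now define $\sigma$ by listing the leaves of $L_\rho$ first, then the leaves of $L_1$, then $L_2$, and so on, so that each $L_i$ occupies a block of consecutive ranks; within each block, order the leaves by a canonical rule depending only on the isomorphism type of $T|L_i \simeq T'|L_i$ (e.g.\ the left-to-right order in the planar drawing in which the children of each vertex are sorted by the smallest leaf label below them), so that the block-internal order is the same whether read off $T$ or off $T'$.

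First I would prove the structural heart of the argument: with this $\sigma$, when one grows $T$ (respectively $T'$) leaf by leaf in order of rank, each block $L_i$ appears as a pendant subtree isomorphic to $T|L_i = T'|L_i$ at the moment its last leaf is attached, and moreover the part of the growth taking place ``inside'' $L_i$, after its first leaf has been attached, is literally identical for $T$ and for $T'$. The ingredients are (i) vertex-disjointness of the subtrees $T(L_i)$ and $T'(L_i)$ and (ii) the topological ordering, which together force a component to attach \emph{below}, rather than around, the blocks preceding it; and the fact that all leaves of $L_i$ have strictly smaller ranks than every leaf in a later block, which localises the min-over-descendant-leaves quantities $m(\cdot)$ used by HOP. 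This yields that, for each of OLA, P2V, and HOP, a vector coordinate (or path) can fail to agree between $T$ and $T'$ only at the coordinate/path tied to the first leaf of a non-root block $L_i$ with $i\ge 1$ --- of which there are at most $k$ --- together with a bounded number of hard-coded initial coordinates that agree trivially. For OLA this is immediate, since an interior label is frozen at its creation step and the within-block creation steps coincide; hence $d^\sigma_\OLA(T,T') \le k = h(T,T')$.

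For P2V I would then check that the per-step recomputation in Step~\ref{alg:cherry} of Algorithm~\ref{alg:p2v}, when restricted to a pendant block that is a clade of the current restricted tree, depends only on that block and on its position among the ranks, not on how the earlier blocks were assembled; this is exactly the localisation point above, and gives $d^\sigma_\PV(T,T') \le k$. For HOP I would relate the HOP vector to the path decomposition and argue that $m(\cdot)$ and $f_\HOP(\cdot)$, restricted to a genuine internal vertex of a component $T_i$, are determined by $T|L_i$ alone; hence the portion of any path $P_j$ lying inside $T(L_i)$, together with the HOP labels along it, is common to $T$ and $T'$, so $\text{LCS}(\mathbf{u}_j, \mathbf{v}_j)$ is at least the length of that common within-component portion. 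Summing, $\text{Sim}^\sigma_\HOP(T,T') \ge n - k$ (with care that the $\le k$ boundary-crossing paths are the only ones that can lose, and each loses at most one unit), so $d^\sigma_\HOP(T,T') \le k$. In every case $d^*_\Theta(T,T') \le d^\sigma_\Theta(T,T') \le h(T,T')$.

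I expect the HOP part to be the main obstacle. Unlike OLA, where each coordinate is tied to a single vertex whose label is frozen at creation, the HOP coordinates are whole paths compared by longest common subsequence, and a path crossing a component boundary can a priori be long with little overlap between $T$ and $T'$. Making the bound come out to exactly $k$ therefore requires a precise charging argument: one must show either that each boundary-crossing path loses at most one unit of LCS relative to its matched length, or that the ordering can be refined so that each component attachment corresponds to a very short path. Pinning down this charging --- that the total LCS deficit over all $n$ paths is at most $k$ --- is where the real work lies; the OLA and P2V cases, and the clade/localisation lemma, are then comparatively routine.
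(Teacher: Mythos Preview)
Your approach is essentially the paper's, just unrolled: the paper argues by induction on $n$, using acyclicity to find a component $T_k$ that is pendant in both $T$ and $T'$, deleting it, applying the induction hypothesis to obtain an ordering $\sigma_1$ on $X\setminus L(T_k)$, and then extending to $\sigma$ by giving $L(T_k)$ the highest ranks; three short lemmas (one per measure) on the labelling of a common pendant subtree then bound the increase in each distance by~$1$. Your blockwise ordering from a topological sort of the whole forest is exactly the sequence of $\sigma$'s this induction produces, and your ``localisation'' claim is precisely those three lemmas.

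The HOP obstacle you anticipate largely evaporates in the inductive framing, and your per-path ``loses at most one unit'' charging is not the right bookkeeping. When $T_k$ is pendant with leaf ranks $n-n_k+1,\dots,n$, every internal vertex of $T_k$ carries a HOP label in $\{n-n_k+2,\dots,n\}$, identically in $T$ and $T'$; and for each $i\ge n-n_k+1$ the path $P_i$ lies entirely inside $T_k$ except for the start vertex $p$ of $P_{n-n_k+1}$, which is excluded from $\mathbf v(P_{n-n_k+1})$ by definition. Hence $\mathbf v_i=\mathbf v'_i$ for all $i\ge n-n_k+1$, contributing $n_k-1$ to the similarity. The only new non-underlined symbol that can fall into different $\mathbf v_i$'s in $T$ versus $T'$ is $n-n_k+1=f_\HOP(p)=f_\HOP(p')$, and inserting a single symbol into one $\mathbf v_i$ never decreases any LCS. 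So $\mathrm{Sim}^\sigma_\HOP(T,T')\ge \mathrm{Sim}^{\sigma_1}_\HOP(T_1,T_1')+(n_k-1)$ directly, with no global path-by-path accounting needed.
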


We note that the hybrid number is an upper bound on the rSPR distance between two phylogenetic $X$-trees and that both measures can differ by up to $n-\lceil 2\sqrt n\,\rceil$~\cite{humphries2009note} where $n = |X|$.  As such, Theorem~\ref{thm:hybrid-upper-bound} does not further strengthen the results presented in Section~\ref{results:rspr}.

Now to prove Theorem~\ref{thm:hybrid-upper-bound}, we require a few additional lemmas. 
The next two lemmas show that, if two phylogenetic trees $T$ and $T'$ with label set $X$ have a common pendant subtree $S$, then the HOP and OLA labelings of $S$ in $T$ and $T'$ are identical for any fixed ordering on $X$.

\begin{lemma}
Let $T$ and $T'$ be two phylogenetic $X$-trees, where $|X|=n$, and let $\sigma$ be an ordering of the elements in $X$. Suppose that $T$ and $T'$ have a common pendant subtree $S$. Then, the $\HOP$ labeling of $S$ under $\sigma$ is identical for $T$ and $T'$.
\label{lem:subtree-labeling1}
\end{lemma}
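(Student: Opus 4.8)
The plan is to unwind Algorithm~\ref{alg:hop} and observe that every ingredient it uses to label a vertex lying in $S$ is determined by the shared shape of $S$ together with $\sigma$ restricted to $L(S)$, and hence cannot differ between the run on $T$ and the run on $T'$.

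I would first dispose of the degenerate cases: if $L(S)=X$ then $S\simeq T\simeq T'$ and there is nothing to prove, while if $|L(S)|=1$ then $S$ is a single leaf $x$ and $f^T_\HOP(x)=\underline{\sigma(x)}=f^{T'}_\HOP(x)$. So assume $2\le |L(S)|=|Y|<|X|$. Because $S$ is a pendant subtree of $T$, there is a vertex $\hat r$ of $T$ whose descendants in $T$ are precisely the vertices of $T$ constituting $S$; since $Y\ne X$, $\hat r\ne\rho$ and $\hat r$ is not the child of $\rho$, so $\hat r\in V(T)-(L(T)\cup\{\rho\})$, and likewise in $T'$. Moreover the induced subtrees on $V(S)$ in $T$ and in $T'$ are the same labelled rooted tree, so in particular every $v\in V(S)$ has the same pair of children in $T$, in $S$, and in $T'$. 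The point of ``pendant'' is that for every $v\in V(S)$ the set $C(v)$ of leaves reachable from $v$ by a directed path equals the set of leaves of $S$ below $v$ (no descendant of $v$ leaves $S$), and this holds in both $T$ and $T'$. Since Line~4 sets $f_\HOP(x)=\underline{\sigma(x)}$, a quantity depending only on $\sigma$, it follows that the values $m(v)$ computed in Lines~7--8 agree for the two runs: $m_T(v)=m_{T'}(v)$ for all $v\in V(S)$, where $m_T$ and $m_{T'}$ denote the $m$-function in the respective run.

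The conclusion would then be immediate. For a leaf $x\in Y$, $f^T_\HOP(x)=\underline{\sigma(x)}=f^{T'}_\HOP(x)$. For an interior vertex $v\in V(S)$, which includes $\hat r$, Line~11 gives $f_\HOP(v)=\max\{m(v_1),m(v_2)\}$ where $v_1,v_2$ are the children of $v$; both lie in $V(S)$, so the previous paragraph yields $f^T_\HOP(v)=f^{T'}_\HOP(v)$. The remaining steps (Line~13, which fixes $f_\HOP(\rho)=1$ for $\rho\notin V(S)$, and Lines~14--19, which only assemble the HOP vector) do not affect the labeling of $V(S)$. Hence $f^T_\HOP$ and $f^{T'}_\HOP$ coincide on $V(S)$, as claimed.

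I do not expect a serious obstacle here; the care required is entirely in (a) making ``the vertices of $T$ constituting $S$'' precise and checking that pendancy forces both $C(v)$ and the children of $v$ to be the same in $T$ and in $T'$ for every $v\in V(S)$, so that no leaf of $X\setminus Y$ ever influences an $m$-value or a label inside $S$; and (b) observing that the top vertex $\hat r$ of $S$ is an ordinary interior vertex of $T$ (which is why the case $L(S)=X$ was separated off), so that its label genuinely comes from Line~11 rather than from the exceptional assignment $f_\HOP(\rho)=1$. It is also worth keeping in mind that the statement concerns only the labeling $f_\HOP$, not the HOP vector or its path decomposition, so Lines~14--19 play no role.
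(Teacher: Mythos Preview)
Your argument is correct but differs from the paper's. The paper proves the lemma by induction on $n$: it removes the element $x$ of highest rank, applies the induction hypothesis to $T_1=T|(X-\{x\})$ and $T_1'=T'|(X-\{x\})$ (whose common pendant subtree is $S|(L(S)-\{x\})$), and then re-attaches $x$, observing that the HOP labels of $x$ and of its newly created parent are $\underline{\sigma(x)}$ and $\sigma(x)$ in both trees while all other labels are unchanged. This inductive viewpoint implicitly uses the tree-growing description of the HOP labeling mentioned after Algorithm~\ref{alg:hop}.

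Your approach instead reads off Algorithm~\ref{alg:hop} directly: pendancy forces $C(v)\subseteq L(S)$ for every $v\in V(S)$, so $m(v)$ and hence $f_\HOP(v)=\max\{m(v_1),m(v_2)\}$ depend only on the isomorphism type of $S$ and on $\sigma$ restricted to $L(S)$. This is shorter and entirely self-contained; it avoids both the induction and any appeal to the tree-growing interpretation. The paper's inductive template, on the other hand, is reused almost verbatim in the proof of the companion lemma for OLA (Lemma~\ref{lem:subtree-labeling2}), where the vector assertion genuinely requires peeling off leaves one at a time, so the authors likely chose it here for uniformity.
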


\begin{proof}
The proof is by induction on $n$. If $n=1$, then the result follows immediately. 

Now assume that the result holds for all pairs of phylogenetic trees that have at most $n-1$ leaves. Let $x$ be the element in $X$ such that $\sigma(x)=n$. Let $T_1=T|(X- \{x\})$ and let $T_1'=T'|(X- \{x\})$. If $x\in L(S)$, set $S_1=S|(X- \{x\})$ and, otherwise, set $S_1=S$. Since $S$ is a common pendant subtree of $T$ and $T'$, it follows that  $S_1$ is a common pendant subtree of $T_1$ and $T_1'$. Moreover, by the induction assumption, the $\HOP$ labeling of $S_1$ under $\sigma_{-x}$ is identical for $T_1$ and $T_1'$. Let $e$ and $e'$ be the unique edge in $T_1$ and $T_1'$, respectively, such that $T$ can be obtained from $T_1$ by subdividing $e$ with a new vertex $u$ and adding the new edge $(u,x)$ and, similarly, $T'$ can be obtained from $T_1'$ by subdividing $e'$ with a new vertex $u'$ and adding the new edge $(u',x)$. Then, the HOP labeling of $T$ and $T'$ can be obtained from that of $T_1$ and $T_1'$, respectively, by setting $f_\HOP^T(x)=f_\HOP^{T'}(x)=\ul{\sigma(x)}$ and $f_\HOP^T(u)=f_\HOP^{T'}(u')=\sigma(x)$,
thereby implying that the HOP labeling of $S$ under $\sigma$ is identical in $T$ and $T'$. 
\end{proof}

\begin{lemma}
Let $T$ and $T'$ be two phylogenetic $X$-trees, where $|X|=n$, and let $\sigma$ be an ordering of the elements in $X$. Furthermore, let $\mathbf{v}$ and $\mathbf{v}'$  be the $\OLA$ vectors for  $T$ and $T'$, respectively, under $\sigma$.
Suppose that $T$ and $T'$ have a common pendant subtree $S$. Then, the $\OLA$  labeling of $S$ under $\sigma$ is identical for $T$ and $T'$. Moreover, $\mathbf{v}$ and $\mathbf{v}'$  restricted to  $L(S)$ have Hamming distance at most $1$.
\label{lem:subtree-labeling2}
\end{lemma}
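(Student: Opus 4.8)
The plan is to prove both assertions simultaneously by induction on $n=|X|$, paralleling the proof of Lemma~\ref{lem:subtree-labeling1} but additionally tracking the OLA vector coordinates. The base case $|L(S)|\le 1$, as well as the small cases $n\le 2$ and $L(S)=X$ (where $T\simeq T'$ outright), is immediate, so assume the inductive step with $n\ge 3$ and set $x=\sigma^{-1}(n)$, $T_1=T|(X-\{x\})$, and $T_1'=T'|(X-\{x\})$. The key preliminary observation is that the initialization together with iterations $i=3,\dots,n-1$ of Algorithm~\ref{alg:ola} run on $(T,\sigma)$ form literally the same computation as Algorithm~\ref{alg:ola} run on $(T_1,\sigma_{-x})$: every such step uses only $\sigma^{-1}(1),\dots,\sigma^{-1}(n-1)$, and $T_i\simeq T|\{\rho,\sigma^{-1}(1),\dots,\sigma^{-1}(i)\}$ coincides with the corresponding restriction of $T_1$ for $i\le n-1$. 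Consequently every OLA label assigned to a vertex other than $x$ and $u_n$ (the parent of $x$), as well as the vector coordinates in positions $1,\dots,n-1$, is inherited unchanged from this run; the only action of the $n$-th iteration is to set $f_\OLA(x)=n$ and $f_\OLA(u_n)=-n$. The same statements hold with $T'$ in place of $T$.

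I would then split into two cases. If $x\notin L(S)$, then $S$ remains a common pendant subtree of $T_1$ and $T_1'$, and the $n$-th iteration touches neither a vertex of the copy of $S$ nor a coordinate associated with an element of $L(S)$; both assertions then follow from the inductive hypothesis applied to $T_1$ and $T_1'$. If $x\in L(S)$ and $|L(S)|\ge 2$, let $v$ be the vertex of $T$ below which the leaves are precisely $L(S)$ (it exists as $S$ is pendant), so that $T(L(S))$ is the full binary subtree of $T$ rooted at $v$; put $S_1=S|(L(S)-\{x\})$, a common pendant subtree of $T_1$ and $T_1'$. By the inductive hypothesis the OLA labelings of $S_1$ in $T_1$ and $T_1'$ agree under the unique leaf-preserving isomorphism, and the OLA vectors of $T_1,T_1'$ restricted to $L(S_1)$ have Hamming distance at most $1$. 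Since $u_n$ is an ancestor of $x\in L(S)$, we have $u_n\in V(T(L(S)))$ with $V(T(L(S)))=V(T_1(L(S_1)))\cup\{x,u_n\}$, and $f_\OLA(x)=n$, $f_\OLA(u_n)=-n$ in both $T$ and $T'$; hence the common labeling of $S_1$ extends to a common labeling of $S$, which is the first assertion. For the vector, the coordinates associated with $L(S_1)$ are exactly those of $T_1$ and $T_1'$ restricted to $L(S_1)$ (at most one disagreement), and the single remaining coordinate, associated with $x$, equals $f_\OLA(s_n)$ where $s_n$ is the sibling of $x$ in $T$. Because $T(L(S))$ is a full subtree and $u_n$ lies in it, so does $s_n$; thus the just-proved common labeling of $S$ forces this coordinate to be equal for $T$ and $T'$. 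Hence the restriction of the OLA vector of $T$ to $L(S)$ is the restriction of that of $T_1$ to $L(S_1)$ with one further, agreeing, coordinate appended, and likewise for $T'$; so the Hamming distance stays at most $1$, completing the induction.

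The parts I expect to require the most care are the structural book-keeping behind the two cases: checking that a common pendant subtree stays a common pendant subtree after deleting the top-ranked leaf — including the degenerate case $|L(S)|=2$, where $S_1$ collapses to a single leaf and $v$ is suppressed in passing from $T$ to $T_1$ — and, above all, justifying that the sibling $s_n$ of $x$ in the second case genuinely lies in $V(T(L(S)))$, which is exactly where the pendant hypothesis is used (via the fact that $T(L(S))$ contains both children of each of its vertices). Everything else is a direct unwinding of Algorithm~\ref{alg:ola}.
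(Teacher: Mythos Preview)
Your proposal is correct and follows essentially the same approach as the paper's proof: induction on $n$, deleting the top-ranked leaf $x$, splitting into the cases $x\notin L(S)$ and $x\in L(S)$, and using that the sibling of $x$ lies in the commonly labeled pendant subtree to force $v_n=v_n'$ in the second case. You are somewhat more explicit than the paper about why the sibling $s_n$ must lie in $T(L(S))$ and about the degenerate $|L(S)|=2$ case, but the structure and all key ideas coincide.
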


\begin{proof}
We establish both parts using induction on $n$. If $n=1$, then the result follows since $T\simeq T'$. 

Now assume that $n>1$ and that the result holds for all pairs of phylogenetic $X$-trees that have at most $n-1$ leaves. Since the lemma holds whenever $|L(S)|=1$, we may assume for the remainder of the proof that $L(S)\geq 2$. Let $x$ be the element in $X$ such that $\sigma(x)=n$. Let $T_1=T|(X-\{x\})$, let $T_1'=T'|(X-\{x\})$, and let $S_1=S|(L(S)- \{x\})$. Observe that $S_1=S$ if $x\notin L(S)$. Since $S$ is a common pendant subtree of $T$ and $T'$, it follows that $S_1$ is a common pendant subtree of $T_1$ and $T_1'$.  Let $\mathbf{v}_1=[v_1,v_2,\ldots,v_{n-1}]$ and $\mathbf{v}_1'=[v_1',v_2',\ldots,v_{n-1}']$  be the $\OLA$ vectors for  $T_1$ and $T'_1$, respectively, under $\sigma_{-x}$.  Then the following two statements  follow from the induction assumption. First, the $\OLA$ labeling of $S_1$ is identical for $T_1$ and $T_1'$ under $\sigma_{-x}$. Second, $\mathbf{v}_1$ and $\mathbf{v}_1'$ restricted to the elements in $L(S_1)$ have Hamming distance at most 1. Now, let $e=(u,w)$ and $e'=(u',w')$ be the unique edge in $T_1$ and $T_1'$, respectively, such that $T$ can be obtained from $T_1$ by subdividing $e$ with a new vertex $v$ and adding the new edge $(v,x)$ and, similarly, $T'$ can be obtained from $T_1'$ by subdividing $e'$ with a new vertex $v'$ and adding the new edge $(v',x)$. We can view the vertex set of $T$ as the union of the vertex set of $T_1$ and $\{v,x\}$ and, similarly for $T'$.  The OLA labeling of $T$ and $T'$ under $\sigma$ can be obtained from that of $T_1$ and $T_1'$, respectively, under $\sigma_{-x}$ by setting $f^T_\OLA(x)=f^{T'}_\OLA(x)=n$, setting $f^T_\OLA(v)=f^{T'}_\OLA(v')=-\sigma(x)$ and, for each vertex $t$ (resp. $t'$) in $T_1$ (resp. $T_1'$), setting $f^T_\OLA(t) = f^{T_1}_\OLA(t)$ (resp. $f^T_\OLA(t') = f^{T_1}_\OLA(t')$). Since $S$ is a common pendant subtree of $T$ and $T'$ and the $\OLA$ labeling of $S_1$ under $\sigma_{-x}$ is identical for $T_1$ and $T_1'$, it  follows that the OLA labeling of $S$ under $\sigma$ is also identical in $T$ and $T'$. Now, by definition of $\mathbf{v}$  and $\mathbf{v}'$, recall that  $\mathbf{v}=[v_1,v_2,\ldots,v_{n-1},v_n]$ and $\mathbf{v}'=[v_1',v_2',\ldots,v_{n-1}', v_n']$. Furthermore, since the coordinate is determined by the sibling, $v_n=f^T_\OLA(w)$ and $v_n'=f^{T'}_\OLA(w')$. 
If $x\notin L(S)$, then it immediately follows that $\mathbf{v}$ and $\mathbf{v}'$ restricted to $L(S)$ also have Hamming distance at most 1. On the other hand, if $x\in L(S)$, then  $f^T_\OLA(w)=f^{T'}_\OLA(w')$ since the OLA labeling of $S$ under $\sigma$ is identical in $T$ and $T'$. Thus, as $\mathbf{v}_1$ and $\mathbf{v}_1'$ restricted to the elements in $L(S_1)$ have Hamming distance at most 1, it again follows that $\mathbf{v}$ and $\mathbf{v}'$ restricted to $L(S)$ also have Hamming distance at most 1. 
\end{proof}

The analogue of Lemma~\ref{lem:subtree-labeling2} for $\PV$ does not hold. 
However, the following weaker version holds. 

\begin{lemma}
Let $T$ and $T'$ be two phylogenetic $X$-trees, where $|X|=n$. Suppose that $T$ and $T'$ have a common pendant subtree $S$.  Let $\sigma$
be an ordering of the elements in $X$ such that the elements in $X- L(S)$ are bijectively mapped to the elements in $\{1,2,\ldots,n-|L(S)|\}$. 
Let $\mathbf{v}$ and $\mathbf{v}'$  be the $\PV$ vectors for  $T$ and $T'$, respectively, under $\sigma$.
Then the $\PV$  labeling of $S$ under $\sigma$ is identical for $T$ and $T'$. Moreover, $\mathbf{v}$ and $\mathbf{v}'$ restricted to  $L(S)$ have Hamming distance at most $1$.
\label{lem:subtree-labeling3}
\end{lemma}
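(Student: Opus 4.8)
The plan is to follow the same high-level strategy as for HOP and OLA (Lemmas~\ref{lem:subtree-labeling1} and~\ref{lem:subtree-labeling2}), but to use the extra hypothesis on $\sigma$ in an essential way. Write $m=|L(S)|$ and $k=n-m$; the hypothesis says that $\sigma$ maps $L(S)$ onto the top ranks $\{k+1,k+2,\ldots,n\}$. If $m=1$ the statement is immediate (the single leaf of $S$ is labelled by its rank, and the restriction to $L(S)$ has a single coordinate), and if $L(S)=X$ then $T\simeq T'$; so assume $m\ge 2$ and $k\ge 1$.

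First I would record the following structural fact about the $\PV$ construction of $T$: for each $j\in\{1,\ldots,m\}$, the tree $T_{k+j}\simeq T|\{\rho,\sigma^{-1}(1),\ldots,\sigma^{-1}(k+j)\}$ contains a pendant subtree $S^{(j)}$ that is a copy of $S|\{\sigma^{-1}(k+1),\ldots,\sigma^{-1}(k+j)\}$, whose interior vertices are exactly $\{u_{k+2},\ldots,u_{k+j}\}$, and which is attached to the remainder of $T_{k+j}$ at $u_{k+1}$. Moreover the same is true, verbatim and with the same vertices $u_{k+1},\ldots,u_{k+j}$, for $T'$; and for $j\ge 2$ the sibling $s_{k+j}$ of $\sigma^{-1}(k+j)$ in $T_{k+j}$ lies in $V(S^{(j-1)})\subseteq V(S^{(j)})$. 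This holds because $S$ is a \emph{common} pendant subtree and the elements of $L(S)$ are adjoined consecutively in the same order, so that the edge of $T_{k+j-1}$ subdivided to produce $u_{k+j}$ lies inside the current copy of $S$ (or is the edge joining $u_{k+1}$ to the root of that copy), and is determined by $S$ alone. This can be proved by a short induction on $j$, exactly as in Lemmas~\ref{lem:subtree-labeling1} and~\ref{lem:subtree-labeling2}.

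The heart of the argument is the claim that, for $2\le j\le m$, the first $j-1$ iterations of the loop in Step~\ref{alg:cherry} of Algorithm~\ref{alg:p2v} applied to $T_{k+j}$ assign the labels $(k+j)+1,(k+j)+2,\ldots,(k+j)+(j-1)$ to the interior vertices of $S^{(j)}$, and that the vertex receiving each such label is determined solely by $S^{(j)}$ together with $\sigma|_{L(S)}$. I would prove this by a secondary induction on the iteration index $t\in\{1,\ldots,j-1\}$. At iteration $t$: some interior vertex of $S^{(j)}$ is admissible in Step~\ref{alg:cherry}, since $S^{(j)}$ is binary and not all of its $j-1$ interior vertices are yet labelled; and any admissible interior vertex of $S^{(j)}$ has its higher-labelled child equal either to a leaf of $S^{(j)}$, whose label is the rank of an element of $L(S)$ and hence exceeds $k$, or to an interior vertex of $S^{(j)}$ labelled at an earlier iteration, whose label also exceeds $k$. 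On the other hand, the only interior vertices of $T_{k+j}$ outside $S^{(j)}$ are $u_{k+1}$ and the interior vertices of the subtree $T|(X-L(S))$; by the induction hypothesis none of these has been labelled, so $u_{k+1}$ is not admissible (its $S$-side child, the root of $S^{(j)}$, is still unlabelled), and any admissible interior vertex of $T|(X-L(S))$ must have both of its children equal to leaves of $X-L(S)$, whose labels are at most $k$. Hence the ``highest labelled child'' tie-break in Step~\ref{alg:cherry} selects a vertex of $S^{(j)}$, and its identity depends only on $S^{(j)}$ and on the labels assigned at earlier iterations, which by the induction hypothesis depend only on $S^{(j)}$ and $\sigma|_{L(S)}$. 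It follows that $f^{k+j}_\PV$ restricted to $V(S^{(j)})$ is identical for $T$ and $T'$.

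The lemma then follows quickly. Taking $j=m$, so that $S^{(m)}=S$ and $k+m=n$, gives that $f_\PV^{T}=f^{n}_\PV$ and $f_\PV^{T'}$ agree on $V(S)$, which is the first assertion. For the second, the coordinates of $\mathbf v$ and $\mathbf v'$ associated with $L(S)$ are $v_{k+1},\ldots,v_n$; for $j\in\{2,\ldots,m\}$ we have $v_{k+j}=f^{k+j}_\PV(s_{k+j})$ with $s_{k+j}\in V(S^{(j)})$, so $v_{k+j}=v'_{k+j}$ by the claim, and the only coordinate that may differ is $v_{k+1}=f^{k+1}_\PV(s_{k+1})$, where $s_{k+1}$ is the sibling of $\sigma^{-1}(k+1)$ in $T_{k+1}$ --- a vertex of $T|(X-L(S))$ whose $\PV$ label can indeed differ between $T$ and $T'$, since $S$ may be attached at different edges of $T|(X-L(S))$ and $T'|(X-L(S))$. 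Hence $\mathbf v$ and $\mathbf v'$ restricted to $L(S)$ have Hamming distance at most $1$. I expect the main difficulty to be the bookkeeping in the secondary induction, namely verifying that no vertex outside the current copy of $S$ becomes an admissible or competitive candidate in Step~\ref{alg:cherry} until all interior vertices of that copy have been labelled --- this is precisely where the hypothesis that $L(S)$ occupies the top ranks is used.
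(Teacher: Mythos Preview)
Your proposal is correct and follows essentially the same route as the paper: the key observation in both is that, because the leaves of $S$ occupy the top ranks, any admissible candidate inside the current copy of $S$ beats any admissible candidate outside it in Step~\ref{alg:cherry}. The only organizational difference is that the paper states the labeling claim just for $f_\PV^n$ (leaving it as ``straightforward to check'') and then derives the Hamming-distance bound by an outer induction on $|L(S)|$, whereas you establish the labeling claim for every intermediate $T_{k+j}$ via your secondary induction on $t$ and read off both conclusions directly.
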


\begin{proof}
We first show that the $\PV$  labeling of $S$ under $\sigma$ is identical for $T$ and $T'$. By the choice of $\sigma$, there exists an ordering, $t_1,t_2,\ldots,t_{|L(S)|-1}$, on the internal vertices of $T|{L(S)}$ such that $f_\PV^T(t_i)=n+i$ for each $i\in\{1,2,\ldots, |L(S)|-1\}$.  Since $S$ is a common pendant subtree of $T$ and $T'$ it is now straightforward to check that the $\PV$  labeling of $S$ under $\sigma$ is identical for $T$ and $T'$. 

To complete the proof, we show by induction on $|L(S)|$ that $\mathbf{v}$ and $\mathbf{v}'$ restricted to  $L(S)$ have Hamming distance at most 1.
If $|L(S)|=1$, then the result clearly follows. 

Now assume that $|L(S)|>1$ and that the result holds for all pairs of phylogenetic trees that have a common pendant subtree whose size is strictly less than $|L(S)|$. Let $x$ be the element in $X$ such that $\sigma(x)=n$.\ By the choice of $\sigma$, we have 
$x\in L(S)$.
Let $T_1=T|(X- \{x\})$, let $T_1'=T'|(X- \{x\})$, and let $S_1=S|(L(S)- \{x\})$. Since $S$ is a common pendant subtree of $T$ and $T'$ and $|L(S)-\{x\}|\ge 1$, it follows that $S_1$ is a common pendant subtree of $T_1$ and $T_1'$.  Let $\mathbf{v}_1=[v_1,v_2,\ldots,v_{n-1}]$ and $\mathbf{v}_1'=[v_1',v_2',\ldots,v_{n-1}']$  be the $\PV$ vectors for  $T_1$ and $T'_1$, respectively, under $\sigma_{-x}$.  Then, by the induction assumption, we have that $\mathbf{v}_1$ and $\mathbf{v}_1'$ restricted to the elements in $L(S_1)$ have Hamming distance at most 1. Now, let $e=(u,w)$ and $e'=(u',w')$ be the unique edge in $T_1$ and $T_1'$, respectively, such that $T$ can be obtained from $T_1$ by subdividing $e$ with a new vertex $v$ and adding the new edge $(v,x)$ and, similarly, $T'$ can be obtained from $T_1'$ by subdividing $e'$ with a new vertex $u'$ and adding the new edge $(u',x)$. Then $\mathbf{v}=[v_1,v_2,\ldots,v_n]$ and $\mathbf{v}'=[v_1',v_2',\ldots,v_n']$ with $v_n=f^{T}_\PV(w)$ and $v_n'=f^{T'}_\PV(w')$. Since $w$ and $w'$ is a vertex of $T|L(S)$ and $T'|{L(S)}$, respectively, and  the $\PV$ labeling of $S$ is identical for $T$ and $T'$ under $\sigma$, we have  $v_n=v_n'=f_\PV^{T}(w)$. Hence, as $\mathbf{v}_1$ and $\mathbf{v}_1'$ restricted to the elements in $L(S_1)$ have Hamming distance at most 1, it now follows that $\mathbf{v}$ and $\mathbf{v}'$ restricted to $L(S)$ also have Hamming distance at most 1. \end{proof}

We are now in the position to prove Theorem~\ref{thm:hybrid-upper-bound}.

\begin{proof}[Proof of Theorem~\ref{thm:hybrid-upper-bound}]
The proof is by induction on $n$. If $n \leq 2$, then $T$ is isomorphic to $T'$, and the statement immediately holds. 

Suppose that $n\ge 3$ and that the statement holds for all pairs of phylogenetic trees with at most $n-1$ leaves.
Let $F=\{T_{\rho}, T_1, T_2, \ldots, T_k\}$ be an acyclic agreement forest for $T$ and $T'$. Since $F$ is acyclic, there is a tree in $F$ that is pendant in $T$ and $T'$. Without loss of generality, we may assume that this tree is $T_k$. Let $n_k=|L(T_k)|$, let $T_1=T|(X- L(T_k))$, and let $T'_1=T'|(X- L(T_k))$. Furthermore, let $F_1=\{T_{\rho}, T_1, T_2, \ldots, T_{k-1}\}$. Since $F$ is an acyclic agreement forest for $T$ and $T'$, it follows that $F_1$ is an acyclic agreement forest for $T_1$ and $T'_1$. By the induction assumption and Theorem~\ref{thm:hybridnumber}, there is an ordering $\sigma_1$ on $X - L(T_k)$ such that
\begin{equation}\label{eq:one}
d_\Theta^*(T_1,T_1')\leq d^{\sigma_1}_\Theta(T_1, T'_1)\leq k-1.
\end{equation}
Let $\sigma$ be an ordering on $X$ such that $\sigma(x)=\sigma_1(x)$ for each $x\in X-L(T_k)$. Then $\sigma(y)>n-n_k$ for each $y\in L(T_k)$. Let $p$ and $p'$ denote the parent of the root of $T_k$ in $T$ and $T'$, respectively.  

We next complete the induction step for $\Theta=\HOP$. Since  $T_k$ is a pendant subtree of $T$ and $T'$ it follows  that $p$ and $p'$ have the same label which, by construction, is $n-n_k+1$.  Moreover, by Lemma~\ref{lem:subtree-labeling2}, the HOP labeling of the  vertices of $T$ and $T'$ that correspond to $T_k$  under $\sigma$ are identical. Hence each element in $\{n-n_k+2,n-n_k+3,\ldots,n\}$ is contained in a longest  common subsequence of the HOP sequences for $T$ and $T'$. It now follows that $$d^*_\HOP(T, T')\leq d^{\sigma}_{\rm HOP}(T, T')  \leq d^{\sigma_1}_{\rm HOP}(T_1, T'_1) + 1 \le (k-1) + 1 = k,$$ where the second inequality is strict if $n-n_k+1$ is an element of a longest common subsequence of the HOP sequences for $T$ and $T'$. 
Choosing $F$ to be a maximum acyclic agreement forest establishes the lemma for $\Theta=\HOP$.

Next let $\Theta\in\{\OLA,\PV\}$, and let $\mathbf{v}_1=[v_1,v_2,\ldots, v_{n-n_k}]$ and $\mathbf{v}_1'=[v_1',v'_2,\ldots, v'_{n-n_k}]$ be the $\Theta$ vectors for $T_1$ and $T_1'$, respectively, under $\sigma_1$. It follows from Equation~\eqref{eq:one} that the Hamming distance between $\mathbf{v}_1$ and $\mathbf{v}_1'$ is at most $k-1$. Moreover, by the choice of $\sigma$, the $\Theta$ vectors for $T$ and $T'$ are  $\mathbf{v}=[v_1,v_2,\ldots, v_{n-n_k},v_{n-n_k+1},\ldots,v_n]$ and $\mathbf{v}'=[v_1',v'_2,\ldots, v'_{n-n_k},v'_{n-n_k+1},\ldots,v'_n]$. Hence, by Lemma~\ref{lem:subtree-labeling2} for $\Theta=\OLA$ and Lemma~\ref{lem:subtree-labeling3} for $\Theta=\PV$, $\mathbf{v}$ and $\mathbf{v}'$ restricted to $L(T_k)$ have Hamming distance at most 1. In turn, this implies that 
$$d^*_\Theta(T, T')\leq d^{\sigma}_{\Theta}(T, T')  =d^{\sigma_1}_{\Theta}(T_1, T'_1) + 1 \le (k-1) + 1 = k$$
Again choosing $F$ to be a maximum acyclic agreement forest establishes the lemma for $\Theta\in\{\OLA,\PV\}$.
\end{proof}

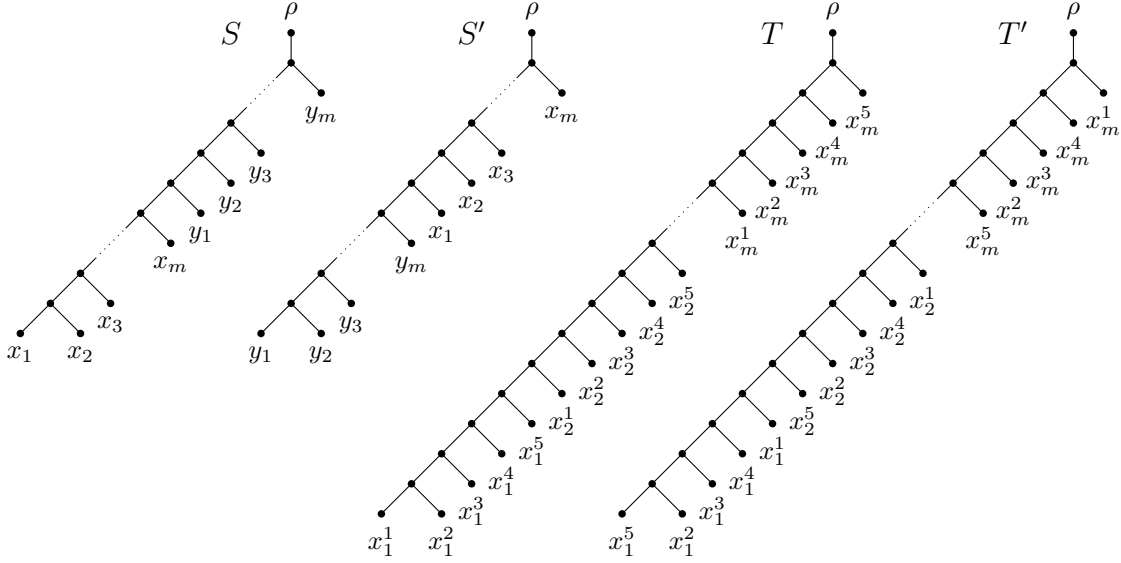
\begin{figure}[t]
    \centering
    \begin{tikzpicture}[scale=.8]

    \node[fill=black,circle,inner sep=1pt, label=below:$x_1$]   at (0.5,0.5) {};
    \node[fill=black,circle,inner sep=1pt, label=below:$x_2$]   at (1.5,0.5) {};
    \node[fill=black,circle,inner sep=1pt, label=below:$x_3$]   at (2,1) {};
    \node[fill=black,circle,inner sep=1pt, label=below:$x_m$]   at (3,2) {};
    \node[fill=black,circle,inner sep=1pt, label=below:$y_1$]   at (3.5,2.5) {};
    \node[fill=black,circle,inner sep=1pt, label=below:$y_2$]   at (4,3) {};
    \node[fill=black,circle,inner sep=1pt, label=below:$y_3$]   at (4.5,3.5) {};
    \node[fill=black,circle,inner sep=1pt, label=below:$y_m$]   at (5.5,4.5) {};
    \node[fill=black,circle,inner sep=1pt]   at (1,1) {};
    \node[fill=black,circle,inner sep=1pt]   at (1.5,1.5) {};
    \node[fill=black,circle,inner sep=1pt]   at (2.5,2.5) {};
    \node[fill=black,circle,inner sep=1pt]   at (3,3) {};
    \node[fill=black,circle,inner sep=1pt]   at (3.5,3.5) {};
    \node[fill=black,circle,inner sep=1pt]   at (4,4) {};
    \node[fill=black,circle,inner sep=1pt]   at (5,5) {};
    \draw(0.5,0.5)--(1.75,1.75);
    \draw(1.75,1.75)--(2.25,2.25)[dotted];
    \draw(2.25,2.25)--(4.25,4.25);
    \draw(4.25,4.25)--(4.75,4.75)[dotted];
    \draw(4.75,4.75)--(5,5);
    \draw(5,5)--(5,5.5);
    \draw(5,5)--(5.5,4.5);
    \draw(4,4)--(4.5,3.5);
    \draw(3.5,3.5)--(4,3);
    \draw(3,3)--(3.5,2.5);
    \draw(2.5,2.5)--(3,2);
    \draw(1.5,1.5)--(2,1);
    \draw(1,1)--(1.5,0.5); 
    \node[fill=black,circle,inner sep=1pt, label=above:$\rho$]   at (5,5.5) {};
    \node at (4,5.5) {\large $S$};

    \node[fill=black,circle,inner sep=1pt, label=below:$y_1$]   at (4.5,0.5) {};
    \node[fill=black,circle,inner sep=1pt, label=below:$y_2$]   at (5.5,0.5) {};
    \node[fill=black,circle,inner sep=1pt, label=below:$y_3$]   at (6,1) {};
    \node[fill=black,circle,inner sep=1pt, label=below:$y_m$]   at (7,2) {};
    \node[fill=black,circle,inner sep=1pt, label=below:$x_1$]   at (7.5,2.5) {};
    \node[fill=black,circle,inner sep=1pt, label=below:$x_2$]   at (8,3) {};
    \node[fill=black,circle,inner sep=1pt, label=below:$x_3$]   at (8.5,3.5) {};
    \node[fill=black,circle,inner sep=1pt, label=below:$x_m$]   at (9.5,4.5) {};
    \node[fill=black,circle,inner sep=1pt]   at (5,1) {};
    \node[fill=black,circle,inner sep=1pt]   at (5.5,1.5) {};
    \node[fill=black,circle,inner sep=1pt]   at (6.5,2.5) {};
    \node[fill=black,circle,inner sep=1pt]   at (7,3) {};
    \node[fill=black,circle,inner sep=1pt]   at (7.5,3.5) {};
    \node[fill=black,circle,inner sep=1pt]   at (8,4) {};
    \node[fill=black,circle,inner sep=1pt]   at (9,5) {};
    \draw(4.5,0.5)--(5.75,1.75);
    \draw(5.75,1.75)--(6.25,2.25)[dotted];
    \draw(6.25,2.25)--(8.25,4.25);
    \draw(8.25,4.25)--(8.75,4.75)[dotted];
    \draw(8.75,4.75)--(9,5);
    \draw(9,5)--(9,5.5);
    \draw(9,5)--(9.5,4.5);
    \draw(8,4)--(8.5,3.5);
    \draw(7.5,3.5)--(8,3);
    \draw(7,3)--(7.5,2.5);
    \draw(6.5,2.5)--(7,2);
    \draw(5.5,1.5)--(6,1);
    \draw(5,1)--(5.5,0.5); 
    \node[fill=black,circle,inner sep=1pt, label=above:$\rho$]   at (9,5.5) {};
    \node at (8,5.5) {\large $S'$};

    \node at (13,5.5) {\large $T$};
    \node[fill=black,circle,inner sep=1pt, label=above:$\rho$]   at (14,5.5) {};
    \node[fill=black,circle,inner sep=1pt,  label=below:$x_m^5$]   at (14.5,4.5) {};
    \node[fill=black,circle,inner sep=1pt,  label=below:$x_m^4$]   at (14,4) {};
    \node[fill=black,circle,inner sep=1pt,  label=below:$x_m^3$]   at (13.5,3.5) {};
    \node[fill=black,circle,inner sep=1pt,  label=below:$x_m^2$]   at (13,3) {};
    \node[fill=black,circle,inner sep=1pt,  label=below:$x_m^1$]   at (12.5,2.5) {};
    \node[fill=black,circle,inner sep=1pt,  label=below:$x_2^5$]   at (11.5,1.5) {};
    \node[fill=black,circle,inner sep=1pt,  label=below:$x_2^4$]   at (11,1) {};
    \node[fill=black,circle,inner sep=1pt,  label=below:$x_2^3$]   at (10.5,0.5) {};
    \node[fill=black,circle,inner sep=1pt,  label=below:$x_2^2$]   at (10,0) {};
    \node[fill=black,circle,inner sep=1pt,  label=below:$x_2^1$]   at (9.5,-0.5) {};
    \node[fill=black,circle,inner sep=1pt,  label=below:$x_1^5$]   at (9,-1) {};
    \node[fill=black,circle,inner sep=1pt,  label=below:$x_1^4$]   at (8.5,-1.5) {};
    \node[fill=black,circle,inner sep=1pt,  label=below:$x_1^3$]   at (8,-2) {};
    \node[fill=black,circle,inner sep=1pt,  label=below:$x_1^2$]   at (7.5,-2.5) {};
    \node[fill=black,circle,inner sep=1pt, label=below:$x_1^1$]   at (6.5,-2.5) {};
    \node[fill=black,circle,inner sep=1pt]   at (14,5) {};
    \node[fill=black,circle,inner sep=1pt]   at (13.5,4.5) {};
    \node[fill=black,circle,inner sep=1pt]   at (13,4) {};
    \node[fill=black,circle,inner sep=1pt]   at (12.5,3.5) {};
    \node[fill=black,circle,inner sep=1pt]   at (12,3) {};
    \node[fill=black,circle,inner sep=1pt]   at (11,2) {};
    \node[fill=black,circle,inner sep=1pt]   at (10.5,1.5) {};
    \node[fill=black,circle,inner sep=1pt]   at (10,1) {};
    \node[fill=black,circle,inner sep=1pt]   at (9.5,0.5) {};
    \node[fill=black,circle,inner sep=1pt]   at (9,0) {};
    \node[fill=black,circle,inner sep=1pt]   at (8.5,-0.5) {};
    \node[fill=black,circle,inner sep=1pt]   at (8,-1) {};
    \node[fill=black,circle,inner sep=1pt]   at (7.5,-1.5) {};
    \node[fill=black,circle,inner sep=1pt]   at (7,-2) {};
    \draw(14,5.5)--(14,5);
    \draw(14,5)--(14.5,4.5);
    \draw(14,5)--(11.75,2.75);
    \draw(11.75,2.75)--(11.25,2.25)[dotted];
    \draw(11.25,2.25)--(6.5,-2.5);
    \draw(7,-2)--(7.5,-2.5);
    \draw(7.5,-1.5)--(8,-2);
    \draw(8,-1)--(8.5,-1.5);
    \draw(8.5,-0.5)--(9,-1);
    \draw(9,0)--(9.5,-0.5);
    \draw(9.5,0.5)--(10,0);
    \draw(10,1)--(10.5,0.5);
    \draw(10.5,1.5)--(11,1);
    \draw(11,2)--(11.5,1.5);
    \draw(12,3)--(12.5,2.5);
    \draw(12.5,3.5)--(13,3);
    \draw(13,4)--(13.5,3.5);
    \draw(13.5,4.5)--(14,4);

    \node at (17,5.5) {\large $T'$};
    \node[fill=black,circle,inner sep=1pt, label=above:$\rho$]   at (18,5.5) {};
    \node[fill=black,circle,inner sep=1pt,  label=below:$x_m^1$]   at (18.5,4.5) {};
    \node[fill=black,circle,inner sep=1pt,  label=below:$x_m^4$]   at (18,4) {};
    \node[fill=black,circle,inner sep=1pt,  label=below:$x_m^3$]   at (17.5,3.5) {};
    \node[fill=black,circle,inner sep=1pt,  label=below:$x_m^2$]   at (17,3) {};
    \node[fill=black,circle,inner sep=1pt,  label=below:$x_m^5$]   at (16.5,2.5) {};
    \node[fill=black,circle,inner sep=1pt,  label=below:$x_2^1$]   at (15.5,1.5) {};
    \node[fill=black,circle,inner sep=1pt,  label=below:$x_2^4$]   at (15,1) {};
    \node[fill=black,circle,inner sep=1pt,  label=below:$x_2^3$]   at (14.5,0.5) {};
    \node[fill=black,circle,inner sep=1pt,  label=below:$x_2^2$]   at (14,0) {};
    \node[fill=black,circle,inner sep=1pt,  label=below:$x_2^5$]   at (13.5,-0.5) {};
    \node[fill=black,circle,inner sep=1pt,  label=below:$x_1^1$]   at (13,-1) {};
    \node[fill=black,circle,inner sep=1pt,  label=below:$x_1^4$]   at (12.5,-1.5) {};
    \node[fill=black,circle,inner sep=1pt,  label=below:$x_1^3$]   at (12,-2) {};
    \node[fill=black,circle,inner sep=1pt,  label=below:$x_1^2$]   at (11.5,-2.5) {};
    \node[fill=black,circle,inner sep=1pt, label=below:$x_1^5$]   at (10.5,-2.5) {};
    \node[fill=black,circle,inner sep=1pt]   at (18,5) {};
    \node[fill=black,circle,inner sep=1pt]   at (17.5,4.5) {};
    \node[fill=black,circle,inner sep=1pt]   at (17,4) {};
    \node[fill=black,circle,inner sep=1pt]   at (16.5,3.5) {};
    \node[fill=black,circle,inner sep=1pt]   at (16,3) {};
    \node[fill=black,circle,inner sep=1pt]   at (15,2) {};
    \node[fill=black,circle,inner sep=1pt]   at (14.5,1.5) {};
    \node[fill=black,circle,inner sep=1pt]   at (14,1) {};
    \node[fill=black,circle,inner sep=1pt]   at (13.5,0.5) {};
    \node[fill=black,circle,inner sep=1pt]   at (13,0) {};
    \node[fill=black,circle,inner sep=1pt]   at (12.5,-0.5) {};
    \node[fill=black,circle,inner sep=1pt]   at (12,-1) {};
    \node[fill=black,circle,inner sep=1pt]   at (11.5,-1.5) {};
    \node[fill=black,circle,inner sep=1pt]   at (11,-2) {};
    \draw(18,5.5)--(18,5);
    \draw(18,5)--(18.5,4.5);
    \draw(18,5)--(15.75,2.75);
    \draw(15.75,2.75)--(15.25,2.25)[dotted];
    \draw(15.25,2.25)--(10.5,-2.5);
    \draw(11,-2)--(11.5,-2.5);
    \draw(11.5,-1.5)--(12,-2);
    \draw(12,-1)--(12.5,-1.5);
    \draw(12.5,-0.5)--(13,-1);
    \draw(13,0)--(13.5,-0.5);
    \draw(13.5,0.5)--(14,0);
    \draw(14,1)--(14.5,0.5);
    \draw(14.5,1.5)--(15,1);
    \draw(15,2)--(15.5,1.5);
    \draw(16,3)--(16.5,2.5);
    \draw(16.5,3.5)--(17,3);
    \draw(17,4)--(17.5,3.5);
    \draw(17.5,4.5)--(18,4);
    
    \end{tikzpicture}
    \caption{Left: Two phylogenetic trees $S$ and $S'$ on $2m$ leaves. For $m\geq 3$, it follows that $d^*_\OLA(S,S')<h(S,S')$ since  $h(S,S')=m$ and $d^*_\OLA(S,S')\leq 2$. Right: Two phylogenetic trees $T$ and $T'$ on $5m$ leaves. For $m\ge1$, it follows that $d^*_\PV(T,T')<h(T,T)$ since  $h(T,T')=2m$
    and $d^*_\PV(T,T')\leq m$. For further details, see text. 
}
\label{fig:strict-hybrid}
\end{figure}

To see that the inequality of Theorem~\ref{thm:hybrid-upper-bound} can be strict for trees of arbitrary size under OLA and P2V, Figure~\ref{fig:strict-hybrid} shows two phylogenetic trees $S$ and $S'$ on $2m$ leaves such that, for each $m\geq 3$, $d^*_\OLA(S,S')<h(S,S')$, and also shows two  phylogenetic trees $T$ and $T'$ on $5m$ leaves  such that, for each $m\geq 1$, $d^*_\PV(T,T')<h(T,T)$.  First, consider the ordering $$\sigma(x_1)<\sigma(x_2)<\cdots<\sigma(x_m)<\sigma(y_1)<\sigma(y_2)<\cdots< \sigma(y_m)$$ on the label set of $S$ and $S'$. Then 
\begin{eqnarray*}
\mathbf{v}&=&[0,1,-2,-3,\ldots,-(m-1),-m,-(m+1),-(m+2),-(m+3),\ldots,-(2m-1)],\\
 \mathbf{v}'&=&[0,1,-2,-3,\ldots,-(m-1),1,m+1,-(m+2),-(m+3),\ldots,-(2m-1)],
 \end{eqnarray*}
 where $\mathbf{v}$ and $\mathbf{v}'$ is the OLA vector for $S$ and $S'$, respectively, under $\sigma$ and, thus, $d^*_\OLA(S,S')\leq d_\OLA^\sigma(S,S')=2$. On the other hand, a maximum acyclic agreement forest for $S$ and $S'$ has size $m+1$ and, thus, by Theorem~\ref{thm:hybridnumber}, we have $h(S,S')=m$. Second, consider the ordering $$\sigma(x_1^1)<\sigma(x_1^2)<\sigma(x_1^3)<\sigma(x_1^4)<\sigma(x_1^5)<\cdots<\sigma(x_m^1)<\sigma(x_m^2)<\sigma(x_m^3)<\sigma(x_m^4)<\sigma(x_m^5)$$
 on the label set of $T$ and $T'$.
Then 
\begin{eqnarray*}
\mathbf{w}&=&[0,1,1,5,7,10,12,13,15,17,\ldots,\\
&&10(m-1),10(m-1)+2,10(m-1)+3, 10(m-1)+5,10(m-1)+7],\\
 \mathbf{w}'&=&[0,1,2,5,7,10,11,13,15,17,\ldots,\\
 &&10(m-1),10(m-1)+1,10(m-1)+3,10(m-1)+5,10(m-1)+7],
 \end{eqnarray*}
 where $\mathbf{w}$ and $\mathbf{w}'$ is the P2V vector for $T$ and $T'$, respectively, under $\sigma$ and, thus, $d^*_\PV(T,T')\leq d_\PV^\sigma(T,T')=m$. On the other hand, a maximum acyclic agreement forest for $S$ and $S'$ has size $2m+1$ and, thus, by Theorem~\ref{thm:hybridnumber}, we have $h(T,T')=2m$.

We now establish the equivalence of $d^*_\HOP(T,T')$ and $h(T,T')$ for two phylogenetic trees $T$ and $T'$:

\begin{theorem}\label{thm:hop-equals-hybrid}
Let $T$ and $T'$ be two phylogenetic $X$-trees. Then
$$d^*_{\rm HOP}(T, T') = h(T, T').$$
\end{theorem}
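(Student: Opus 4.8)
Since Theorem~\ref{thm:hybrid-upper-bound} already gives $d^*_\HOP(T,T')\le h(T,T')$, the plan is to establish the reverse inequality in the stronger form $h(T,T')\le d^\sigma_\HOP(T,T')$ for \emph{every} ordering $\sigma$ on $X$; minimising over $\sigma$ then yields $h(T,T')\le d^*_\HOP(T,T')$, and hence equality. By Theorem~\ref{thm:hybridnumber} it suffices to produce, for an arbitrary fixed $\sigma$, an acyclic agreement forest for $T$ and $T'$ in which the number of components minus one is at most $d^\sigma_\HOP(T,T')$.

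The forest will be read off the two HOP path-decompositions. Under $\sigma$ the HOP labelling decomposes $T$ into edge-disjoint paths $P_1^T,\dots,P_n^T$, where $P_i^T$ joins the interior vertex labelled $i$ to the leaf of rank $i$ and the labels of the interior vertices strictly inside $P_i^T$, read top to bottom, form $\mathbf{u}_i$. Writing $p_T(j)$ for the index of the path containing the vertex labelled $j$, the data $(p_T,\mathbf{u}_1,\dots,\mathbf{u}_n)$ is a rooted ordered tree $\mathcal{T}_T$ on $\{1,\dots,n\}$ with root $1$, and one checks that $\mathcal{T}_T$ is \emph{increasing}, i.e.\ $p_T(j)<j$, since the vertex labelled $j$ lies on a path whose leaf has strictly smaller rank. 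Define $\mathcal{T}_{T'}$ analogously. For each $i$ fix a longest common subsequence $L_i$ of $\mathbf{u}_i$ and $\mathbf{v}_i$, and call $j\in\{2,\dots,n\}$ \emph{stable} if $p_T(j)=p_{T'}(j)$ and $j\in L_{p_T(j)}$, and \emph{non-stable} otherwise. Using $\sum_i|\mathbf{u}_i|=n-1$, the number of non-stable elements is $\sum_i\bigl(|\mathbf{u}_i|-|L_i|\bigr)$, which is at most $d^\sigma_\HOP(T,T')$.

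For each non-stable $e$, cut in $T$ the edge entering the vertex labelled $e$ (equivalently, detach the subtree spanned by the paths $P_j^T$ with $j\in\{e\}\cup\mathrm{desc}_{\mathcal{T}_T}(e)$), and make the analogous cuts in $T'$; suppressing degree-two vertices gives, in each tree, a forest whose components are indexed by the non-stable elements together with $1$. For non-stable $e$ (resp.\ for $1$) let $C_e$ be the set of indices $k$ whose nearest non-stable ancestor in $\mathcal{T}_T$, $k$ itself included, is $e$ (resp.\ all of whose ancestors are stable). The key observation is that this ``nearest non-stable ancestor'' map agrees for $\mathcal{T}_T$ and $\mathcal{T}_{T'}$ — along an all-stable chain the parents coincide — so the sets of leaves with ranks in $C_e$ agree in $T$ and $T'$, and moreover $\mathcal{T}_T$ and $\mathcal{T}_{T'}$ restrict to the \emph{same} ordered tree on each $C_e$ (every element of $L_p$ has $p$ as parent in both trees and appears in $L_p$'s order inside both $\mathbf{u}_p$ and $\mathbf{v}_p$). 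Invoking that the HOP encoding is a bijection and that it is compatible with restriction to such ancestor-closed index sets, this forces $T|L(C_e)\simeq T'|L(C_e)$, so $\{C_e\}$ is an agreement forest for $T$ and $T'$ with at most $d^\sigma_\HOP(T,T')$ as its number of components minus one. It is acyclic: the root of the $C_e$-component of $T$ is the vertex labelled $e$, which can be an ancestor of the vertex labelled $f$ only if $f\in\mathrm{desc}_{\mathcal{T}_T}(e)$, whence $e<f$ by the increasing-tree property (and similarly in $T'$), so every arc of the dependency digraph on $\{C_e\}$ increases the head-index and no directed cycle exists. Hence $h(T,T')=m_a(T,T')\le d^\sigma_\HOP(T,T')$.

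I expect the main obstacle to be the assertion $T|L(C_e)\simeq T'|L(C_e)$: establishing that the HOP path-decomposition restricts correctly to the index sets $C_e$ and that equality of the restricted ordered trees forces isomorphism of the restricted phylogenetic trees. This needs either the bijectivity of the HOP encoding from~\cite{hop} together with a ``restriction commutes with HOP'' lemma, or a direct induction on $|C_e|$; verifying the vertex-disjointness condition of an agreement forest is then routine. (An alternative would be induction on $n$ after deleting the leaf of rank $n$, using that $d^{\sigma_{-x}}_\HOP$ and $d^\sigma_\HOP$ differ by at most one, but this runs into an awkward case when the deleted leaf attaches on the same HOP path in both trees while the hybrid number strictly increases — a case the forest construction above avoids.)
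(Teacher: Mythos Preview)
Your forest is the same one the paper builds in Lemma~\ref{lem:inequality2}: cut, for every interior label not in the chosen longest common subsequences, the edge leaving that vertex towards its own leaf. Two points deserve correction. First, the edge you cut at a non-stable $e$ is not the edge \emph{entering} the vertex labelled $e$ but its outgoing edge on $P_e^T$ (towards $\underline e$); with your cut the detached piece would also contain the branch of $P_{p_T(e)}^T$ below $e$, contradicting your ``equivalently'' clause. Correspondingly, the root of $T(L(C_e))$ is not the vertex labelled $e$ but a vertex strictly below it; the acyclicity argument still goes through because every interior label (and every leaf rank) in the subtree below that cut edge exceeds $e$, which is exactly what the paper uses.

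The substantive difference is how one proves that the resulting forest is an agreement forest. The paper does this by induction on $n$, deleting the leaf $x$ of rank $n$ and splitting into the cases $n\in\mathrm{LCS}$ and $n\notin\mathrm{LCS}$. You explicitly reject this route because ``$d_\HOP^{\sigma_{-x}}$ and $d_\HOP^\sigma$ differ by at most one, but the hybrid number might strictly increase'' --- but the paper does \emph{not} induct on the inequality $h\le d_\HOP^\sigma$; it inducts on the statement ``$F$ is an acyclic agreement forest for $T$ and $T'$'', which sidesteps your awkward case entirely. In the case $n\in\mathrm{LCS}$ one simply checks that adjoining $x$ to the appropriate component preserves the agreement-forest property, and acyclicity is re-verified globally via the ``increasing label'' observation you also use.

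Your direct route, by contrast, rests on the assertion that equality of the restricted ordered trees $\mathcal{T}_T|_{C_e}$ and $\mathcal{T}_{T'}|_{C_e}$ forces $T|L(C_e)\simeq T'|L(C_e)$. You correctly flag this as the crux, and it does need a proof: one must show that the HOP path-decomposition of $T|L(C_e)$ (under the induced ordering) is obtained from that of $T$ by the obvious restriction, which is not stated anywhere and is not entirely trivial (the labels and path indices change under restriction). The paper's induction avoids this lemma altogether; your approach can be made to work, but as written it leaves the hardest step as a promissory note, whereas the induction you dismissed is the cleaner completion.
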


In Theorem~\ref{thm:hybrid-upper-bound}, we already established that the HOP measure $d^*_{\rm HOP}(T,T')$ for two phylogenetic trees $T$ and $T'$ is bounded above by $h(T,T')$. The next lemma shows that $d^*_{\rm HOP}(T,T')$ is also bounded below by $h(T,T')$.

\begin{lemma}\label{lem:inequality2}
Let $T$ and $T'$ be two  phylogenetic $X$-trees, where $|X|=n$. Then
$$d^*_{\rm HOP}(T, T')\ge h(T, T').$$
\end{lemma}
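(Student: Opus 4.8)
The plan is to fix an ordering $\sigma$ on $X$ with $d^\sigma_\HOP(T,T')=d^*_\HOP(T,T')$ and to prove $h(T,T')\le d^\sigma_\HOP(T,T')$ by induction on $n=|X|$; by Theorem~\ref{thm:hybridnumber} this is equivalent to bounding $m_a(T,T')$, the size of a maximum acyclic agreement forest for $T$ and $T'$. The base case $n\le 2$ is immediate, since then $T\simeq T'$ and both sides equal $0$. For the inductive step, let $x=\sigma^{-1}(n)$ be the leaf of largest rank, set $T_1=T|(X-\{x\})$ and $T_1'=T'|(X-\{x\})$, and work with the ordering $\sigma_{-x}$ on $X-\{x\}$.

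First I would record how the HOP data changes under deletion of $x$. The only vertex of $T$ (resp.\ of $T'$) whose subtree has minimum $\sigma$-rank equal to $n$ is the leaf $x$ itself, so the vertex carrying the HOP label $n$ is precisely the parent $p$ of $x$ in $T$ (resp.\ $p'$ in $T'$). Writing $a=m(p)$ for the least rank occurring below the sibling of $x$ in $T$, and $b=m(p')$ for the analogous quantity in $T'$, the vertex $p$ is an internal vertex of the path $P_a$ of $T$ and $p'$ is an internal vertex of the path $P_b$ of $T'$. Hence deleting $x$ and suppressing $p$ (resp.\ $p'$) removes the single occurrence of the entry $n$ from $\mathbf{u}_a$ (resp.\ from $\mathbf{v}_b$), deletes the now-empty path $P_n$, and leaves every other path-label word $\mathbf{u}_i$ (resp.\ $\mathbf{v}_i$) unchanged. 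Since deleting one letter from each of two words changes their longest-common-subsequence length by at most $1$, and since the entry $n$ of $\mathbf{u}_a$ can be matched with an entry of $\mathbf{v}_a$ only when $a=b$, one obtains the dichotomy: either \textbf{(A)} $d^{\sigma_{-x}}_\HOP(T_1,T_1')=d^\sigma_\HOP(T,T')-1$, or \textbf{(B)} $a=b$ and $d^{\sigma_{-x}}_\HOP(T_1,T_1')=d^\sigma_\HOP(T,T')$.

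In case \textbf{(A)} the induction hypothesis yields $h(T_1,T_1')\le d^*_\HOP(T_1,T_1')\le d^{\sigma_{-x}}_\HOP(T_1,T_1')=d^\sigma_\HOP(T,T')-1$, and it then suffices to invoke the general inequality $h(T,T')\le h(T_1,T_1')+1$. The latter follows by taking a maximum acyclic agreement forest for $T_1$ and $T_1'$ and adjoining $\{x\}$ as an extra singleton component: the vertex at which $x$ is attached subdivides an edge interior to at most one of the existing components, so these components remain vertex-disjoint in $T$ (and likewise in $T'$), while the new component is a sink of the ancestor digraph, so acyclicity is preserved. In case \textbf{(B)} the induction hypothesis only gives $h(T_1,T_1')\le d^\sigma_\HOP(T,T')$, so here I must instead show that re-attaching $x$ costs nothing, i.e.\ $h(T,T')\le h(T_1,T_1')$. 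The point is that in case~\textbf{(B)} the conditions $a=b$ and ``the $n$-match is forced in every optimal LCS of $\mathbf{u}_a$ and $\mathbf{v}_a$'' pin down the position of $x$ relative to the common reference leaf $\sigma^{-1}(a)$ compatibly in $T$ and $T'$. Concretely, I would choose a maximum acyclic agreement forest $F_1$ for $T_1$ and $T_1'$ adapted to this structure, so that the component $C$ of $F_1$ containing $\sigma^{-1}(a)$ has its embeddings into $T$ and $T'$ passing through $p$ and $p'$ respectively, and then enlarge $L(C)$ by the leaf $x$, verifying that the result is still an acyclic agreement forest for $T$ and $T'$ with the same number of components.

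I expect case \textbf{(B)} to be the main obstacle. It requires turning the combinatorial ``forced $n$-match'' statement about the HOP words into an honest structural statement about how $x$ hangs off $T$ and $T'$, and then arranging the agreement forest---which we are free to choose among all maximum acyclic agreement forests---so that this compatibility can actually be exploited without creating a directed cycle in the ancestor digraph. The remaining ingredients (the base case, the identification of $p$ with the vertex labelled $n$ together with the resulting control of the path words, the dichotomy itself, and the estimate $h(T,T')\le h(T_1,T_1')+1$) are routine.
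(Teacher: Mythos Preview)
Your inductive framework (delete the leaf $x$ of top rank and compare HOP data before and after) matches the paper's, and your dichotomy (A)/(B) is correct. Case~(A) is fine. The gap is in case~(B): you appeal to a maximum acyclic agreement forest $F_1$ for $T_1,T_1'$ ``adapted'' so that the component containing $\sigma^{-1}(a)$ passes through $p$ and $p'$, but your induction hypothesis gives only the inequality $h(T_1,T_1')\le d^{\sigma_{-x}}_\HOP(T_1,T_1')$ and says nothing about the \emph{shape} of any witnessing forest. There is no reason an arbitrary maximum acyclic agreement forest should respect the HOP path structure at $\sigma^{-1}(a)$, and the ``forced $n$-match'' condition on the words $\mathbf u_a,\mathbf v_a$ does not by itself produce such a forest. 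So the step $h(T,T')\le h(T_1,T_1')$ in case~(B) is not justified.

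The paper avoids this by not lifting an unknown forest at all: it \emph{constructs} a specific forest $F$ directly from the HOP data, cutting, for every internal label $j\notin\mathrm{LCS}$, the outgoing edge at that vertex that leaves the path $P_i$ it lies on. This gives a forest with exactly $n-|\mathrm{LCS}|=d^\sigma_\HOP(T,T')$ cuts, and the induction (on $n$, deleting $x$ as you do) is used to verify that \emph{this particular} $F$ is an acyclic agreement forest. Because the forest is defined from the LCS, one has $F|(X-\{x\})=F_1$ by construction, so in the analogue of your case~(B) ($n\in\mathrm{LCS}$) the leaf $x$ already sits in the correct component, and the ``reattachment'' is automatic rather than something to be arranged. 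If you want to rescue your approach, you would need to strengthen the induction hypothesis to track a specific LCS-compatible forest --- at which point you are essentially carrying out the paper's construction.
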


\begin{proof}
Let $\sigma$ be an ordering on $X$.
Consider the following process: {Let $\mathbf{v}_T^\sigma$ and $\mathbf{v}_{T'}^\sigma$ be the HOP vectors for $T$ and $T'$, respectively. Let LCS denote ${\rm LCS}(\mathbf{v}_T^\sigma, \mathbf{v}_{T'}^\sigma$).} 
Set $i=1$. For each element on the path from {the non-leaf vertex $v$ with $f_\HOP(v)=i$} to {the element $x \in X$ with $\sigma(x)=f_\HOP(x)=\ul{i}$} in $T$ that is not in LCS, delete the outgoing edge not on this path. Increment $i$ by $1$ and repeat. Continue this process until $i=n+1$. Let $F=\{T_{\rho}, T_1, T_2, \ldots, T_k\}$ denote the resulting forest. Note that $k=n-|{\rm LCS}|$. Apply this same process to $T'$ to get the forest $F'=\{T'_{\rho}, T'_1, T'_2. \ldots, T'_k\}$.

We complete the proof by showing that $F$ is an acyclic agreement forest for $T$ and $T'$. To this end, we use induction on $n$ to first show that $F$ is isomorphic to $F'$, that is, $F$ is an agreement forest for $T$ and $T'$ before establishing that $F$ is also acyclic. If $n\in \{1, 2\}$, then $T$ is isomorphic to $T'$, and so $F$ is an acyclic agreement forest for $T$ and $T'$. 

Now suppose that the statement holds for all pairs of phylogenetic $X$-trees with $|X| \leq n-1$, where $n\ge 3$. Let $x$ be the element in $X$ with $\sigma(x)=n$, 
and recall that we use $\sigma_{-x}$ to denote the ordering on $X - \{x\}$ such that $\sigma_{-x}(y) = \sigma(y)$ for each element $y \in X - \{x\}$.
Let $T_1 = T|(X - \{x\})$ and $T'_1 = T'|(X - \{x\})$, {and let ${\rm LCS}_1$ denote ${\rm LCS} (\mathbf{v}_{T_1}^{\sigma'}, \mathbf{v}_{T_1'}^{\sigma'}$)}, and $\mathbf{v}_{T_1}^{\sigma'}$ and $\mathbf{v}_{T_1'}^{\sigma'}$ are the respective HOP vectors of $T_1$ and $T_1'$ under $\sigma'=\sigma_{-x}$. Let $F_1=\{S_{\rho}, S_1, S_2, \ldots, S_l\}$ denote the forest obtained from $T_1$ by applying the process that is described in the first paragraph of the proof. By the induction assumption, $F_1$ is an acyclic agreement forest for $T_1$ and $T'_1$.

Before continuing, we make the following observations: (i) ${\rm LCS}_1={\rm LCS}|(X - \{x\})$, where ${\rm LCS}|(X - \{x\})$ refers to the vector obtained from ${\rm LCS}$ by deleting the coordinates corresponding to $x$ (i.e., by deleting the coordinates $\sigma(x)$ and $\ul{\sigma(x)}$), and (ii) the parent of $x$ is labeled by $n$ for each of $T$ and $T'$. Thus, (iii) $F_1$ is isomorphic to $F|(X - \{x\})=\{T_\rho| (X - \{x\}), T_1| (X - \{x\}), \ldots, T_k| (X - \{x\}) \}$ and $F_1$ is isomorphic to $F'|(X- \{x\})=\{T'_\rho| (X - \{x\}), T'_1| (X - \{x\}), \ldots, T'_k| (X - \{x\}) \}$. 

We break the remainder of the proof into two cases depending on whether $n\in {\rm LCS}$. Let $i$ be the element of $\{1,2, \ldots, n-1\}$ such that the subsequence $ [i,\ldots, \underline{i}]$ of $\mathbf{v}_T^\sigma$ contains $n$, and let $i'$ be the element of $\{1,2, \ldots, n-1\}$ such that the subsequence $[i', \ldots, \underline{i'}]$ of $\mathbf{v}_{T'}^\sigma$ contains $n$. To ease reading, we use $j$ to refer to the internal vertex $v$ with $f_\HOP(v)=j$, and $\underline{j}$ to refer to the element $x \in X$ with $\sigma(x)=j$.

First assume that $n\notin {\rm LCS}$. Then the outgoing edge of the internal vertex of $T$ labeled $n$ not on the path from $i$ to $\underline{i}$ is cut. Similarly, the outgoing edge of the internal vertex of $T'$ labeled $n$ not on the path from $i'$ to $\underline{i'}$ is cut. By (ii) and (iii), $F$ is an agreement forest for $T$ and $T'$ and, as $n$ is an isolated vertex, $F$ is an acyclic agreement forest for $T$ and $T'$. 

Second assume that $n\in {\rm LCS}$. Then $i=i'$ and the outgoing edge of the internal vertex labeled $n$ in $T$ (resp.\ $T'$) not on the path from $i$ to $i'$ is not cut. Thus, $\underline n$ is in the same component as $\underline{i}$ in $F$ and $F'$. Let $T_i$ and $T'_i$ denote these components in $F$ and $F'$, respectively. We next show that $T_i$ is isomorphic to $T'_i$, thereby showing, by (iii), that $F$ is an agreement forest for $T$ and $T'$. Keeping the corresponding internal labels of $T$ and $T'$, consider the paths in $T_i$ and $T'_i$ from each of their roots to $\underline{i}$. By construction, the vertex labels on these paths are identical. Furthermore, for each of these vertices, if its label is $j$, then this vertex is the least common ancestor of $\underline j$ and $\underline{i}$ in $T_i$ and $T'_i$. Hence, $T_i$ is isomorphic to $T'_i$, and so $F$ is an agreement forest for $T$ and $T'$. Lastly, we show that $F$ is  acyclic. To this end, let $G$ denote the directed graph associated with $F$ as detailed in the definition of an acyclic agreement forest. For each vertex in $G$, label it with the least leaf value in the corresponding component. Assume that $(i, j)$ is a directed edge in $G$.
Then $i < j$. To see this, we may assume without loss of generality that the root of $T_i$ is an ancestor of the root of $T_j$ in $T$. Since $i$ is the least value in $L(T_i)$, the outgoing edge of the vertex labeled $i$ in $T$ that is on the path from $i$ to $\underline{i}$ is cut. Similarly, the outgoing edge of the vertex labeled $j$ in $T$ that is on the path from $j$ to $\underline{j}$ is cut. But then, for the root of $T_i$ to be an ancestor of the root of $T_j$ in $T$, there must be a path from $i$ to $\underline{j}$ that contains $j$ and each of the these two cut edges. In turn, this implies $i < j$. Returning to $G$, as $\sigma$ is an ordering on $X$, it now follows that $G$ contains no (directed) cycles. Hence $F$ is an acyclic agreement forest for $T$ and $T'$. 
Since $F$ is an acyclic agreement forest for $T$ and $T'$, Theorem~\ref{thm:hybridnumber} implies that $h(T,T') \leq n - |{\rm LCS}|$, and choosing $\sigma$ to be an ordering realizing $d^\ast_{\HOP}(T,T')$ completes the proof.
\end{proof}

Finally, to establish Theorem~\ref{thm:hop-equals-hybrid}, we  combine Theorem~\ref{thm:hybrid-upper-bound} and Lemma~\ref{lem:inequality2}. 
Since computing the hybrid number for two phylogenetic trees is NP-hard~\cite{bordewich2007computing}, the next corollary is an immediate consequence of Theorem~\ref{thm:hop-equals-hybrid} and answers the open problem in~\cite[page 10, Remark 2]{hop}.

\begin{cor} 
Let $T$ and $T'$ be two phylogenetic $X$-trees. Computing $d^*_\HOP(T,T')$ is {\em NP}-hard.
\end{cor}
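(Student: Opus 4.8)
The plan is to obtain this as an immediate consequence of Theorem~\ref{thm:hop-equals-hybrid} together with a known hardness result. First I would recall that Theorem~\ref{thm:hop-equals-hybrid} asserts the exact equality $d^*_\HOP(T,T') = h(T,T')$ for every pair of phylogenetic $X$-trees $T$ and $T'$; in particular, for each integer $k$ we have $d^*_\HOP(T,T') \le k$ if and only if $h(T,T') \le k$. Second, I would invoke the theorem of Bordewich and Semple~\cite{bordewich2007computing} stating that deciding whether $h(T,T') \le k$ (equivalently, computing $h(T,T')$) is NP-hard.

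With these two ingredients the reduction is the identity on instances: an instance $(T,T',k)$ of the hybrid number problem is, by the displayed equivalence, a yes-instance precisely when $(T,T',k)$ is a yes-instance of the problem ``is $d^*_\HOP(T,T') \le k$?''. This is a polynomial-time (indeed, cost-free) many-one reduction from an NP-hard problem, so computing $d^*_\HOP(T,T')$ is NP-hard as well; equivalently, a polynomial-time algorithm for $d^*_\HOP$ would yield one for the hybrid number, which is impossible unless $\mathrm{P}=\mathrm{NP}$.

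I do not anticipate any obstacle here: the entire content of the corollary is carried by Theorem~\ref{thm:hop-equals-hybrid}, and the only point requiring a moment's care is to phrase both problems in a common computational model (the decision versions with a threshold $k$, or the associated function problems), which the exact equality of the two quantities makes automatic. Thus no further work beyond citing Theorem~\ref{thm:hop-equals-hybrid} and~\cite{bordewich2007computing} is needed.
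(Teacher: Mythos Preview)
Your proposal is correct and follows exactly the paper's approach: the corollary is stated as an immediate consequence of Theorem~\ref{thm:hop-equals-hybrid} together with the NP-hardness of computing the hybrid number from~\cite{bordewich2007computing}. Your only addition is spelling out the (identity) reduction and the decision-version phrasing, which is fine but not required.
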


Let $T$ and $T'$ be two phylogenetic $X$-trees, and let $\sigma$ be an ordering on $X$. In the language of this paper, Zhang et al.~\cite{zhang2023fast} used certain shortest common supersequences of the HOP vectors of $T$ and $T'$ under $\sigma$ to construct a tree-child network that embeds $T$ and $T'$. Furthermore, in Proposition 3 of the supplementary material of~\cite{zhang2023fast}, the authors show that repeating the construction process for each ordering on $X$ can be used to compute $h_{tc}(T,T')$. Hence, since $h(T,T')=h_{tc}(T,T')$ by Lemma~\ref{l:tc-hybrid}, it follows from Theorem~\ref{thm:hop-equals-hybrid} that computing $d_\HOP^*(T,T')$ is equivalent to the shortest common supersequence approach of~\cite{zhang2023fast}.

\section{Bounding order-dependent measures by cherry-picking sequences}
\label{results:cherry-picking}

In this section, we provide sufficient conditions when all three distances $d_\OLA^\sigma(T,T')$, $d_\PV^\sigma(T,T')$, and $d_\HOP^\sigma(T,T')$ are equal and show a connection to the temporal tree-child hybrid number. Let $T$ and $T'$ be two phylogenetic $X$-trees, and let $S=(x_1,x_2,\ldots,x_n)$ be a sequence of the elements in $X$. We call $S$ a {\em cherry-picking sequence} for $T$ precisely if each $x_i$ with $i\in\{1,2,\ldots,n-1\}$ labels a leaf of a cherry in $T|(X - \{x_1,x_2,\ldots,x_{i-1}\})$. Furthermore, if $S$ is a cherry-picking sequence for $T$ and $T'$, then we say that $S$ is a {\em common} cherry-picking sequence for $T$ and $T'$.

Let $S=(x_1,x_2,\ldots,x_n)$ be a common cherry-picking sequence of two  phylogenetic $X$-trees $T$ and $T'$. We say that $x_i$ with $i\in\{1,2,\ldots,n\}$ {\em agrees} if $i=n$, or $x_i$ is a leaf of a cherry $\{x_i,y\}$ in $T|(X- \{x_1,x_2,\ldots,x_{i-1}\})$ and $\{x_i,y\}$ is also a cherry in $T'|(X- \{x_1,x_2,\ldots,x_{i-1}\})$; otherwise, we say that $x_i$ {\em disagrees}. The {\em weight} of $S$, $wt(S)$, is
$$wt(S) = |\{x_i \in S:i\in\{1,2,\ldots,n-1\}\text{ and } x_i \text{ disagrees}\}|.$$
Lastly, we call $S$ a {\em minimum common cherry-picking sequence for $T$ and $T'$} if $wt(S)$ is minimized over all common cherry-picking sequences for $T$ and $T'$. This minimum number is denoted by $s(T,T')$. The next theorem is established in~\cite[Theorem 2]{humphries2013cherry}, where it is also shown that if $T$ and $T'$ have no common cherry-picking sequence, then there is no temporal tree-child network that displays $T$ and $T'$~\cite[Theorem 1]{humphries2013cherry}.

\begin{theorem}\label{t:CPS}
Let $T$ and $T'$ be two phylogenetic $X$-trees. Then $h_t(T,T')=s(T,T')$.
\end{theorem}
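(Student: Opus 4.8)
The plan is to prove the two inequalities $h_t(T,T')\le s(T,T')$ and $s(T,T')\le h_t(T,T')$ separately; in both directions the bridge is the standard observation that a temporal tree-child network can be built up, and taken apart, by ``time-ordered'' elementary operations that match exactly the agree/disagree steps of a common cherry-picking sequence. If $T$ and $T'$ admit no common cherry-picking sequence then both sides are $+\infty$ (equivalently, no temporal tree-child network displays both, by \cite[Theorem 1]{humphries2013cherry}), so we may assume a common cherry-picking sequence exists. As elsewhere in the paper, I would treat $\rho$ as an extra leaf so that the relevant restrictions are always rooted.

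For $h_t(T,T')\le s(T,T')$, let $S=(x_1,\ldots,x_n)$ be a minimum common cherry-picking sequence, so $wt(S)=s(T,T')$. Build a network $N$ by \emph{reversing} $S$: start from the one-leaf tree on $\{x_n\}$ and, maintaining embedded subdivisions of the two current reduced trees (which will become the embeddings of $T$ and $T'$), process $i=n-1,n-2,\ldots,1$. If $x_i$ agrees, say $\{x_i,y\}$ is a cherry of both $T|(X-\{x_1,\ldots,x_{i-1}\})$ and $T'|(X-\{x_1,\ldots,x_{i-1}\})$, re-attach $x_i$ by subdividing the edge entering $y$ with a new tree vertex $p_i$ and adding $(p_i,x_i)$. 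If $x_i$ disagrees, say it lies in cherry $\{x_i,y\}$ of the reduced $T$ and cherry $\{x_i,y'\}$ of the reduced $T'$ with $y\ne y'$, create a reticulation $p_i$, one parent subdividing the edge entering $y$ in the $T$-embedding and the other subdividing the edge entering $y'$ in the $T'$-embedding, and add $(p_i,x_i)$. By construction $N$ displays $T$ and $T'$, and $h(N)$ equals the number of disagree steps, namely $wt(S)$. It remains to check that $N$ is tree-child --- immediate, since every $p_i$ has the leaf $x_i$ as a child and later subdivisions never destroy a tree-path to a leaf --- and temporal: here I would assign the times $t(p_i)$ to be increasing in the order $i=n-1,n-2,\ldots,1$ (so $p_1$ is ``youngest''), with the two parents of a reticulation $p_i$ receiving the common value $t(p_i)$, and take all leaf times larger than every $t(p_i)$. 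The one nontrivial point is that these synchronisation constraints are jointly satisfiable, which holds because the cherry-picking order guarantees that at step $i$ the lineages of $y$ and of $y'$ are simultaneously ``available'' in the two embeddings.

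For $s(T,T')\le h_t(T,T')$, let $N$ be a temporal tree-child network displaying $T$ and $T'$ with $h(N)=h_t(T,T')$, and fix a temporal labeling. Peel $N$ from the leaves using the structural dichotomy --- this is where both temporality and tree-child are essential --- that any temporal tree-child network on at least two leaves either has a cherry of two leaves, or contains a reticulation that can be ``resolved'', i.e.\ one singling out a leaf $x$ that is a cherry leaf of both currently displayed reduced trees. In the first case pick a leaf of the cherry; in the second pick $x$. Deleting the picked leaf, suppressing degree-two vertices, and in the second case also deleting the resolved reticulation, yields a smaller temporal tree-child network that still displays the two reduced trees, with the reticulation count unchanged in the cherry case and dropping by one in the reticulation case. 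Iterating yields a common cherry-picking sequence $S$; one checks that a cherry step always produces an agreeing leaf (the picked leaf forms the \emph{same} cherry in both reduced trees), so every disagreeing leaf is charged to a distinct reticulation removed along the way. Hence $wt(S)\le h(N)=h_t(T,T')$, giving $s(T,T')\le h_t(T,T')$.

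I expect the main obstacle to be the structural dichotomy invoked in the second direction --- proving that every temporal tree-child network with at least two leaves contains a cherry or a resolvable reticulation, and that resolution preserves the tree-child and temporal properties as well as the displaying of $T$ and $T'$ --- together with its mirror image in the first direction, the ``simultaneous availability'' claim that lets the time stamps be assigned consistently. These are precisely the points where the temporal hypothesis does the real work, ruling out the stacked-reticulation configurations that would otherwise obstruct the induction; the bookkeeping of suppressions and the verification that displaying is preserved are routine but should be carried out with care.
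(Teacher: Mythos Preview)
The paper does not prove this theorem; it simply quotes it as \cite[Theorem~2]{humphries2013cherry}. So there is no ``paper's own proof'' to compare against beyond that citation.

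Your sketch is precisely the strategy of Humphries, Linz, and Semple in the cited paper: build a temporal tree-child network by reversing a minimum common cherry-picking sequence (agree steps become cherries, disagree steps become reticulations), and conversely peel an optimal temporal tree-child network by repeatedly removing either a cherry leaf or the leaf child of a ``lowest'' reticulation to read off a common cherry-picking sequence whose weight is at most $h(N)$. You have also correctly flagged the two genuine technical obligations --- in one direction the consistency of the temporal time-stamps under the synchronisation constraints at reticulations, and in the other the structural dichotomy that a temporal tree-child network on at least two leaves always contains a cherry or a reticulated cherry, with the reduced network remaining temporal tree-child and still displaying the reduced trees. Both are exactly the places where the argument in \cite{humphries2013cherry} does the work, and your outline matches theirs; nothing essential is missing from the plan, only the verification of those two flagged points.
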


Consider two phylogenetic trees $T$ and $T'$ that have a common cherry-picking sequence (a ``natural ordering'' in the terminology of \cite{penn2023leaping}) $S = (x_1, x_2,\ldots, x_n)$. Throughout this section, we refer to the ordering $\sigma$ with $\sigma(x_i)=n-i+1$ for each $i \in \{1,2,\ldots,n\}$ as the ordering on $X$ that is {\em induced} by $S$. The main result of this section is the following theorem.

\begin{theorem}
    Let $T$ and $T'$ be two phylogenetic $X$-trees, and suppose that $S=(x_1, x_2, \ldots, x_n)$ is a common cherry-picking sequence for $T$ and $T'$. Let $\sigma$ be the ordering on $X$ that is induced by $S$.
    Then, 
    $$
    d_{\OLA}^{\sigma}(T,T') = 
    d_{\HOP}^{\sigma}(T,T') =
    d_{\PV}^{\sigma}(T,T') = wt(S).
    $$
    \label{thm:cherry-pick-equiv}
\end{theorem}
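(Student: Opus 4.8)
The plan is to prove all three equalities at once by induction on $n=|X|$, peeling off at each step the \emph{first} element $x_1$ of the cherry-picking sequence $S$, which is the \emph{highest-ranked} leaf under the induced ordering $\sigma$ (so $\sigma(x_1)=n$). Write the cherry of $x_1$ as $\{x_1,a\}$ in $T$ and as $\{x_1,b\}$ in $T'$, set $\hat T=T|(X-\{x_1\})$ and $\hat T'=T'|(X-\{x_1\})$, and let $S'=(x_2,\ldots,x_n)$. Elementary checks give: $S'$ is a common cherry-picking sequence for $\hat T$ and $\hat T'$; the ordering $\sigma'$ induced by $S'$ is $\sigma$ restricted to $X-\{x_1\}$; the restrictions used to decide whether $x_j$ agrees ($j\ge 2$) are unchanged by deleting $x_1$, so each such $x_j$ has the same agreement status for $(S',\hat T,\hat T')$ as for $(S,T,T')$; the element $x_1$ agrees if and only if $a=b$; and therefore $wt(S)=wt(S')+\varepsilon$, where $\varepsilon=0$ if $a=b$ and $\varepsilon=1$ if $a\ne b$. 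The base cases $n\le 2$ are trivial, since then $T\simeq T'$ and $wt(S)=0$. Hence it suffices to show, for each $\Theta\in\{\OLA,\PV,\HOP\}$, that $d^{\sigma}_{\Theta}(T,T')=d^{\sigma'}_{\Theta}(\hat T,\hat T')+\varepsilon$; this both runs the induction and forces the three distances to coincide.

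For $\OLA$ and $\PV$ the reduction is immediate from the vector constructions. The key point is that, because $S$ is a cherry-picking sequence for $T$, at every stage of the tree-growing process the newly adjoined leaf $\sigma^{-1}(i)=x_{n-i+1}$ is a leaf of a cherry in $T_i\simeq T|\{\sigma^{-1}(1),\ldots,\sigma^{-1}(i)\}$, so its sibling $s_i$ in $T_i$ is the leaf it pairs with, and hence $f_\OLA(s_i)=f^i_\PV(s_i)=\sigma(s_i)$. In particular, under this $\sigma$ the OLA and P2V vectors of every tree coincide (both equal $[0,1,\sigma(s_3),\ldots,\sigma(s_n)]$), so $d^{\sigma}_{\OLA}(T,T')=d^{\sigma}_{\PV}(T,T')$ already. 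Since $T_i$ agrees with the $i$-th stage of the construction for $\hat T$ for all $i\le n-1$, the first $n-1$ coordinates of the OLA (equivalently P2V) vector of $T$ reproduce the corresponding vector of $\hat T$ under $\sigma'$, while the last coordinate is $\sigma(s_n)=\sigma(a)$ for $T$ and $\sigma(b)$ for $T'$ (here $s_n=a$ because $\{x_1,a\}$ is a cherry in $T_n\simeq T$, by the first step of $S$). So the Hamming distance of the two vectors exceeds that of the two vectors for $\hat T$ by exactly $\varepsilon$, as required.

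For $\HOP$ this reduction is the substantive step, and I would isolate it as a lemma describing the effect of adjoining the top-ranked leaf (a one-step refinement of Lemma~\ref{lem:subtree-labeling1}). Because $x_1$ has the largest rank $n$, adjoining it leaves every value $m(v)$ with $v\in V(\hat T)$ unchanged --- each such subtree already contains a leaf of smaller rank --- so the HOP labeling of $T$ is obtained from that of $\hat T$ by keeping all interior labels, setting $f_\HOP(x_1)=\ul{n}$, and labelling the parent $p$ of $x_1$ by $\max\{n,\sigma(a)\}=n$. Consequently the path decomposition of $T$ is that of $\hat T$ with the edge above $a$ subdivided by $p$ and the trivial path $(p,x_1)$ appended, so the HOP vector of $T$ is the HOP vector of $\hat T$ with $n$ appended to the block $\mathbf v_{\sigma(a)}$ and every other block unchanged; likewise the vector of $T'$ is that of $\hat T'$ with $n$ appended to $\mathbf v_{\sigma(b)}$. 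No entry of a block of an $(n-1)$-leaf tree equals $n$, so the two appended $n$'s can be matched in a common subsequence exactly when they lie in the same block, i.e.\ when $a=b$; hence the HOP similarity increases by $1$ if $a=b$ and is unchanged if $a\ne b$, while the normalisation term in the definition of $d^{\sigma}_{\HOP}$ increases by $1$ on passing from $n-1$ to $n$ leaves. Combining, $d^{\sigma}_{\HOP}(T,T')-d^{\sigma'}_{\HOP}(\hat T,\hat T')$ is $0$ when $a=b$ and $1$ when $a\ne b$, i.e.\ equals $\varepsilon$, which closes the induction.

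I expect the HOP step to be the main obstacle: one must pin down precisely how adjoining the top-ranked leaf transforms the HOP labeling and the path decomposition, and in particular verify that the sole new interior label is $n$ and that it enters the vector only as an appendix to the block indexed by the rank of $x_1$'s cherry partner. The OLA/P2V part, by contrast, is bookkeeping once one observes that a cherry-picking ordering forces every sibling $s_i$ to be a leaf; the only place there that needs a word of care is the shift between $wt(S)$ (a sum over $x_1,\ldots,x_{n-1}$) and the vector coordinates (positions $3,\ldots,n$), which is reconciled by the fact --- used at the base of the induction --- that the rank-$2$ element $x_{n-1}$ always agrees, lying in the unique cherry of the two-leaf tree $T|\{x_{n-1},x_n\}$.
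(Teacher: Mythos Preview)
Your proposal is correct and follows essentially the same approach as the paper. The paper organizes the argument into two auxiliary lemmas---first showing that under a CPS-induced ordering the OLA and P2V vectors of a single tree coincide (your observation that each sibling $s_i$ is a leaf), then running the induction on $n$ for $d_{\OLA}^\sigma=d_{\HOP}^\sigma=wt(S)$ by peeling off $x_1$ and analyzing how the HOP vector changes (your ``$n$ is appended to block $\mathbf v_{\sigma(a)}$'' step, with the same case split on $a=b$ versus $a\ne b$)---whereas you fold everything into a single induction; the mathematical content is identical.
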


We start with a lemma about OLA and P2V vectors.  The ordering associated with a cherry-picking sequence corresponds to a tree-growing process that, at every step, attaches the new leaf to a pendant edge to create a cherry. As such, the resulting vector consists of labels solely from leaves (and not internal vertices).     

\begin{lemma}\label{l:cherry-ordering}
    Let $T$ be a phylogenetic $X$-tree with $|X|=n$, 
    and let $S = (x_1,x_2, \ldots, x_n)$ be a cherry-picking sequence for $T$. Further, let $\sigma$ be the ordering on $X$ that is induced by $S$.
    Let $\mathbf{u}=[u_1,u_2, \ldots, u_n]$ be the $\OLA$ vector of $T$ under $\sigma$, and let $\mathbf{v}=[v_1,v_2, \ldots, v_n]$ be the $\PV$ vector of $T$ under $\sigma$. Then $\mathbf{u} = \mathbf{v}$, i.e., the $\OLA$ and $\PV$ vectors of $T$ are identical under $\sigma$. 
\end{lemma}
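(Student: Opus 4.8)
The plan is to show that, under the ordering $\sigma$ induced by a cherry-picking sequence $S$, both Algorithm~\ref{alg:ola} and Algorithm~\ref{alg:p2v} produce the \emph{same} sequence of trees $T_2, T_3, \ldots, T_n$ and, at every step $i$, attach the leaf $\sigma^{-1}(i) = x_{n-i+1}$ to a \emph{pendant edge} (equivalently, the sibling $s_i$ of the newly adjoined leaf is itself a leaf of $T_i$). The key structural observation is this: if $(x_1, x_2, \ldots, x_n)$ is a cherry-picking sequence for $T$, then reading it in reverse, $x_n, x_{n-1}, \ldots, x_1$ describes a way of building $T$ one leaf at a time so that each newly added leaf forms a cherry with an already-present leaf. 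Concretely, $T_i \simeq T|\{\sigma^{-1}(1), \ldots, \sigma^{-1}(i)\}$ (with the root attached) is obtained from $T_{i-1}$ by taking the leaf $y := x_{n-i+2}$ that, in $T|(X - \{x_1, \ldots, x_{n-i}\})$, was in a cherry with $x_{n-i+1}$, subdividing the pendant edge to $y$, and hanging $x_{n-i+1}$ off the new vertex. This is exactly the tree-growing process both algorithms perform, so I would first record that $T_i$ is the same tree in both constructions and that $s_i$ is always a leaf, namely $s_i = x_{n-i+2} = \sigma^{-1}(i-1)$ in $T_i$.

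Granting that, the computation of the two vectors reduces to comparing the labels assigned to this leaf sibling $s_i$. For OLA, the sibling $s_i$ is a leaf, so $f_\OLA(s_i) = \sigma(s_i) = i-1$; hence the OLA vector is $\mathbf{u} = [0, 1, 2, 3, \ldots, n-1]$. For P2V, I need to check that the $i$-th P2V label of that same leaf is also $i-1$. Since every internal vertex $u_j$ ($3 \le j \le i$) created so far sits directly above a leaf (because each attachment was to a pendant edge), when we run the relabeling loop (lines 12--14 of Algorithm~\ref{alg:p2v}) over $j = i+1, \ldots, 2i-1$, the vertices $u_2, u_3, \ldots, u_i$ get reassigned the labels $i+1, \ldots, 2i-1$ in the order dictated by ``highest $i$-th P2V labeled child'', but the \emph{leaves} keep their labels $1, 2, \ldots, i$ (their ranks) throughout. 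So $f^i_\PV(s_i) = \sigma(s_i) = i-1$ as well, giving $\mathbf{v} = [0, 1, 2, \ldots, n-1] = \mathbf{u}$. The cleanest way to present this is by induction on $i$: assume $T_{i-1}$ is the same tree in both algorithms and that every interior vertex of $T_{i-1}$ has at least one leaf child with a leaf being a child that is $\sigma^{-1}(j)$ for the appropriate $j$; then the attachment step is forced to be the same (it is determined by $T$), and the label of the new leaf sibling is its rank in both cases.

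The main obstacle I anticipate is being careful about two things in the P2V analysis. First, one must verify that the relabeling in line 12 never touches a leaf — this is immediate from the algorithm's wording (it only ever reassigns vertices in $\{u_2, \ldots, u_i\}$, i.e., the adjoined interior vertices) but should be stated. Second, and slightly more delicate: one must confirm that when P2V looks up ``the sibling $s_i$ of $\sigma^{-1}(i)$ in $T_i$'' and evaluates $f^i_\PV(s_i)$, it genuinely gets $\sigma(s_i)$ and not some reassigned value — again this holds precisely because $s_i$ is a leaf, which is where the cherry-picking hypothesis does all the work. I would isolate the statement ``for the ordering induced by a cherry-picking sequence, $s_i$ is a leaf of $T_i$ for every $i \in \{3, \ldots, n\}$'' as the crux and prove it first; everything else is bookkeeping. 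Once that is in hand, the equality $\mathbf{u} = \mathbf{v}$ (indeed both equal $[0,1,2,\ldots,n-1]$) follows without any real computation.
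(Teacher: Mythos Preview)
Your approach is essentially the same as the paper's: both rest on the observation that, under a cherry-picking ordering, the sibling $s_i$ of the newly adjoined leaf is always itself a leaf, so both algorithms record $\sigma(s_i)$ in coordinate $i$. The paper phrases this as induction on $n$ (appending $\sigma(c)$ for the cherry partner $c$ of $x_1$), you phrase it as induction on $i$; the content is identical.

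There is, however, a concrete error in your write-up. You assert that $s_i = x_{n-i+2} = \sigma^{-1}(i-1)$ and hence that $\mathbf{u} = \mathbf{v} = [0,1,2,\ldots,n-1]$. This is false in general: the cherry partner of $x_{n-i+1}$ in $T_i$ can be \emph{any} previously inserted leaf, not necessarily the most recently inserted one. For instance, take the balanced tree $((a,b),(c,d))$ with cherry-picking sequence $S=(a,c,b,d)$. The induced ordering has $\sigma(d)=1$, $\sigma(b)=2$, $\sigma(c)=3$, $\sigma(a)=4$; at step $i=3$ the sibling of $c$ in $T_3$ is $d$ (rank $1$), and at step $i=4$ the sibling of $a$ is $b$ (rank $2$), giving $\mathbf{u}=\mathbf{v}=[0,1,1,2]$, not $[0,1,2,3]$. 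The fix is painless: simply drop the identification $s_i=\sigma^{-1}(i-1)$ and conclude $u_i=f_\OLA(s_i)=\sigma(s_i)=f^i_\PV(s_i)=v_i$ directly from ``$s_i$ is a leaf''. Everything else in your outline (in particular the remarks that the P2V relabeling only touches interior vertices, and that the crux is establishing $s_i$ is a leaf) is correct and matches the paper.
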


\begin{proof}
    The proof is by induction on $n$. If $n=1$, then $\mathbf{u}=\mathbf{v}=[0]$, and if $n=2$, then $\mathbf{u} = \mathbf{v}= [0,1]$, and the statement immediately holds. 
    
    Suppose that $n \geq 3$ and that the statement holds for all  phylogenetic trees with at most $n-1$ leaves. Let $T$ be a phylogenetic $X$-tree with $|X|=n$ and cherry-picking sequence $S = (x_1, x_2,\ldots, x_n)$. By assumption, $\sigma(x_1)=n$. Further, $x_1$ is part of a cherry, say $\{c,x_1\}$, in $T$. Let $T_1 = T|(X - \{x_1\})$. By the induction assumption, for
    the OLA and P2V vectors, say $\mathbf{u}'=[u'_1, u'_2,\ldots, u'_{n-1}]$ and $\mathbf{v}'=[v'_1,v'_2, \ldots, v'_{n-1}]$, of $T_1$, we have $\mathbf{u}' = \mathbf{v}'$. 
    Since $\{c,x_1\}$ is a cherry in $T$, the OLA and P2V vectors, say $\mathbf{u}$ and $\mathbf{v}$, for $T$ under $\sigma$ can be obtained from $\mathbf{u}'$ and $\mathbf{v}'$ by setting $\mathbf{u} = [u_1', u_2',\ldots, u_{n-1}',\sigma(c)]$ and $\mathbf{v} = [v_1',v_2', \ldots, v_{n-1}',\sigma(c)]$, where $\sigma(c)$ is the rank of $c$ under $\sigma$. Since by the induction assumption $u_i' = v_i'$ for each $i \in \{1,2, \ldots, n-1\}$, clearly $\mathbf{u} = \mathbf{v}$, which completes the proof. 
\end{proof}

While the coordinates in the OLA and P2V vectors of a tree disagree when a leaf is attached to an internal edge in the tree-generating process, the coordinates coincide when a leaf is attached to a pendant edge. Hence a consequence of Lemma~\ref{l:cherry-ordering} is the next corollary.

\begin{cor}
    Let $T$ and $T'$ be two  phylogenetic $X$-trees with $|X|=n$. Suppose that $S=(x_1, x_2, \ldots, x_n)$ is a common cherry-picking sequence for $T$ and $T'$. Then there exists an ordering $\sigma$ on $X$ such that
    $$
    d_{\OLA}^{\sigma}(T,T') = 
    d_{\PV}^{\sigma}(T,T').
    $$
    \label{cor:ola-hop-equiv}
\end{cor}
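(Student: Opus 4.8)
The plan is to let $\sigma$ be the ordering on $X$ induced by the common cherry-picking sequence $S$, i.e.\ $\sigma(x_i) = n-i+1$ for each $i \in \{1,2,\ldots,n\}$, and then to invoke Lemma~\ref{l:cherry-ordering} twice. The key observation is that a common cherry-picking sequence for $T$ and $T'$ is, in particular, a cherry-picking sequence for $T$ on its own and a cherry-picking sequence for $T'$ on its own; hence Lemma~\ref{l:cherry-ordering} applies to each of $T$ and $T'$ separately, using this same sequence $S$ and the same induced ordering $\sigma$.

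First I would apply Lemma~\ref{l:cherry-ordering} to $T$: this gives that the $\OLA$ vector $\mathbf{u}_T$ and the $\PV$ vector $\mathbf{v}_T$ of $T$ under $\sigma$ satisfy $\mathbf{u}_T = \mathbf{v}_T$. Then I would apply the same lemma to $T'$, obtaining $\mathbf{u}_{T'} = \mathbf{v}_{T'}$ for the $\OLA$ and $\PV$ vectors of $T'$ under $\sigma$. Finally, recalling that $d_{\OLA}^{\sigma}(T,T')$ is by definition the Hamming distance between $\mathbf{u}_T$ and $\mathbf{u}_{T'}$ while $d_{\PV}^{\sigma}(T,T')$ is the Hamming distance between $\mathbf{v}_T$ and $\mathbf{v}_{T'}$, the two equalities $\mathbf{u}_T = \mathbf{v}_T$ and $\mathbf{u}_{T'} = \mathbf{v}_{T'}$ force these two Hamming distances to coincide, giving $d_{\OLA}^{\sigma}(T,T') = d_{\PV}^{\sigma}(T,T')$.

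There is essentially no obstacle: the corollary follows immediately from Lemma~\ref{l:cherry-ordering}. The only point requiring care is the (routine) observation that a common cherry-picking sequence restricts to a cherry-picking sequence for each tree individually, so that the lemma may legitimately be applied to both $T$ and $T'$ under a single common ordering; the rest is bookkeeping about the definition of the $\OLA$ and $\PV$ distances as Hamming distances of the associated vectors.
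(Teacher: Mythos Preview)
Your proposal is correct and takes essentially the same approach as the paper: the paper simply remarks that the corollary is a consequence of Lemma~\ref{l:cherry-ordering}, and you have spelled out exactly how, by applying that lemma to each of $T$ and $T'$ under the common induced ordering $\sigma$ and then observing that equal vectors yield equal Hamming distances.
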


We next show that if $S$ is a common cherry-picking sequence of two phylogenetic $X$-trees $T$ and $T'$, and $\sigma$ is the ordering on $X$ induced by $S$, then the OLA and HOP distances between $T$ and $T'$ under $\sigma$ are the same and equal to $wt(S)$.

\begin{lemma}\label{l:need-a-label}
    Let $T$ and $T'$ be two phylogenetic $X$-trees with $|X|=n$. Suppose that $S=(x_1, x_2, \ldots, x_n)$ is a common cherry-picking sequence for $T$ and $T'$ of weight $wt(S)$. Let $\sigma$ be the ordering on $X$ that is induced by $S$.
    Then
    $$
        d_{\OLA}^{\sigma}(T,T') = 
        d_{\HOP}^{\sigma}(T,T') = wt(S).
    $$
    
    \label{lem:p2v-hop-equiv}
\end{lemma}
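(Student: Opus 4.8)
The plan is to prove the chain $d^\sigma_\OLA(T,T') = d^\sigma_\HOP(T,T') = wt(S)$ by induction on $n$, peeling off the first element $x_1$ of the common cherry-picking sequence $S$, since $\sigma(x_1)=n$ means $x_1$ is the last leaf attached in all three tree-growing processes and restricting to $X-\{x_1\}$ leaves a common cherry-picking sequence $S_1=(x_2,\ldots,x_n)$ of $T_1=T|(X-\{x_1\})$ and $T_1'=T'|(X-\{x_1\})$ with induced ordering $\sigma_1=\sigma_{-x_1}$. The inductive hypothesis gives $d^{\sigma_1}_\OLA(T_1,T_1')=d^{\sigma_1}_\HOP(T_1,T_1')=wt(S_1)$, and by definition $wt(S)=wt(S_1)+[\,x_1 \text{ disagrees}\,]$, so the whole lemma reduces to showing that each of the two distances increases by exactly $1$ when $x_1$ disagrees and by exactly $0$ when $x_1$ agrees.

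For OLA this is the cleaner half. By Lemma~\ref{l:cherry-ordering}, the OLA vectors of $T$ and $T'$ under $\sigma$ are obtained from those of $T_1$ and $T_1'$ under $\sigma_1$ by appending a single coordinate: $\mathbf{u}=[\mathbf{u}',\sigma(c)]$ and $\mathbf{u}'_{T'}=[\mathbf{u}'_{T_1'},\sigma(c')]$, where $\{c,x_1\}$ and $\{c',x_1\}$ are the cherries containing $x_1$ in $T$ and $T'$. Hence $d^\sigma_\OLA(T,T')=d^{\sigma_1}_\OLA(T_1,T_1') + [\sigma(c)\ne\sigma(c')]$, and $\sigma(c)=\sigma(c')$ iff $c=c'$ iff $x_1$ agrees; combined with the inductive hypothesis this gives $d^\sigma_\OLA(T,T')=wt(S)$. (One should double-check the base cases $n\le 2$, where both trees are isomorphic and both sides are $0$.)

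For HOP I would argue directly from the combinatorics of the HOP labeling and the LCS-based distance. The key structural facts to establish are: (i) under the cherry-ordering $\sigma$, attaching $x_1=\underline{n}$ to create the cherry $\{c,x_1\}$ means $P_n$ is the empty path (always true) but more usefully that the path $P_{\sigma(c)}$ — equivalently the block $\mathbf{v}_i$ of $T$'s HOP vector containing the coordinate that records the attachment point of $x_1$ — and its counterpart in $T'$ differ by whether the relevant new internal-vertex label appears, and (ii) by a subtree-style argument à la Lemma~\ref{lem:subtree-labeling1}, the HOP labels of all vertices surviving from $T_1$ are unchanged in $T$, so the LCS sum over all blocks other than the one affected by $x_1$ is exactly $\text{Sim}^{\sigma_1}_\HOP(T_1,T_1')$. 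Then I would show that in the one affected block the LCS length drops by $1$ precisely when $c\ne c'$ (i.e. $x_1$ disagrees) and is unchanged when $c=c'$, so that $d^\sigma_\HOP(T,T') = n - \text{Sim}^\sigma_\HOP(T,T') = (n-1) - \text{Sim}^{\sigma_1}_\HOP(T_1,T_1') + [c\ne c'] = d^{\sigma_1}_\HOP(T_1,T_1') + [x_1 \text{ disagrees}] = wt(S)$.

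The main obstacle will be step (i)–(ii) for HOP: pinning down exactly which block $\mathbf{v}_i$ of the HOP vector is perturbed by the insertion of the new internal vertex (the parent of $x_1$, labeled $n$) and its ancestor relabelings, and verifying that this perturbation changes the corresponding LCS length by at most — and, when $c$ and $c'$ differ, exactly — one. Making this precise will likely require the same kind of careful bookkeeping of the $m(v)$ values and path decomposition that appears in Algorithm~\ref{alg:hop}, together with an auxiliary claim analogous to Lemma~\ref{lem:subtree-labeling1} for the common pendant structure surrounding the cherry; once that claim is in place the LCS arithmetic is routine. A clean alternative, if the direct HOP computation proves unwieldy, is to prove $d^\sigma_\HOP(T,T')\le wt(S)$ via the acyclic-agreement-forest construction of Lemma~\ref{lem:inequality2} specialized to the cherry-ordering (which yields a particularly structured forest of size $wt(S)$) and $d^\sigma_\HOP(T,T')\ge wt(S)$ by exhibiting, from the HOP LCS, a common cherry-picking sequence of weight at most $d^\sigma_\HOP(T,T')$; but I expect the inductive peeling argument above to be the shortest route.
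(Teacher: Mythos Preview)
Your approach is exactly the paper's: induct on $n$, peel off $x_1$ (rank $n$), and track how each distance changes; the OLA half is verbatim what the paper does, and for HOP the paper carries out precisely the block-by-block LCS accounting you outline.

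Two small corrections will make the HOP half go through cleanly and dissolve the ``obstacle'' you anticipate. First, there are no ``ancestor relabelings'': because $\sigma(x_1)=n$ is maximal, every $m(v)$ for a surviving vertex is unchanged, so the HOP labeling of $T$ is obtained from that of $T_1$ simply by giving the new parent of $x_1$ the label $n$ (this is exactly what your appeal to Lemma~\ref{lem:subtree-labeling1} gives); consequently the only change to the HOP vector of $T$ is to insert the symbol $n$ at the end of block $\mathbf{v}_{\sigma(c)}$ and append $\underline{n}$, and likewise for $T'$ with $\sigma(c')$. Second, no LCS ever \emph{drops}: if $c=c'$ the single affected block's LCS grows by $1$ (so $\text{Sim}$ increases by $1$), while if $c\ne c'$ the symbol $n$ sits in different blocks in the two vectors and no LCS changes at all. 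Either way your displayed arithmetic $d^\sigma_\HOP(T,T')=(n-1)-\text{Sim}^{\sigma_1}_\HOP(T_1,T_1')+[c\ne c']$ is exactly right, and the alternative route via agreement forests is unnecessary.
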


\begin{proof}
The proof is by induction on $n$. If $n \in \{1,2\}$, then $T$ and $T'$ are isomorphic, so $d_{\OLA}^{\sigma}(T,T') = d_{\HOP}^{\sigma}(T,T') = 0$. Further, $wt(S)=0$. 

Now assume that $n > 2$ and that the result holds for all pairs of phylogenetic $X$-trees with a common cherry-picking sequence that have at most $n-1$ leaves. 

Let $T$ and $T'$ be two phylogenetic $X$-trees, where $|X|=n$, that have a common cherry-picking sequence $S = (x_1,x_2, \ldots, x_n)$ and let $\sigma$ be the ordering on $X$ that is induced by $S$.
By assumption, $\sigma(x_1)=n$. Let $T_1 = T|(X- \{x_1\})$, and let $T_1' = T'|(X - \{x_1\})$. Let $\mathbf{u}_1=[u_1,u_2,\ldots, u_{n-1}]$ and $\mathbf{u}'_1=[u'_1,u'_2, \ldots, u'_{n-1}]$ be the OLA vectors for $T_1$ and $T_1'$ under $\sigma_{- x_1}$, and let $\mathbf{v}^1=[1,\mathbf{v}^1_1, \underline{1}, \mathbf{v}^1_2, \underline{v_2},\ldots, \mathbf{v}^1_{n-2}, \underline{v_{n-2}},\underline{v_{n-1}}]$ and $\mathbf{\widetilde{v}}^1=[1,\mathbf{\widetilde{v}}^1_1, \underline{1}, \mathbf{\widetilde{v}}^1_2, \underline{\widetilde{v}_2},\ldots, \mathbf{\widetilde{v}}^1_{n-2}, \underline{\widetilde{v}_{n-2}},\underline{\widetilde{v}_{n-1}}]$ be the HOP vectors for $T_1$ and $T_1'$ under $\sigma_{- x_1}$. Let $S_1 = (x_2,x_3, \ldots, x_n)$ be the common cherry-picking sequence for $T_1$ and $T_1'$ obtained from $S$ by omitting $x_1$. 
By the induction assumption, 
\begin{equation} \label{OLA-HOP-S-IA}
    d_{\OLA}^{\sigma_{-x_1}}(T_1,T'_1) = d_{\HOP}^{\sigma_{- x_1}}(T_1,T'_1) = wt(S_1). 
\end{equation}

Now, by assumption $x_1$ is in a cherry, $\{c,x_1\}$ say, in $T$, and in a cherry, $\{c', x_1\}$ say, in $T'$. As in the proof of Lemma~\ref{l:cherry-ordering}, this implies that the OLA vectors, say $\mathbf{u}$ and $\mathbf{u}'$ for $T$ and $T'$ under $\sigma$ can be obtained from $\mathbf{u}_1$ and $\mathbf{u}_1'$ by setting $\mathbf{u} = [u_1, u_2,\ldots, u_{n-1},\sigma(c)]$ and $\mathbf{u}' = [u_1',u_2',\ldots, u'_{n-1},\sigma(c')]$. Further, the HOP labeling of $T$ and $T'$ is obtained from the HOP labeling for $T_1$ and $T_1'$ by assigning label $n$ to the vertex that is introduced when adjoining leaf $x_1$ to $T_1$ (resp. $T_1'$), while keeping all other labels unchanged. Referring to Algorithm~\ref{alg:hop} (Lines~15--17) this implies that for $T$, the vertex labeled $n$ is the last element in $\mathbf{v}(P_{\sigma(c)})$, and for $T'$, it is the last element in $\mathbf{v}(P_{\sigma(c')})$. Thus, the HOP vectors, say $\mathbf{v}$ and $\mathbf{v}'$, for $T$ and $T'$ under $\sigma$, can be obtained from $\mathbf{v}^1$ and $\mathbf{\widetilde{v}}^1$ by inserting (the first occurrence of) element $n$ immediately before the element $\sigma(c)$ (resp. $\sigma(c')$) and appending (the second occurrence of) element $n$ at the end of the resulting vectors.

We now distinguish two cases:
\begin{enumerate}[label={\upshape(\roman*)}]
    \item If $c = c'$, $\{c, x_1\} = \{c',x_1\}$ is a common cherry of $T$ and $T'$, and thus, $wt(S)=wt(S_1)$. Moreover, $d_{\OLA}^{\sigma_{-x_1}}(T_1,T'_1) = d_{\OLA}^\sigma (T,T')$. Furthermore, we have $|\text{LCS}(\mathbf{v}_{\sigma(c)},\mathbf{v}'_{\sigma(c)})| = |\text{LCS}(\mathbf{v}^1_{\sigma_{-x_1}(c)},\mathbf{\widetilde{v}}^1_{\sigma_{- x_1}(c)})| + 1$ and $| \text{LCS}(\mathbf{v}_i, \mathbf{v}'_i)| = | \text{LCS}(\mathbf{v}^1_i, \mathbf{\widetilde{v}}^1_i)|$ for all $i \in \{1, 2,\ldots, n-1\} - \{\sigma(c)\}$. In other words, $\text{Sim}^\sigma_{\HOP}(T,T') = \text{Sim}^{\sigma_{- x_1}}_{\HOP}(T_1,T'_1)+1$. Thus,
    \begin{align*}
        d_{\HOP}^\sigma (T, T') &= n - \text{Sim}^\sigma_{\HOP}(T,T') 
        = n - 1 - \text{Sim}^{\sigma_{- x_1}}_{\HOP}(T_1,T'_1) = d_{\HOP}^{\sigma_{- x_1}}(T_1, T_1'). \\
    \end{align*}
    The statement now follows immediately from Equation~\eqref{OLA-HOP-S-IA}.

    \item If $c \neq c'$, then $\{c,x_1\}$ is a cherry in $T$, while $\{c',x_1\}$ is a cherry in $T'$. For the weights of the common cherry-picking sequence $S$ of $T$ and $T'$ (resp. $S_1$ of $T_1$ and $T_1'$) this implies that $wt(S) = wt(S_1) + 1$. By construction of the OLA vector (see Line 12 of Algorithm~\ref{alg:ola}), $c \neq c'$ implies the last positions of the vector differ.  We have $d_{\OLA}^\sigma(T,T') = d_{\OLA}^{\sigma_{- x_1}}(T_1, T'_1)+1$. Finally, for HOP, since the (first occurrence of) element $n$ is inserted immediately before $\sigma(c)$ in $\mathbf{v}^1$ to obtain $\mathbf{v}$, while it is inserted immediately before $\sigma(c')$ in $\mathbf{\widetilde{v}}^1$ to obtain $\mathbf{v}'$ (and since $c \neq c'$, we have $\sigma(c) \neq \sigma(c')$), we have $| \text{LCS}(\mathbf{v}_i, \mathbf{v}'_i)| = | \text{LCS}(\mathbf{v}^1_i, \mathbf{\widetilde{v}}^1_i)|$ for all $i \in \{1,2, \ldots, n-1\}$, i.e., no LCS changes its length. This implies that $\text{Sim}^\sigma_{\HOP}(T,T') = \text{Sim}^{\sigma_{- x_1}}_{\HOP}(T_1,T'_1)$ and 
    \begin{align*}
        d_{\HOP}^\sigma (T, T') &= n - \text{Sim}^\sigma_{\HOP}(T,T') 
        = n - \text{Sim}^{\sigma_{- x_1}}_{\HOP}(T_1,T'_1) 
        = n - 1 - \text{Sim}^{\sigma_{- x_1}}_{\HOP}(T_1,T'_1) + 1 \\
        &= d_{\HOP}^{\sigma_{- x_1}}(T_1, T_1') + 1. 
    \end{align*}
    The statement now again immediately follows from Equation~\eqref{OLA-HOP-S-IA}. This completes the proof.
\end{enumerate}
 \end{proof}

With Lemmas~\ref{l:cherry-ordering} and~\ref{l:need-a-label}, we can now establish Theorem~\ref{thm:cherry-pick-equiv}:

\begin{proof}[Proof of Theorem~\ref{thm:cherry-pick-equiv}] 
The first equality follows immediately from Corollary~\ref{cor:ola-hop-equiv}, and the second and third equality from Lemma~\ref{lem:p2v-hop-equiv}:
    $$
    d_{\PV}^{\sigma}(T,T') = 
    d_{\OLA}^{\sigma}(T,T') = 
    d_{\HOP}^{\sigma}(T,T') = wt(S).
    $$    
\end{proof}

Let $T$ and $T'$ be two phylogenetic $X$-trees that have a common cherry-picking sequence. If $\sigma$ is an ordering on $X$ that is induced by a cherry-picking sequence that is common to $T$ and $T'$, then we refer to $\sigma$ as a {\em cherry-picking sequence (CPS) ordering} on $X$. Now for each $\Theta\in\{\HOP,\OLA,\PV\}$, define
    $$
        d_\Theta^{CPS}(T,T') = \min_{\sigma} d_\Theta^{\sigma}(T,T'), 
    $$
where the minimum is taken over all CPS orderings $\sigma$ on $X$.
The next corollary follows from Theorems~\ref{t:CPS} and~\ref{thm:cherry-pick-equiv}.

\begin{cor}
Let $T$ and $T'$ be two phylogenetic $X$-trees with a common cherry-picking sequence. Then 
    $$
    d_{\OLA}^{CPS}(T,T') = 
    d_{\HOP}^{CPS}(T,T') =
    d_{\PV}^{CPS}(T,T') = h_t(T,T').
    $$
    \label{temporal-tree-child-hybrid}
\end{cor}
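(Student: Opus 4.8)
The plan is to deduce the corollary by chaining Theorem~\ref{thm:cherry-pick-equiv} with Theorem~\ref{t:CPS}, after unravelling the definition of $d_\Theta^{CPS}$. First I would record the (elementary) observation that the assignment $S\mapsto\sigma_S$, sending a common cherry-picking sequence $S=(x_1,\ldots,x_n)$ of $T$ and $T'$ to the ordering it induces (with $\sigma_S(x_i)=n-i+1$), is a bijection between the set of common cherry-picking sequences of $T$ and $T'$ and the set of CPS orderings on $X$; the inverse recovers $x_i=\sigma_S^{-1}(n-i+1)$. The hypothesis that $T$ and $T'$ admit a common cherry-picking sequence guarantees that both of these index sets are non-empty, so each $d_\Theta^{CPS}(T,T')$ is well defined.

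Next, for a fixed $\Theta\in\{\HOP,\OLA,\PV\}$, I would invoke Theorem~\ref{thm:cherry-pick-equiv} once for each common cherry-picking sequence $S$: it gives $d_\Theta^{\sigma_S}(T,T')=wt(S)$. Combining this with the bijection above,
\[
d_\Theta^{CPS}(T,T')=\min_{\sigma\ \mathrm{CPS}}d_\Theta^{\sigma}(T,T')=\min_{S}d_\Theta^{\sigma_S}(T,T')=\min_{S}wt(S)=s(T,T'),
\]
where both minima over $S$ range over all common cherry-picking sequences of $T$ and $T'$ and the final equality is the definition of $s(T,T')$. Since the right-hand side does not depend on $\Theta$, this simultaneously shows $d_{\OLA}^{CPS}(T,T')=d_{\HOP}^{CPS}(T,T')=d_{\PV}^{CPS}(T,T')$ and identifies the common value as $s(T,T')$.

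Finally, I would apply Theorem~\ref{t:CPS}, which states $h_t(T,T')=s(T,T')$, to rewrite the common value as $h_t(T,T')$, completing the proof. I do not expect a genuine obstacle here: the argument is a bookkeeping exercise layered on top of the two quoted theorems, and the only step deserving explicit mention is the bijection between common cherry-picking sequences and CPS orderings, which ensures that the minimum defining each $d_\Theta^{CPS}$ and the minimum defining $s(T,T')$ are taken over exactly the same family.
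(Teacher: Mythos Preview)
Your proposal is correct and matches the paper's own reasoning: the paper states only that the corollary follows from Theorems~\ref{t:CPS} and~\ref{thm:cherry-pick-equiv}, and your argument is precisely the natural unpacking of that claim. The explicit bijection $S\mapsto\sigma_S$ you spell out is the one piece of bookkeeping the paper leaves implicit, and it is handled correctly.
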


It is worth nothing that for two phylogenetic trees $T$ and $T'$ and any $\Theta\in\{\HOP,\OLA,\PV\}$, the two measures $d_\Theta^*(T,T')$ and $d_\Theta^{CPS}(T,T')$ are not necessarily equal. To see this, consider the two trees that are shown in Figure~\ref{fig:CPS}. Their only two common cherry-picking sequences are $S=(x_3,x_4,x_5,x_1,x_2)$ and $S'=(x_3,x_4,x_5,x_2,x_1)$ with $wt(S)=wt(S')=3$. However, the ordering $\sigma$ on $X$ with $\sigma(x_3)<\sigma(x_4)<\sigma(x_5)<\sigma(x_1)<\sigma(x_2)$, is not a CPS ordering on $X$. It is straightforward to check that $d_\Theta^*(T,T')\leq d_\Theta^\sigma(T,T')=2$.

\begin{figure}[t!]
\begin{center}
    \begin{tikzpicture}[scale=.8]

     \node[fill=black,circle,inner sep=1pt, label=below:$x_1$]   at (0.5,0.5) {};
     \node[fill=black,circle,inner sep=1pt, label=below:$x_3$]   at (1.5,0.5) {};
     \node[fill=black,circle,inner sep=1pt, label=below:$x_4$]   at (2,1) {};
     \node[fill=black,circle,inner sep=1pt, label=below:$x_5$]   at (2.5,1.5) {};
     \node[fill=black,circle,inner sep=1pt, label=below:$x_2$]   at (3,2) {};
     \node[fill=black,circle,inner sep=1pt]   at (1,1) {};
     \node[fill=black,circle,inner sep=1pt]   at (1.5,1.5) {};
     \node[fill=black,circle,inner sep=1pt]   at (2,2) {};
     \node[fill=black,circle,inner sep=1pt]   at (2.5,2.5) {};
     \node[fill=black,circle,inner sep=1pt, label=above:$\rho$]   at (2.5,3) {};
     \draw(2.5,2.5)--(2.5,3);
     \draw(2.5,2.5)--(0.5,0.5);
     \draw(2.5,2.5)--(3,2);
     \draw(2,2)--(2.5,1.5);
     \draw(1.5,1.5)--(2,1);
     \draw(1,1)--(1.5,0.5);
     \node at (1,3) {\large $T$};

     \node[fill=black,circle,inner sep=1pt, label=below:$x_2$]   at (4.5,0.5) {};
     \node[fill=black,circle,inner sep=1pt, label=below:$x_3$]   at (5.5,0.5) {};
     \node[fill=black,circle,inner sep=1pt, label=below:$x_4$]   at (6,1) {};
     \node[fill=black,circle,inner sep=1pt, label=below:$x_5$]   at (6.5,1.5) {};
     \node[fill=black,circle,inner sep=1pt, label=below:$x_1$]   at (7,2) {};
     \node[fill=black,circle,inner sep=1pt]   at (5,1) {};
     \node[fill=black,circle,inner sep=1pt]   at (5.5,1.5) {};
     \node[fill=black,circle,inner sep=1pt]   at (6,2) {};
     \node[fill=black,circle,inner sep=1pt]   at (6.5,2.5) {};
     \node[fill=black,circle,inner sep=1pt, label=above:$\rho$]   at (6.5,3) {};
     \draw(6.5,2.5)--(6.5,3);
     \draw(6.5,2.5)--(4.5,0.5);
     \draw(6.5,2.5)--(7,2);
     \draw(6,2)--(6.5,1.5);
     \draw(5.5,1.5)--(6,1);
     \draw(5,1)--(5.5,0.5);
     \node at (5,3) {\large $T'$};

    \end{tikzpicture}
\end{center}
\caption{Two phylogenetic $X$-trees $T$ and $T'$ with precisely two common cherry-picking sequences, $S = (x_3, x_4, x_5, x_1, x_2)$ and $S' = (x_3, x_4, x_5, x_2, x_1)$.}
\label{fig:CPS}
\end{figure}
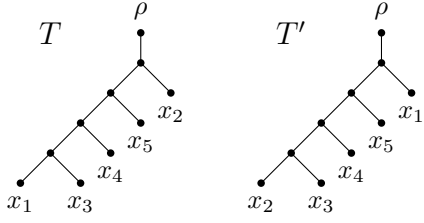

\section{Conclusion and open problems}
\label{openProblems}

Our paper explores the relationship between three novel dissimilarity measures on phylogenetic trees (each of which, given a fixed ordering of the leaf set, takes polynomial time to compute) and the popular, but computationally hard, rSPR distance.  While there is little relationship between these measures and rSPR, we show that direct relationships exist when we compare these measures with the hybrid number and temporal hybrid number. We end the paper with some open problems concerning these intriguing classes of measure.

\subsection*{Capturing the rSPR distance}
The first problem is to develop an order-dependent measure that is efficient to compute but, when minimized over all orders, is equivalent to the rSPR distance. For the order-dependent measures developed thus far, the ordering of the leaves introduces additional structure that allows for polynomial running time to compute the distances.  For example, the HOP ordering provides a clever way to decompose the tree into paths, and for each leaf, the path between it and the root can be viewed as a backbone where leaves later in the ordering are attached. This decomposition captures the agreement forests induced by rSPR moves, albeit well-behaved rSPR moves that do not move a component with a lower minimal element to one with a higher minimal element.  Theorem~\ref{thm:hop-equals-hybrid} shows that this decomposition precludes cycles in the agreement forests and equates HOP to the hybrid number.  Is it possible to keep the efficient running time that the order provides but capture more general rSPR moves?

\begin{problem}  Is there an order-dependent measure $\varphi$ that captures rSPR? That is, for two phylogenetic $X$-trees $T$ and $T'$, if we minimize across all orderings of $X$
$$
d_{\varphi}^*(T,T') = d_\rSPR(T,T').
$$
but, for fixed ordering $\sigma$ on $X$, the value $d^\sigma_{\varphi}(T, T')$ can be computed in polynomial time.
\end{problem}

\subsection*{Sharpening the upper bound for OLA}  
In Theorem~\ref{thm:OLA-20SPR} we proved that for all phylogenetic $X$-trees $T$ and $T'$, there exists an ordering $\sigma$ on $X$ for which $d^\sigma_\OLA(T,T')$ is bounded from above by $28 \cdot d_\rSPR(T,T')$.
Can this upper bound be lowered?  

\begin{problem} Does there exist a constant $c<28$ such that for any two phylogenetic $X$-trees $T$ and $T'$, there exists an ordering $\sigma$ such that
$$
d_\OLA^\sigma(T,T') \leq c\cdot d_\rSPR(T,T')?
$$
\end{problem}

\subsection*{Relating the three order-dependent measures}
In this paper, we have focused on relating the three order-dependent measures OLA, P2V, and HOP to well-established concepts for phylogenetic $X$-trees, including the rSPR distance, the hybrid number, and cherry-picking sequences. However, an interesting future direction is to relate the three order-dependent measures to each other. For instance, one could ask how the minimal measures $d^\ast_\OLA(T,T'), d^\ast_\PV(T,T')$, and $d^\ast_\HOP(T,T')$ are related for two given phylogenetic $X$-trees $T$ and $T'$. Is one always smaller than the other? What can be said about the orderings $\sigma$ that induce the minimum? Are they the same for $d^\ast_\OLA(T,T'), d^\ast_\PV(T,T')$, and $d^\ast_\HOP(T,T')$?

To give a flavor of these types of questions, we consider the relationship between the OLA and HOP measures. In the following, let $T$ be a phylogenetic $X$-tree with at least two leaves. We refer to $T$ as a \emph{caterpillar} if we can order $X$, say $x_1,x_2,\ldots,x_n$, such that  $x_1$ and $x_2$ have the same parent and, for each $i\in\{3,4,\ldots,n\}$, the parent of $x_i$ is the parent of $x_{i-1}$. If $T$ is a caterpillar, then we denote it by $(x_1,x_2,x_3,\ldots,x_n)$ or, equivalently,  $(x_2,x_1,x_3,\ldots,x_n)$. As an example, Figure~\ref{fig:CPS} shows the two caterpillars $T = (x_1, x_3, x_4, x_5, x_2)$ and $T' = (x_2, x_3, x_4, x_5, x_1)$.

First consider the two caterpillars 
$$S = (x_1, x_2, \ldots, x_m, y_1, y_2, \ldots, y_m)
\mbox{ and } 
S' = (y_1, y_2, \ldots, y_{m}, x_1, x_2, \ldots, x_{m})$$ on $2m$ leaves as depicted in Figure~\ref{fig:strict-hybrid}. 
Here, we have $d^*_\OLA(S,S') = d_{\rSPR}(S,S') = 2$ and $d^*_\HOP(S,S') = m$. For OLA, an ordering $\sigma$ on $X$ that realizes $d^*_\OLA(S,S')$ is, for instance, given by
\[ \sigma(x_1) < \sigma(x_2) < \cdots < \sigma(x_{m}) < \sigma(y_1) < \sigma(y_2) < \cdots < \sigma(y_{m}).\]
The same ordering $\sigma$ also realizes $d^*_\HOP(S,S')$.
In summary, the ordering $\sigma$ given above realizes both  $d^*_\OLA(S,S')$ and $ d^*_\HOP(S,S')$. Furthermore, we have $d^*_\OLA(S,S') < d^*_\HOP(S,S')$. In particular, $d^*_\OLA(S,S') = 2$ is a constant, whereas $d^*_\HOP(S,S') = m$ depends on $m$.

Next, consider the two caterpillars 
$$T = (x_1, x_2, y_1, x_3, y_2, \ldots, y_{m-3}, x_{m-1}, y_{m-2}, x_m, y_{m-1})$$
and
$$T' = (x_1, x_m, y_1, x_{m-1}, y_2, \ldots, y_{m-3}, x_3, y_{m-2}, x_2, y_{m-1})$$ on $2m-1$ leaves. Here, we have $d^*_\HOP(T,T')=h(T,T') = m-1$.
An ordering $\sigma$ on $X$ realizing $d^*_\HOP(T,T')$ is  given by \[ \sigma(x_1) < \sigma(x_2) < \cdots < \sigma(x_m) < \sigma(y_1) < \cdots < \sigma(y_{m-1}). \]
Under this ordering $\sigma$, we have $d^\sigma_\OLA(T,T') = 2m-3 > h(T,T')$. By Theorem~\ref{thm:hybrid-upper-bound}, $\sigma$ clearly does not realize $d^\ast_\OLA(T,T')$, indicating that an ordering that realizes the minimum for one of the three-order dependent measures (here, $d^*_\HOP(T,T')$) does not necessarily realize the minimum for the others (here, $d^\ast_\OLA(T,T')$).
 
We thus end by posing the following broad open problem.
\begin{problem} 
Given two rooted phylogenetic $X$-trees $T$ and $T'$ and an ordering $\sigma$ on $X$, what can be said about the relationship of $d^\sigma_\OLA(T,T'), d^\sigma_\PV(T,T')$, and $d^\sigma_\HOP(T,T')$? Further, what can be said about the relationship of $d^\ast_\OLA(T,T'), d^\ast_\PV(T,T')$, and $d^\ast_\HOP(T,T')$ and the orderings realizing them?
\end{problem}

\section{Acknowledgments}

This material is based upon work supported by the National Science Foundation under Grant No.~DMS-1929284 while the authors were in residence at the Institute for Computational and Experimental Research in Mathematics in Providence, RI, during the Theory, Methods, and Applications of Quantitative Phylogenomics semester program. The first and third authors thank the New Zealand Marsden Fund for their financial support.

\bibliographystyle{acm}
\bibliography{refs}

\begin{thebibliography}{10}

\bibitem{baroni2005bounding}
{\sc Baroni, M., Gr{\"u}newald, S., Moulton, V., and Semple, C.}
\newblock Bounding the number of hybridisation events for a consistent
  evolutionary history.
\newblock {\em Journal of Mathematical Biology 51\/} (2005), 171--182.

\bibitem{bordewichSemple2005}
{\sc Bordewich, M., and Semple, C.}
\newblock On the computational complexity of the rooted subtree prune and
  regraft distance.
\newblock {\em Annals of Combinatorics 8\/} (2005), 409--423.

\bibitem{bordewich2007computing}
{\sc Bordewich, M., and Semple, C.}
\newblock Computing the minimum number of hybridization events for a consistent
  evolutionary history.
\newblock {\em Discrete Applied Mathematics 155\/} (2007), 914--928.

\bibitem{hop}
{\sc Chauve, C., Colijn, C., and Zhang, L.}
\newblock A vector representation for phylogenetic trees.
\newblock {\em Philosophical Transactions of the Royal Society B: Biological
  Sciences 380}, 20240226 (2025).

\bibitem{humphries2013cherry}
{\sc Humphries, P.~J., Linz, S., and Semple, C.}
\newblock Cherry picking: A characterization of the temporal hybridization
  number for a set of phylogenies.
\newblock {\em Bulletin of Mathematical Biology 75\/} (2013), 1879--1890.

\bibitem{humphries2009note}
{\sc Humphries, P.~J., and Semple, C.}
\newblock Note on the hybridization number and subtree distance in
  phylogenetics.
\newblock {\em Applied Mathematics Letters 22\/} (2009), 611--615.

\bibitem{kuhner2015practical}
{\sc Kuhner, M.~K., and Yamato, J.}
\newblock Practical performance of tree comparison metrics.
\newblock {\em Systematic Biology 64\/} (2015), 205--214.

\bibitem{penn2024phylo2vec}
{\sc Penn, M.~J., Scheidwasser, N., Khurana, M.~P., Duch{\^e}ne, D.~A.,
  Donnelly, C.~A., and Bhatt, S.}
\newblock \mbox{Phylo2Vec}: A vector representation for binary trees.
\newblock {\em Systematic Biology 74\/} (2025), 250--266.

\bibitem{penn2023leaping}
{\sc Penn, M.~J., Scheidwasser, N., Penn, J., Donnelly, C.~A., Duch{\^e}ne,
  D.~A., and Bhatt, S.}
\newblock Leaping through tree space: Continuous phylogenetic inference for
  rooted and unrooted trees.
\newblock {\em Genome Biology and Evolution 15\/} (2023), evad213.

\bibitem{ola}
{\sc Richman, H., Zhang, C., and Matsen, F.~A.}
\newblock Vector encoding of phylogenetic trees by ordered leaf attachment.
\newblock {\em arXiv:2503.10169\/} (2025).

\bibitem{scheidwasser2025phylo2vec}
{\sc Scheidwasser, N., Nag, A., Penn, M.~J., Jakob, A., Andersen, F.~M.,
  Khurana, M.~P., Setiawan, L., Gordon, M., Duch{\^e}ne, D.~A., and Bhatt, S.}
\newblock phylo2vec: {A} library for vector-based phylogenetic tree
  manipulation.
\newblock {\em arXiv:2506.19490\/} (2025).

\bibitem{sempleSteel}
{\sc Semple, C., and Steel, M.}
\newblock {\em Phylogenetics}.
\newblock Oxford University Press, 2003.

\bibitem{stjohn2017}
{\sc St.~John, K.}
\newblock The shape of phylogenetic treespace.
\newblock {\em Systematic Biology 66\/} (2017), e83--e94.

\bibitem{zhang2023fast}
{\sc Zhang, L., Abhari, N., Colijn, C., and Wu, Y.}
\newblock A fast and scalable method for inferring phylogenetic networks from
  trees by aligning lineage taxon strings.
\newblock {\em Genome Research 33\/} (2023), 1053--1060.

\end{thebibliography}

\end{document}